\theoremstyle{definition}
\newtheorem{prop}{Proposition}[section]
\newtheorem{thm}[prop]{Theorem}
\newtheorem{cor}[prop]{Corollary}
\newtheorem{lem}[prop]{Lemma}
\newtheorem{conj}[prop]{Conjecture}
\newtheorem{exa}{Example}[section]
\newtheorem{rem}{Remark}
\DeclareMathOperator{\lcm}{lcm}
\begin{document}

\newcommand{\vA}{{\bf A}}
\newcommand{\vAtilde}{\widetilde{\bf A}}

\newcommand{\vB}{{\bf B}}
\newcommand{\vBtilde}{\widetilde{\bf B}}

\newcommand{\vC}{{\bf C}}
\newcommand{\vD}{{\bf D}}
\newcommand{\vH}{{\bf H}}
\newcommand{\vI}{{\bf I}}

\newcommand{\vY}{{\bf Y}}
\newcommand{\vZ}{{\bf Z}}

\newcommand{\vJ}{{\bf J}}

\newcommand{\vM}{{\bf M}}
\newcommand{\vN}{{\bf N}}
\newcommand{\vU}{{\bf U}}
\newcommand{\vV}{{\bf V}}
\newcommand{\vT}{{\bf T}}
\newcommand{\vR}{{\bf R}}
\newcommand{\vS}{{\bf S}}

\newcommand{\va}{{\bf a}}
\newcommand{\vb}{{\bf b}}
\newcommand{\vc}{{\bf c}}

\newcommand{\ve}{{\bf e}}
\newcommand{\vh}{{\bf h}}
\newcommand{\vp}{{\bf p}}

\newcommand{\vu}{{\bf u}}
\newcommand{\vv}{{\bf v}}
\newcommand{\vw}{{\bf w}}
\newcommand{\vx}{{\bf x}}
\newcommand{\vhx}{{\widehat{\bf x}}}
\newcommand{\vtx}{{\widetilde{\bf x}}}
\newcommand{\vy}{{\bf y}}
\newcommand{\vz}{{\bf z}}

\newcommand{\vj}{{\bf j}}
\newcommand{\vzero}{{\bf 0}}
\newcommand{\vone}{{\bf 1}}
\newcommand{\vbeta}{{\boldsymbol \beta}}
\newcommand{\vchi}{{\boldsymbol \chi}}

\newcommand{\tA}{\textrm A}
\newcommand{\tB}{\textrm B}
\newcommand{\A}{\mathcal A}
\newcommand{\B}{\mathcal B}
\newcommand{\C}{\mathcal C}
\newcommand{\D}{\mathcal D}
\newcommand{\E}{\mathcal E}
\newcommand{\F}{\mathcal F}
\newcommand{\G}{\mathcal G}
\newcommand{\M}{\mathcal M}
\newcommand{\HH}{\mathcal H}
\newcommand{\PP}{\mathcal P}

\newcommand{\Q}{\mathcal Q}
\newcommand{\Qb}{\bar{\mathcal Q}}
\newcommand{\Db}{{\bar{\Delta}}}

\newcommand{\pQ}{{\bf p}\mathcal Q}
\newcommand{\pQb}{{\bf p}\bar{\mathcal Q}}

\newcommand{\R}{\mathcal R}
\newcommand{\SSS}{\mathcal S}
\newcommand{\U}{\mathcal U}
\newcommand{\V}{\mathcal V}
\newcommand{\Y}{\mathcal Y}
\newcommand{\Z}{\mathcal Z}

\newcommand{\Pg}{{{\mathcal P}_{\rm gram}}}
\newcommand{\Pgint}{{{\mathcal P}^\circ_{\rm gram}}}
\newcommand{\Pgrc}{{{\mathcal P}_{\rm GRC}}}
\newcommand{\Pgrcint}{{{\mathcal P}^\circ_{\rm GRC}}}
\newcommand{\Pint}{{{\mathcal P}^\circ}}
\newcommand{\Ag}{{\bf A}_{\rm gram}}

\newcommand{\CC}{\mathbb C} 
\newcommand{\RR}{\mathbb R}
\newcommand{\ZZ}{\mathbb Z}
\newcommand{\FF}{\mathbb F}
\newcommand{\KK}{\mathbb K}

\newcommand{\Fnd}{\FF_q^{n^{\otimes d}}}
\newcommand{\Knd}{\KK^{n^{\otimes d}}}

\newcommand{\ceiling}[1]{\left\lceil{#1}\right\rceil}
\newcommand{\floor}[1]{\left\lfloor{#1}\right\rfloor}
\newcommand{\bbracket}[1]{\left\llbracket{#1}\right\rrbracket}

\newcommand{\inprod}[1]{\left\langle{#1}\right \rangle}


\newcommand{\beas}{\begin{eqnarray*}} 
\newcommand{\eeas}{\end{eqnarray*}} 

\newcommand{\bm}[1]{{\mbox{\boldmath $#1$}}} 

\newcommand{\sizeof}[1]{\left\lvert{#1}\right\rvert}
\newcommand{\wt}{{\rm wt}} 
\newcommand{\supp}{{\rm supp}} 
\newcommand{\dg}{d_{\rm gram}} 
\newcommand{\da}{d_{\rm asym}} 
\newcommand{\dist}{{\rm dist}} 
\newcommand{\ssyn}{s_{\rm syn}}
\newcommand{\sseq}{s_{\rm seq}}
\newcommand{\nullplus}{{\rm Null}_{>\vzero}}

\newcommand{\tworow}[2]{\genfrac{}{}{0pt}{}{#1}{#2}}
\newcommand{\qbinom}[2]{\left[ {#1}\atop{#2}\right]_q}

\newcommand{\Lovasz}{Lov\'{a}sz }
\newcommand{\etal}{\emph{et al.}}

\newcommand{\todo}{{\color{red} (TODO) }}


\title{Codes for DNA Sequence Profiles}

\author{Han Mao Kiah, Gregory J.~Puleo, and Olgica Milenkovic\\
\thanks{This work was supported in part by the NSF STC Class 2010 CCF 0939370 grant and the Strategic Research Initiative (SRI) Grant conferred by the University of Illinois, Urbana-Champaign. Research of the second author was supported by the IC Postdoctoral Research Fellowship. This work has been submitted in part to ISIT 2015 and will appear in part at ITW 2015.}
Department of Electrical and Computer Engineering, University of Illinois, Urbana-Champaign
}

\maketitle

\begin{abstract}
We consider the problem of storing and retrieving information from synthetic DNA media.
The mathematical basis of the problem is the construction and design of sequences that may be discriminated 
based on their collection of substrings observed through a noisy channel.
This problem of reconstructing sequences from traces was first investigated in the noiseless setting under the name of ``Markov type'' analysis. Here, we explain the connection between the reconstruction problem and the problem of DNA synthesis and sequencing, and introduce the notion of a DNA storage channel. We analyze the number of sequence equivalence classes under the channel mapping and propose new asymmetric coding techniques to combat the effects of synthesis and sequencing noise. In our analysis, we make use of restricted de Bruijn graphs and Ehrhart theory for rational polytopes.
\end{abstract}

%


\section{Introduction}

Reconstructing sequences based on partial information about their subsequences, substrings, or composition is an important problem arising in channel synchronization systems, phylogenomics, genomics, and proteomic sequencing~\cite{kannan2005more,acharya2010reconstructing,acharya2014quadratic}. With the recent development of archival DNA-based storage devices~\cite{Church.etal:2012,Goldman.etal:2013} and rewritable, random-access storage media~\cite{Ma.etal:pending}, a new family of reconstruction questions has emerged regarding how to \emph{design sequences} which can be easily and accurately reconstructed based on their substrings, in the presence of read and write errors. 
The write process reduces to DNA synthesis, while the read process involves both DNA sequencing and assembly. The assembly procedure is NP-hard under most formulations~\cite{Medvedev.etal:2007}. Nevertheless, practical approximation algorithms based on Eulerian paths in de Bruijn graphs have shown to offer good reconstruction performance under the high-coverage model~\cite{Compeau.etal:2011}.

In the setting we propose to analyze, one first synthesizes a sequence $\vx \in \mathcal{D}=\{{A,T,G,C\}}^n$, and then fragments it in the process of sequencing into a collection of substrings of approximately the same length, $\ell$. These substrings are often referred to as \emph{reads}. In practice, the length $\ell$ ranges anywhere between $100$ to $1500$ nts\footnote{For our system currently under development, due to the high cost of synthesis, we have chosen $n=1000$. In addition, we used multiple sequences to increase storage capacity.}.
Ideally, one would like to synthesize $\vx$ and sequence all $\ell$-substrings without errors, which is not possible in practice. For large $n$, the synthesis error-rate of $\vx$ is roughly $1-3\%$. Substrings of short length may be sequenced with an error-rate not exceeding $1\%$; long substrings exhibit much higher sequencing error-rates, often as high as $15\%$. Furthermore, due to non-uniform fragmentation, a number of the substrings are not available for sequencing, leaving coverage gaps in the original message. 

To model this read-write phenomenon, we introduce the notion of a \emph{DNA storage channel} that takes as its input a sequence $\vx$ of length $n$, introduces $\ssyn$ substitution errors in $\vx$, with the resulting sequence denoted by $\vtx$. The channel proceeds to output all or a subset of substrings of the sequence $\vtx$ of length $\ell$, $\ell<n$. Each of the substrings is allowed to have additional substitution errors, due to sequencing. The total number of substring sequencing errors equals $\sseq$. 
The substrings at the output of the DNA storage channel are collectively enumerated by a vector $\vhx$, termed the channel output (see Fig.~\ref{fig:DNAstorage} for an illustration).

The main contributions of the paper are as follows. The first contribution is to {\em model the read process (sequencing)} through the use of {\em profile vectors}.
A profile vector of a sequence enumerates all substrings of the sequence, and profiles form a pseudometric space amenable for coding theoretic analysis. The second contribution of the paper is to \emph{introduce a new family of codes} for the three classes of errors arising in the DNA storage channel due to synthesis, lack of coverage and sequencing, and show that they may be characterized by {\em asymmetric errors} studied in classical coding theory. Our third contribution is a code design technique which makes use of
(a) codewords with different profile vectors or profile vectors at sufficiently large distance from each other; and 
(b) codewords with $\ell$-substrings of high biochemical stability which are also resilient to errors. For this purpose, we consider a number of {\em codeword constraints} known to influence the performance of both the synthesis and sequencing systems, one of which we termed the \emph{balanced content constraint}.

For the case when we allow arbitrary $\ell$-substrings, the problem of enumerating all valid profile vectors 
was previously addressed by Jacquet \etal{} \cite{Jacquet.etal:2012} in the context of ``Markov types''. 
However, the method of Jacquet \etal{} does not extend to the case of enumeration of profiles with specific $\ell$-substring constraints or profiles at sufficiently large distance from each other.
We cast our more general enumeration and code design question as 
a problem of {\em enumerating integer points in a rational polytope}
and use tools from {\em Ehrhart theory} to provide estimates of the sizes of the underlying codes.
We also describe two decoding procedures for sequence profiles that combine graph theoretical principles
and sequencing by hybridization methods.

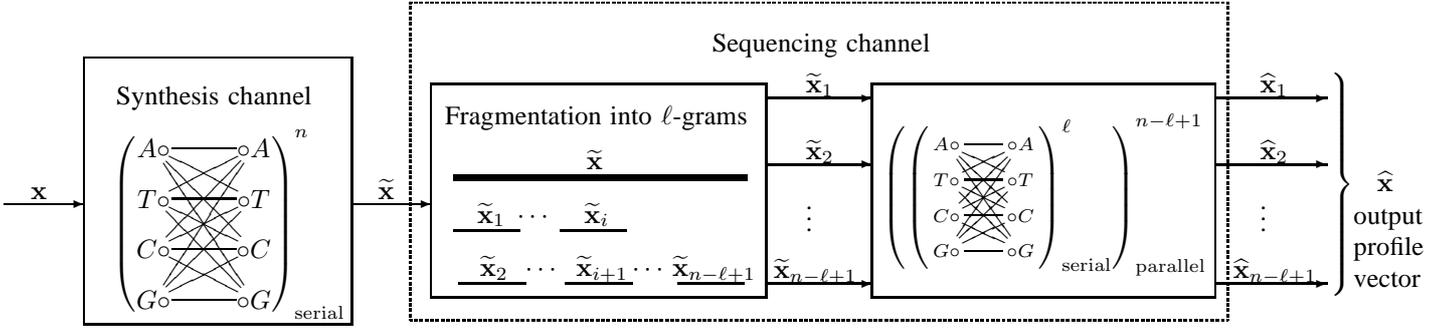
\begin{figure*}[t]
\small
\begin{center}
\begin{picture}(480,100)(0,0)

\linethickness{0.5pt}
\put(5,-15){\framebox(100,100){}}
\put(17,70){\txt{Synthesis channel}}
\put(12,20){\mbox{
$\left(
\vcenter{\xymatrix @R=2mm{
*=0++[o]{A\circ} \ar@{-}[r]\ar@{-}[dr]\ar@{-}[ddr]\ar@{-}[dddr]&*=0++[o]{\circ A}\\
*=0++[o]{T\circ} \ar@{-}[r]\ar@{-}[ur]\ar@{-}[dr]\ar@{-}[ddr]&*=0++[o]{\circ T}\\
*=0++[o]{C\circ} \ar@{-}[r]\ar@{-}[ur]\ar@{-}[uur]\ar@{-}[dr]&*=0++[o]{\circ C}\\
*=0++[o]{G\circ} \ar@{-}[r]\ar@{-}[ur]\ar@{-}[uur]\ar@{-}[uuur]&*=0++[o]{\circ G}}}
\right)^n_{\rm serial}$
}}
\put(135,-5){\framebox(125,80){}}
\put(140,60){\mbox{Fragmentation into $\ell$-grams}}
\linethickness{2pt}
\put(143,40){\line(1,0){110}}
\put(193,43){\mbox{$\vtx$}}
\linethickness{0.5pt}
\put(143,20){\line(1,0){25}}
\put(183,20){\line(1,0){25}}
\put(145,0){\line(1,0){25}}
\put(185,0){\line(1,0){25}}
\put(227,0){\line(1,0){25}}
\put(152,23){\mbox{$\vtx_1$}}
\put(153,3){\mbox{$\vtx_2$}}
\put(192,23){\mbox{$\vtx_{i}$}}
\put(189,3){\mbox{$\vtx_{i+1}$}}
\put(225,3){\mbox{$\vtx_{n-\ell+1}$}}
\put(210,3){\mbox{$\cdots$}}
\put(170,3){\mbox{$\cdots$}}
\put(167,23){\mbox{$\cdots$}}
\put(240,90){\txt{Sequencing channel}}
\put(300,-5){\framebox(128,80){}}
\put(127,-14){\dashbox(306,120){}}
\put(300,30){\mbox{
$\left(\left(
\def\objectstyle{\scriptstyle}
\vcenter{\xymatrix @R=0mm@C=5mm{
*=0++[o]{A\circ} \ar@{-}[r]\ar@{-}[dr]\ar@{-}[ddr]\ar@{-}[dddr]&*=0++[o]{\circ A}\\
*=0++[o]{T\circ} \ar@{-}[r]\ar@{-}[ur]\ar@{-}[dr]\ar@{-}[ddr]&*=0++[o]{\circ T}\\
*=0++[o]{C\circ} \ar@{-}[r]\ar@{-}[ur]\ar@{-}[uur]\ar@{-}[dr]&*=0++[o]{\circ C}\\
*=0++[o]{G\circ} \ar@{-}[r]\ar@{-}[ur]\ar@{-}[uur]\ar@{-}[uuur]&*=0++[o]{\circ G}}}
\right)^{\ell}_{\rm serial}\right)^{n-\ell+1}_{\rm parallel}$
}}
\put(-25,30){\vector(1,0){30}}
\put(-15,32){\mbox{$\vx$}}
\put(105,30){\vector(1,0){30}}
\put(115,32){\mbox{$\vtx$}}

\put(260,0){\vector(1,0){40}}
\put(260,45){\vector(1,0){40}}
\put(260,70){\vector(1,0){40}}

\put(263,2){\mbox{$\vtx_{n-\ell+1}$}}
\put(275,20){\mbox{$\vdots$}}
\put(275,47){\mbox{$\vtx_2$}}
\put(275,72){\mbox{$\vtx_1$}}

\put(428,0){\vector(1,0){42}}
\put(428,45){\vector(1,0){42}}
\put(428,70){\vector(1,0){42}}

\put(435,2){\mbox{$\vhx_{n-\ell+1}$}}
\put(445,20){\mbox{$\vdots$}}
\put(445,47){\mbox{$\vhx_2$}}
\put(445,72){\mbox{$\vhx_1$}}
\put(470,35){\mbox{$\left.\vphantom{\begin{array}{c}a\\a\\a\\a\\a\\a\\a\end{array}}\right\}~~\vhx$}}
\put(480,11){\parbox[c][4em][c]{30px}{output profile vector}}

\end{picture}
\end{center}
\caption{The DNA Storage Channel. Information is encoded in a DNA sequence $\vx$ which is synthesized with potential errors. The output of the synthesis process is $\vtx$. During readout, the sequence $\vtx$ is read through the sequencing channel, which fragments the sequence and possibly perturbs the fragments via substitution error. The output of the channel is a set of DNA fragments, along with their frequency count.}
\label{fig:DNAstorage}
\end{figure*}

\section{Profile Vectors and a Metric Space}\label{sec:profile}

%

Let $\llbracket q\rrbracket$ denote the set of integers $\{0,1,2,\ldots, q-1\}$ and 
consider a word $\vx$ of length $n$ over $\llbracket q\rrbracket$. 
Suppose that $\ell<n$. An {\em $\ell$-gram} is a substring of $\vx$ of length $\ell$.
Let $\vp(\vx;q,\ell)$ denote the ($\ell$-gram) {\em profile vector} of length $q^\ell$, indexed by all words of $\llbracket q\rrbracket^{\ell}$ ordered lexicographically. In the profile vector, an entry indexed by $\vz$ gives the number of occurrences of $\vz$ as an $\ell$-gram of $\vx$.
For example, $\vp(0000;2,2)=(3,0,0,0)$, while $\vp(0101;2,2)=(0,2,1,0)$. Observe that for any $\vx\in \llbracket q\rrbracket^\ell$, 
the sum of entries in $\vp(\vx;q,\ell)$ equals $(n-\ell+1)$.

Suppose that the data of interest is encoded by a vector  $\vx\in \llbracket q\rrbracket^n$ and 
let $\vhx$ be the output profile of the DNA channel. In what follows, we characterize the error vector $\ve\triangleq\vp(\vx;q,\ell)-\vhx$ which arises in the process of DNA-based data storage and the type of errors captured by this vector. 
\begin{enumerate}[(i)]
\item {\bf Substitution errors due to synthesis}. Here, certain symbols in the word $\vx$ may be changed as a result of erroneous synthesis. If one symbol is changed, in the perfect coverage case, $\ell$ $\ell$-grams will decrease their counts by one and $\ell$ $\ell$-grams will increase their counts by one. 
Hence, the error resulting from $\ssyn$ substitutions equals $\ve=\ve_--\ve_+$, 
where $\ve_+,\ve_-\ge \vzero$, and both vectors have weight $\ssyn \, \ell$.

\item {\bf Substitution errors due to sequencing}. Here, certain symbols in each fragment $\vtx_i$ may be changed during the sequencing process. Suppose the $\ell$-gram $\vtx_i$ is altered to $\vhx_i$ , $\vhx_i\ne\vtx_i$. Then the count for $\vtx_i$ will decrease by one while the count for $\vhx_i$ will increase by one.
Hence, the error resulting from $\sseq$ substitutions equals $\ve=\ve_--\ve_+$, 
where $\ve_+,\ve_-\ge \vzero$, and both vectors have weight $\sseq$.
\item {\bf Undersampling errors}. Such errors occur when not all $\ell$-grams are observed during fragmentation and subsequently sequenced.
For example, suppose that $\vx=00000$, and that $\vhx$ is the channel output $3$-gram profile vector.
Undersampling of one $3$-gram results in the weight of $\vhx_{000}$ being four instead of five.
Note that undersampling of $t$ $\ell$-grams results in 
an asymmetric error $\ve \geq \vzero$ of weight $t$.
\end{enumerate}

Consider further a subset $S\subseteq \bbracket{q}^\ell$.
For $\vx\in\bbracket{q}^n$, we similarly define $\vp(\vx;S)$ to be the vector indexed by $S$,
whose entry indexed by $\vz$ gives the number of occurrences of $\vz$ as an $\ell$-gram of $\vx$.
We are interested in vectors $\vx$ whose $\ell$-grams belong to $S$.
Once again, the sum of entries in $\vp(\vx;S)$ equals $n-\ell+1$.

The choice of $S$ is governed by certain considerations in DNA sequence design, including 
\begin{enumerate}[(i)]
\item {\bf Weight profiles of $\ell$-grams}. For the application at hand, one may want to choose $S$ to consist of $\ell$-grams with a fixed proportion of $C$ and $G$ bases, as 
this proportion -- known as the GC-content of the sequence -- influences the thermostability, folding processes and
overall coverage of the $\ell$-grams. From the perspective of sequencing, GC contents of roughly $50\%$ are desired.

To make this modeling assumption more precise and general, 
we assume sets $S$ of the form described below. Suppose that $0\le w_1< w_2\le\ell$ 
and $1\le q^*\le q-1$. Let $[w_1,w_2]$ denote the set of integers $\{w_1,w_1+1,\ldots, w_2\}$. 
For each $\vx\in\bbracket{q}^\ell$, 
let the {\em $q^*$-weight} of $\vx$ be the number of symbols in $\vx$ that belong to $[q-q^*,q-1]$, 
and denote the weight by $\wt(\vx;q^*)$. 
Let
\[
S(q,\ell; q^*,[w_1,w_2])\triangleq \left\{\vx\in\bbracket{q}^\ell : \wt(\vx;q^*)\in[w_1,w_2]\right\}
\] 
be the set of all sequences with $\ell$-gram weights restricted to
$[w_1,w_2]$.  
For example, representing $A,T,G,C$ by $0,1,2,3$,
respectively, and setting $q=4$ and $q^*=2$, the choice $w_1=\lfloor
\ell/2 \rfloor, \; w_2=w_1+1$ enforces the balanced GC
constraint. Also, note that $S(q,\ell;q^*,[0,\ell]) =
\bbracket{q}^\ell$, for any choice of $q^*$.

\item {\bf Forbidden $\ell$-grams}. 
Studies have indicated that certain 
substrings in DNA sequences -- such as GCG, CGC -- are likely to cause sequencing errors (see \cite{Nakamura.etal:2011}). Hence, one may also choose $S$ so as to avoid certain $\ell$-grams.
Treatment of specialized sets of forbidden $\ell$-grams is beyond the scope of this paper and 
is deferred to future work.
\end{enumerate}

Therefore, with an appropriate choice of $S$, 
we may lower the probability of substitution errors due to synthesis, lack of coverage and sequencing. 
Furthermore, as we show in our subsequent derivations, a carefully chosen set $S$ may improve the error-correcting capability 
by designing codewords to be at a sufficiently large ``distance'' from each other.
Next, we formally define the notion of sequence and profile distance as well as error-correcting codes for the corresponding DNA channel.

\subsection{Error-Correcting Codes for the DNA Storage Channel}

Fix $S\subseteq \bbracket{q}^\ell$.
Let $N$ be an integer which usually denotes the number of $\ell$-grams in the profile vector, i.e. $N=|S|$.
Define the {\em $L_1$-weight} of a word $\vu\in\ZZ^{N}_{\ge 0}$ as $\wt(\vu)\triangleq\sum_{i=1}^N u_i$.
In addition, for any pair of words $\vu,\vv\in\ZZ^{N}_{\ge 0}$, let $\Delta(\vu,\vv)\triangleq\sum_{i=1}^N \max(u_i-v_i,0)$ and define the {\em asymmetric distance} as $\da(\vu,\vv)=\max\left(\Delta(\vu,\vv), \Delta(\vv,\vu)\right)$.
A set $\C$ is called an $(N,d)$-{\em asymmetric error correcting code (AECC)} if $\C\subseteq \ZZ^{N}_{\ge 0}$ and $d=\min\{\da(\vx,\vy): \vx,\vy\in \C, \vx\ne\vy\}$. For any $\vx\in\C$, let 
$\ve\in\ZZ^N_{\ge 0}$ be such that $\vx-\ve\ge \vzero$. We say that an {\em asymmetric error $\ve$} occurred if the received word is $\vx-\ve$. We have the following theorem characterizing asymmetric error-correction codes
(see\cite[Thm 9.1]{Klove:1981}).

\begin{thm} An $(N,d+1)$-AECC corrects any asymmetric error of weight at most $d$.
\end{thm}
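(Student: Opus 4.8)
The plan is to verify that the minimum-asymmetric-distance decoder recovers the transmitted codeword, i.e.\ that the received word determines the codeword uniquely whenever the error has weight at most $d$. So suppose $\vx_1,\vx_2\in\C$ and asymmetric errors $\ve_1,\ve_2\in\ZZ^N_{\ge 0}$, each of weight at most $d$, produce the same received word: $\vx_1-\ve_1=\vx_2-\ve_2\ge\vzero$. It suffices to show $\vx_1=\vx_2$, since then no received word can be explained by two different codewords (with admissible error weight), and existence of a valid explanation is automatic because $\vx_1$ itself is one.

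The key step is to bound $\da(\vx_1,\vx_2)$ by $d$. Rearranging gives $\vx_1-\vx_2=\ve_1-\ve_2$, so coordinatewise $(\vx_1)_i-(\vx_2)_i=(\ve_1)_i-(\ve_2)_i\le(\ve_1)_i$, where the inequality uses $(\ve_2)_i\ge 0$. Hence $\max\bigl((\vx_1)_i-(\vx_2)_i,\,0\bigr)\le(\ve_1)_i$, and summing over $i$ yields $\Delta(\vx_1,\vx_2)\le\wt(\ve_1)\le d$. Swapping the roles of the two codewords and their errors gives $\Delta(\vx_2,\vx_1)\le\wt(\ve_2)\le d$ in the same way, so $\da(\vx_1,\vx_2)=\max\bigl(\Delta(\vx_1,\vx_2),\Delta(\vx_2,\vx_1)\bigr)\le d$. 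Since $\C$ is an $(N,d+1)$-AECC, distinct codewords lie at asymmetric distance at least $d+1>d$, so the bound forces $\vx_1=\vx_2$, which finishes the proof.

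There is no substantial obstacle: the argument is a direct translation of the classical Hamming-distance reasoning into the asymmetric setting, and it is essentially the content of \cite[Thm 9.1]{Klove:1981}. The only point requiring care is the coordinatewise case analysis in the second step, where one must invoke the nonnegativity of $\ve_1$ and $\ve_2$ (which holds because an asymmetric error only ever decreases coordinates) in order to drop the truncation in the definition of $\Delta$ and replace it by $\wt(\ve_1)$ and $\wt(\ve_2)$.
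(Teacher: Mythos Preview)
Your proof is correct. Note, however, that the paper does not actually give its own proof of this statement: it is stated as a known result with a citation to \cite[Thm 9.1]{Klove:1981}. Your argument is the standard one and is essentially what that reference contains, so there is nothing to compare against in the paper itself.
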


Next, we let $(\bbracket{q}^n;S)$ denote all $q$-ary words of length $n$ whose $\ell$-grams belong to $S$
and define the {\em $\ell$-gram distance} between two words $\vx,\vy\in(\bbracket{q}^n;S)$ as 
\[\dg(\vx,\vy;S)\triangleq \da(\vp(\vx;S),\vp(\vy;S)).\]
Note that $\dg$ is not a metric, as $\dg(\vx,\vy;S)=0$ does not imply that $\vx=\vy$. 
For example, we have $\dg(0010,1001;\bbracket{2}^2)=0$.
Nevertheless, $((\llbracket q\rrbracket^n;S),\dg)$ forms a pseudometric space. 
We convert this space into a metric space via an equivalence relation called metric identification.
Specifically, we say that $\vx\stackrel{\dg}{\sim}\vy$ if and only if $\dg(\vx,\vy;S)=0$. 
Then, by defining $\Q(n;S)\triangleq(\llbracket q\rrbracket^n;S)/\stackrel{\dg}{\sim}$, 
we can make $(\Q(n;S),\dg)$ into a metric space.
An element $X$ in $\Q(n;S)$ is an equivalence class, where 
$\vx,\vx'\in X$ implies that $\vp(\vx;S)=\vp(\vx';S)$. 
We specify the choice of {\em representative} for $X$ in Section \ref{sec:decoding}
and henceforth refer to elements in $\Q(n;S)$ by their representative words.
Let $\pQ(n;S)$ denote the set of profile vectors of words in $\Q(n;S)$. Hence, $|\pQ(n;S)|=|\Q(n;S)|$.

Let $\C\subseteq \Q(n;S)$. If $d=\min\{\dg(\vx,\vy;\ell): \vx,\vy\in \C, \vx\ne\vy\}$, then
$\C$ is called an $(n,d;S)$-{\em $\ell$-gram reconstruction code (GRC)}.
The following proposition demonstrates that an $\ell$-gram reconstruction code 
is able to correct synthesis and sequencing errors provided that its  $\ell$-gram distance is sufficiently large.
We observe that synthesis errors have effects that are $\ell$ times stronger 
since the error in some sense propagates through multiple $\ell$-grams.

\begin{prop}\label{prop:code}
An $(n,d;S)$-GRC can correct $\ssyn$ substitution errors due to synthesis, $\sseq$ substitution errors due to sequencing and $t$ undersampling errors provided that $d>2\ssyn\ell+2\sseq+t$.
\end{prop}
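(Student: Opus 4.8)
The plan is to track how the error vector $\ve = \vp(\vx;S) - \vhx$ decomposes under the three error mechanisms, and then show that the combined error is an asymmetric error small enough to be corrected by an AECC of distance $d$, invoking Theorem on AECCs (the $(N,d+1)$-AECC correcting asymmetric errors of weight $\le d$). Concretely, I would first observe that each of the three error types, as analyzed in the itemized discussion preceding the statement, contributes an error of the form $\ve_- - \ve_+$ with $\ve_+,\ve_- \ge \vzero$: synthesis contributes a pair of vectors each of weight $\ssyn\ell$, sequencing a pair each of weight $\sseq$, and undersampling a nonnegative vector of weight $t$. Summing these, the total error $\ve$ can be written as $\ve_- - \ve_+$ with $\wt(\ve_+) \le \ssyn\ell + \sseq$ and $\wt(\ve_-) \le \ssyn\ell + \sseq + t$ (the undersampling term only adds to the ``loss'' side).

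Next I would connect this to the asymmetric distance $\da$. The key point is that $\dg(\vx,\vy;S) = \da(\vp(\vx;S),\vp(\vy;S))$, so a GRC of minimum $\ell$-gram distance $d$ is precisely an AECC (on the profile vectors) of minimum asymmetric distance $d$. Then the received word $\vhx = \vp(\vx;S) - \ve_- + \ve_+$. To fit the AECC correction framework, which handles purely asymmetric (one-sided) errors of the form $\vz - \ve$ with $\ve \ge \vzero$, I would split the analysis: the ``subtractive'' part $\ve_-$ has weight at most $\ssyn\ell + \sseq + t$ and the ``additive'' part $\ve_+$ has weight at most $\ssyn\ell + \sseq$. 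A standard fact about asymmetric codes (and the natural generalization of the cited Klove result) is that a code of asymmetric distance $d+1$ corrects any error pattern with $\Delta$ in each direction bounded appropriately; more precisely, I would verify that if $\Delta(\ve_-, \vzero) + \Delta(\ve_+,\vzero) < d$ then the ball-type argument goes through: for any other codeword $\vy$, $\da(\vhx, \vp(\vy;S))$ would have to be at least $d - (\ssyn\ell+\sseq+t) - (\ssyn\ell+\sseq) > 0$ away, so $\vx$ is the unique closest codeword. Since $2\ssyn\ell + 2\sseq + t < d$ by hypothesis, decoding to the nearest codeword in $\da$ recovers $\vx$.

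The main obstacle I anticipate is the bookkeeping in the second paragraph: the cited theorem is stated for one-sided asymmetric errors, whereas the DNA channel produces a genuinely two-sided error $\ve_- - \ve_+$ (counts go both up and down). I would need to either restate/reprove the correction criterion in terms of the pair $(\Delta(\ve,\vzero), \Delta(\vzero,\ve))$ — i.e., show that an AECC of distance $d$ corrects all errors $\ve$ with $\Delta(\ve,\vzero) + \Delta(\vzero,\ve) < d$ — or argue directly via the triangle-type inequality for $\da$. A clean way is to show $\da(\vp(\vx;S), \vhx) \le \wt(\ve_+) + \wt(\ve_-) \le 2\ssyn\ell + 2\sseq + t < d$, and that for any distinct codeword $\vy$ one has $\da(\vhx,\vp(\vy;S)) \ge \da(\vp(\vx;S),\vp(\vy;S)) - \da(\vp(\vx;S),\vhx) \ge d - (2\ssyn\ell+2\sseq+t) > 0$, provided $\da$ satisfies the triangle inequality on $\ZZ^N_{\ge 0}$; establishing (or citing) that triangle inequality is the one genuinely non-routine ingredient, after which nearest-neighbor decoding in the $\da$ metric finishes the proof.
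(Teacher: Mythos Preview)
Your error decomposition in the first paragraph is correct and matches the paper. The gap is in the second route you spell out: from the triangle inequality you obtain only $\da(\vhx,\vp(\vy;S)) \ge d - (2\ssyn\ell+2\sseq+t) > 0$, and then you assert ``so $\vx$ is the unique closest codeword.'' That inference does not follow. Positivity of $\da(\vhx,\vp(\vy;S))$ says only that $\vhx \ne \vp(\vy;S)$; it does not compare $\da(\vhx,\vp(\vy;S))$ with $\da(\vhx,\vp(\vx;S))$, and it certainly does not rule out that $\vhx$ could also be received when $\vy$ is transmitted through a valid error pattern. To make nearest-neighbor decoding work via the triangle inequality you would need $\da(\vhx,\vp(\vx;S)) < \da(\vhx,\vp(\vy;S))$, which forces roughly $d > 2(\ssyn\ell+\sseq+t)$ (or $d>2(2\ssyn\ell+2\sseq+t)$ with your looser sum bound on $\da(\vp(\vx;S),\vhx)$), strictly stronger than the hypothesis $d>2\ssyn\ell+2\sseq+t$ whenever $t>0$.

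Your \emph{first} alternative---proving that an AECC of asymmetric distance $d$ corrects any two-sided error $\ve=\ve_- - \ve_+$ with $\wt(\ve_-)+\wt(\ve_+)<d$---is the right one, and it is exactly what the paper does, in the following concrete form. Suppose two distinct codewords $\vx,\vx'$ could produce the same output $\vhx$. Writing out both error decompositions and moving all ``up'' terms to the opposite side, one obtains
\[
\vp(\vx;S) - \bigl(\ve_{s_1,-}+\ve_{s_2,-}+\ve_{t}+\ve'_{s_1,+}+\ve'_{s_2,+}\bigr)
=
\vp(\vx';S) - \bigl(\ve'_{s_1,-}+\ve'_{s_2,-}+\ve'_{t}+\ve_{s_1,+}+\ve_{s_2,+}\bigr),
\]
where each bracketed vector is nonnegative of $L_1$-weight at most $2\ssyn\ell+2\sseq+t$. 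This is now a \emph{purely one-sided} asymmetric collision, directly contradicting the cited theorem on $(N,d)$-AECCs with $d>2\ssyn\ell+2\sseq+t$. The key move you were missing is this cross-rearrangement: the additive (``$+$'') part of one codeword's error is absorbed into the subtractive side of the other, converting the two-sided situation into the one-sided setting the theorem handles without any triangle-inequality loss.
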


\begin{proof}
Consider an $(n,d;S)$-GRC $\C$ and the set
$\vp(\C)=\{\vp(\vx;S):\vx\in\C\}$. By construction, 
$\vp(\C)$ is an $(N,d)$-AECC with $N=|S|$ that 
corrects all asymmetric errors of weight $ \leq 2\ssyn\ell+2\sseq+t$.

Suppose that, on the contrary, $\C$ cannot correct $\ssyn$ substitution errors due to synthesis, $\sseq$ substitution errors due to sequencing and $t$ undersampling errors.
Then, there exist two distinct codewords  $\vx,\vx'\in C$ and error vectors $\ve_{s_1,+},\ve_{s_1,-},\ve_{s_2,+},\ve_{s_2,-},\ve_{t},\ve'_{s_1,+},\ve'_{s_1,-},\ve'_{s_2,+},\ve'_{s_2,-},\ve'_{t}$,
such that $\vhx = \vhx'$, that is, such that
\[\vx+\ve_{s_1,+}-\ve_{s_1,-}+\ve_{s_2,+}-\ve_{s_2,-}-\ve_{t}=
\vx'+\ve'_{s_1,+}-\ve'_{s_1,-}+\ve'_{s_2,+}-\ve'_{s_2,-}-\ve_{t}.\]
Here, $\ve_{s_1,-}-\ve_{s_1,+}$ and $\ve'_{s_1,-}-\ve'_{s_1,+}$ are the error vectors due to substitutions during synthesis in $\vx$ and $\vx'$, respectively; each of the vectors $\ve_{s_1,-},\ve_{s_1,+},\ve'_{s_1,-},\ve'_{s_1,+}$ has weight $\ssyn\ell$;
the vectors
$\ve_{s_2,-}-\ve_{s_2,+}$ and $\ve'_{s_2,-}-\ve'_{s_2,+}$ model substitution errors during sequencing in $\vx$ and $\vx'$, respectively; each of the vectors $\ve_{s_2,-},\ve_{s_2,+},\ve'_{s_2,-},\ve'_{s_2,+}$ has weight $\sseq$;
and $\ve_{t}$ and $\ve'_{t}$ are the undersampling error vectors of $\vx$ and $\vx'$, respectively, and both $\ve_{t},\ve'_{t}$ have weight $t$.
Therefore,
\[\vx-(\ve_{s_1,-}+\ve_{s_2,-}+\ve_{t}+\ve'_{s_1,+}+\ve'_{s_2,+})=
\vx'-(\ve'_{s_1,-}+\ve'_{s_2,-}+\ve'_{t}+\ve_{s_1,+}+\ve_{s_2,+}),\]
\noindent where
$\ve_{s_1,-}+\ve_{s_2,-}+\ve_{t}+\ve'_{s_1,+}+\ve'_{s_2,+}$ and
$\ve'_{s_1,-}+\ve'_{s_2,-}+\ve'_{t}+\ve_{s_1,+}+\ve_{s_2,+}$ are
nonnegative vectors of weight at most $2\ssyn\ell+2\sseq+t$. This
contradicts the fact that $\vx$ and $\vx'$ belong to a code that
corrects asymmetric errors with weight at most $2\ssyn\ell+2\sseq+t$.
\end{proof}

Throughout the remainder of the paper, we consider the problem of
enumerating the profile vectors in $\pQ(n;S)$ and constructing
$(n,d;S)$-$\ell$-gram reconstruction codes for a general subset
$S\subseteq\bbracket{q}^\ell$.  Our solutions are characterized by
properties associated with a class of graphs defined on $S$, which we
introduce in Section \ref{sec:graph}. In the same section, we collect
enumeration results for $\Q(n;S)$.  Section \ref{sec:enumerate} is
devoted to the proof of the main enumeration result using Ehrhart
theory.  We further exploit Ehrhart theory and certain graph theoretic
concepts to construct codes in Section \ref{sec:lower} and summarize
numerical results for the special case where $S=S(q,\ell;
q^*,[w_1,w_2])$ in Section \ref{sec:numerical}.  Finally, we describe
practical decoding procedures in Section \ref{sec:decoding}.

\begin{rem}
\begin{enumerate}[(i)] \hfill
\item For the case $S=\bbracket{q}^\ell$, given a word $\vx$, Ukkonen made certain observations 
on the structure of certain words in the equivalence class of $\vx$, 
but was unable to completely characterize all words within the class~\cite{ukkonen1992approximate}.
Here, we focus on computing the {\em number} of equivalence classes for a general subset $S$.
\item For ease of exposition, we abuse notation by identifying words in $\Q(n;S)$ 
with their corresponding profile vectors in $\pQ(n;S)$.
and refer to GRCs as being subsets of $\Q(n;S)$ or $\pQ(n;S)$ interchangeably.
\end{enumerate}
\end{rem}

\section{Restricted De Bruijn Graphs and Enumeration of Profile Vectors}
\label{sec:graph}

We use standard concepts and terminology from graph theory, following Bollob\'as~\cite{Bollobas:1998}.

A {\em directed graph (digraph)} $D$ is a pair of sets $(V,E)$, where $V$ is the set of {\em nodes}
and $E$ is a set of ordered pairs of $V$, called {\em arcs}.
If $e=(v,v')$ is an arc, we call $v$ the {\em initial} node and $v'$ the {\em terminal} node.
We allows loops in our digraphs: in other words, we allow $v=v'$.

The {\em incidence matrix} of a digraph $D$ is a matrix $\vB(D)$ in $\{-1,0,1\}^{V\times E}$, where
\begin{equation*}
\vB(D)_{v,e}=
\begin{cases}
1 & \mbox{if $e$ is not a loop and $v$ is its terminal node},\\
-1 & \mbox{if $e$ is not a loop and $v$ is its source node},\\
0 & \mbox{otherwise}.
\end{cases}
\end{equation*}
Observe that when a digraph $D$ has loops, its incidence matrix $\vB(D)$ has $\vzero$-columns indexed by these loops. When $D$ is connected, it is known that the rank of 
$\vB(D)$ equals $|V|-1$ (see \cite[\S II, Thm 9 and Ex. 38]{Bollobas:1998}).

A {\em walk} of length $n$ in a digraph is a sequence of nodes
$v_0v_1\cdots v_n$ such that $(v_i,v_{i+1})\in E$ for all
$i\in\bbracket{n}$. A walk is {\em closed} if $v_0=v_n$ and a {\em
  cycle} is a closed walk with distinct nodes, i.e., $v_i\ne v_j$, for
$0\le i<j< n$. We consider a loop to be a cycle of length one.  Given
a subset $C$ of the arc set, let $\vchi(C)\in \{0,1\}^E$ be its {\em
  incidence vector}, where $\vchi(C)_e$ is one if $e\in C$ and zero
otherwise.  In general, for any closed walk $C$ in $D$, we have
$\vB(D)\vchi(C)=\vzero$. A digraph is \emph{strongly connected} if for
all $\vz,\vz'\in V(S)$, there exists a walk from $\vz$ to $\vz'$ and
vice versa.

%

%

We are concerned with a generalization of a family of digraphs, 
namely, the de Bruijn graphs \cite{Bruijn:1946}.
Given $q$ and $\ell$, the standard {\em de Bruijn graph} is defined on the set $\llbracket q\rrbracket^\ell$.
Here, we fix a subset $S\subseteq\bbracket{q}^\ell$  and 
define the corresponding {\em restricted de Bruijn graph}, denoted by $D(S)$.
The nodes of $D(S)$ are the $(\ell-1)$-grams appearing in of words in $S$.
The pair $(\vv,\vv')$ belongs to the arc set if and only if 
$v_i=v'_{i-1}$ for $2\le i\le\ell$ and $v_1v_2\cdots v_{\ell-1}v'_{\ell-1}\in S$.
Hence, we identify the arc set with $S$ and denote the node set as $V(S)$.

The notion of restricted de Bruijn graphs was introduced by Ruskey \etal{} \cite{Ruskey.etal:2012} 
for the case of a binary alphabet. In their paper, Ruskey \etal{} showed that $D(S)$ is Eulerian 
when $S=S(2,\ell;1,[w-1,w])$ for $w\in[\ell]$.
Nevertheless, the results of \cite{Ruskey.etal:2012} can be extended for general $q$, $q^*$ and more general range of weights.
As these extensions are needed for our subsequent derivation, we provide their technical proofs in Appendix \ref{app:restricteddb}.
For purposes of brevity, we write $D(S(q,\ell; q^*,[w_1,w_2]))$ and $D(\bbracket{q}^\ell)$
 as $D(q,\ell; q^*,[w_1,w_2])$ and $D(q,\ell)$, respectively.

\begin{prop}\label{debruijn}
Fix $q$ and $\ell$.
Let $1\le q^*\le q-1$ and $1\le w_1<w_2\le \ell$.
Then $D(q,\ell;q^*,[w_1,w_2])$ is Eulerian.
In addition, $D(q,\ell)$ is Hamiltonian.
\end{prop}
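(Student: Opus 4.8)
The plan is to verify the two claims separately, relying on standard criteria: a connected digraph is Eulerian iff every node has equal in-degree and out-degree, and for the Hamiltonicity statement we exploit the well-known fact that the full de Bruijn graph $D(q,\ell)$ is the line graph of $D(q,\ell-1)$, so that Eulerian circuits of the latter correspond to Hamiltonian cycles of the former.

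For the Eulerian claim, first I would establish that $D(q,\ell;q^*,[w_1,w_2])$ is connected (indeed strongly connected). Given two nodes $\vu,\vv\in V(S)$, which are $(\ell-1)$-grams whose $q^*$-weight lies in the appropriate range, I would exhibit an explicit walk from $\vu$ to $\vv$: append symbols of $\vv$ one at a time while deleting symbols of $\vu$ from the front, checking at each step that the length-$\ell$ window that is traversed has $q^*$-weight in $[w_1,w_2]$; since consecutive windows differ in weight by at most one, and the endpoints have admissible weight, a careful ordering (e.g.\ interleaving to keep the running weight between $w_1$ and $w_2$) keeps every intermediate $\ell$-gram in $S$. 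Here the hypothesis $w_1<w_2$ (a genuine range, not a single value) is what gives the slack needed to maneuver. Next I would check the degree condition: for a node $\vv=v_1\cdots v_{\ell-1}$, the out-arcs correspond to symbols $a$ with $v_1\cdots v_{\ell-1}a\in S$, i.e.\ with $\wt(v_1\cdots v_{\ell-1}a;q^*)\in[w_1,w_2]$, and the in-arcs correspond to symbols $b$ with $b v_1\cdots v_{\ell-1}\in S$. Both counts depend only on $\wt(v_2\cdots v_{\ell-1};q^*)$ versus the thresholds, and a short case analysis (is the prefix weight already at $w_2$? at $w_1$? strictly inside?) shows in-degree equals out-degree in every case — the number of admissible symbols to append equals the number admissible to prepend. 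Combined with connectivity, this yields that $D(q,\ell;q^*,[w_1,w_2])$ is Eulerian.

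For the Hamiltonicity claim, I would use that $D(q,\ell)$ has node set $\bbracket{q}^\ell$ and arcs $\bbracket{q}^{\ell+1}$ (shift-by-one), and is precisely the line digraph of $D(q,\ell-1)$: a Hamiltonian cycle in $D(q,\ell)$ is the same data as an Eulerian circuit in $D(q,\ell-1)$. Since $D(q,\ell-1)=D(q,\ell-1;q^*,[0,\ell-1])$ is the full de Bruijn graph, it is strongly connected and every node has in-degree and out-degree exactly $q$, hence Eulerian; the correspondence then gives a Hamiltonian cycle in $D(q,\ell)$. (For the degenerate small case $\ell=1$ one checks directly that $D(q,1)$ is the complete digraph with loops on $q$ nodes, which is trivially Hamiltonian.)

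The main obstacle is the connectivity argument for the restricted graph: one must produce, for arbitrary admissible endpoints, a transit word all of whose length-$\ell$ windows stay within the weight band $[w_1,w_2]$, and do so uniformly in $q,\ell,q^*$. I expect this to require a small but careful construction — roughly, first drive the weight to a convenient interior value, then migrate the content from $\vu$ to $\vv$ while the window weight oscillates by $\pm1$ inside the band, and finally settle at $\vv$ — with the inequality $w_1<w_2$ used precisely to guarantee room for these $\pm1$ fluctuations. The degree computation and the line-graph reduction are routine by comparison.
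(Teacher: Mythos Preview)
Your overall plan is sound and matches the paper on the balance condition (the paper's Lemma~\ref{lem:balanced} is exactly the observation that $s\vz\in S\iff\vz s\in S$, so ${\rm Pref}(\vz)={\rm Suff}(\vz)$) and on Hamiltonicity (the paper does not prove this at all---it simply invokes the classical line-graph fact, as you do). Two small slips: (i) in the paper's convention the node set of $D(q,\ell)$ is $\bbracket{q}^{\ell-1}$, not $\bbracket{q}^\ell$; your line-graph reduction is still correct once re-indexed. (ii) The in- and out-degree of $\vv=v_1\cdots v_{\ell-1}$ depend on $\wt(\vv;q^*)$, not on $\wt(v_2\cdots v_{\ell-1};q^*)$; again harmless for the conclusion.

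The genuine difference is in the connectivity step. You propose to shift from $\vu$ to $\vv$ while controlling the running window weight, with an ``interleaving'' heuristic. This sketch is workable but under-specified: the target word $\vv$ is ordered, so you cannot permute its symbols, and the direct shift $\vu\vv$ can already leave the band (e.g.\ $q=2$, $\ell=5$, $[w_1,w_2]=[2,3]$, $\vu=0001$, $\vv=0100$ fails on the first $5$-gram). The paper sidesteps this by a cleaner reduction: it shows that if two nodes $\vz,\vz'$ differ in \emph{exactly one} coordinate, then one can pick $s\in{\rm Suff}(\vz)\cap{\rm Pref}(\vz')$ (nonempty because the one-coordinate difference forbids $\wt(\vz)=w_1-1$ and $\wt(\vz')=w_2$ simultaneously), and then every $\ell$-gram of the word $\vz\,s\,\vz'$ has weight equal to either $\wt(\vz s)$ or $\wt(s\vz')$, hence lies in $S$. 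Arbitrary $\vz,\vz'$ are then linked by a chain of one-coordinate moves inside $V(S)$, with a brief argument that the chain never leaves the weight window $[w_1-1,w_2]$. This ``change one coordinate at a time, via a single pivot symbol'' device is the main idea you are missing; it replaces your weight-oscillation heuristic with an explicit construction whose correctness is immediate.
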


Observe that when $q^*=q-1$, $w_1=0$, $w_2=\ell$, we recover the classical result that the de Bruijn graph $D(q,\ell)$ is Eulerian and Hamiltonian.

\begin{figure}
\centering
\begin{tabular}{|c|c|}
\hline
$D(2,3)$&
$\vp(0001000;2,3)$\\
\hline
\xymatrix@=20mm{
{00} \ar[d]_{001}\ar@(u,l)[]_{000} & 10\ar[l]_{100}\ar@/^/[dl]^{101}\\
01 \ar[r]_{011}\ar@/^/[ur]^{010} & 11 \ar[u]_{110}\ar@(d,r)[]_{111}
}&
\xymatrix@=20mm{
00 \ar[d]_{1}\ar@(u,l)[]_{2} & 10\ar[l]_{1}\ar@/^/[dl]^{0}\\
01 \ar[r]_{0}\ar@/^/[ur]^{1} & 11 \ar[u]_{0}\ar@(d,r)[]_{0}
}\\ \hline
\end{tabular}

\vspace{5mm}

\begin{tabular}{|c|c|}
\hline
$D(2,4;1,[2,3])$&
$\vp(011001101011;S(2,4;1,[2,3]))$\\
\hline
\xymatrix@=7mm{
001 \ar[rrr]^{0011} & *\txt{} &*\txt{} & 
011 \ar[dd]^{0110}\ar[dr]^{0111}\\
*\txt{}& 010 \ar@/^/[r]^{0101}
& 101 \ar@/^/[l]^{1010} \ar[ur]^{1011}
&*\txt{}& 111 \ar[dl]^{1110}\\
100 \ar[uu]^{1001} &*\txt{}&*\txt{}&
110 \ar[ul]^{1101}\ar[lll]^{1100}
}
&
\xymatrix@=7mm{
001 \ar[rrr]^{1} & *\txt{} &*\txt{} & 
011 \ar[dd]^{2}\ar[dr]^{0}\\
*\txt{}& 010 \ar@/^/[r]^{1}
& 101 \ar@/^/[l]^{1} \ar[ur]^{1}
&*\txt{}& 111 \ar[dl]^{0}\\
100 \ar[uu]^{1} &*\txt{}&*\txt{}&
110 \ar[ul]^{1}\ar[lll]^{1}
}\\ \hline
\end{tabular}
\caption{The restricted de Bruijn graphs defined on sets $\bbracket{2}^3$ and $S(2,4;1,[2,3])$.
We represent the respective $3$- and $4$-gram profile vectors for $0001000$ and $011001101011$
using their respective restricted de Bruijn graphs.}
\label{fig:debruijn}
\end{figure}
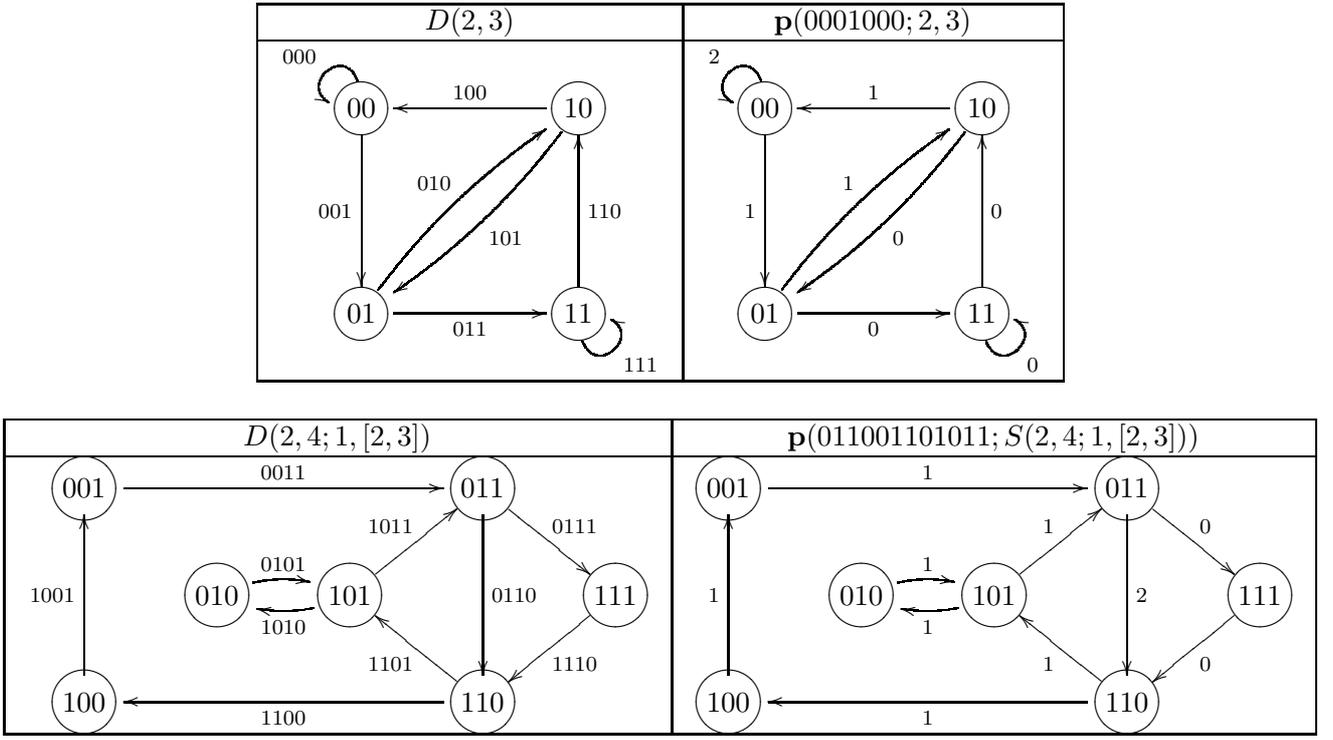

\subsection{Enumerating $\Q(n;S)$}


In this subsection, we provide the main enumeration results for $\Q(n;S)$, or equivalently, for $\pQ(n;S)$.
We first assume that $D(S)$ is strongly connected.
In addition, we consider closed walks in $D(S)$, or equivalently, {\em closed words} that start and end with the same
$(\ell-1)$-gram. We denote the set of closed words in $\Q(n;S)$ by $\Qb(n;S)$, and the corresponding set of profile vectors by $\pQb(n;S)$.

Suppose that $\vu$ belongs to $\pQb(n;S)$.
Then the following system of linear equations that we refer to as the {\em flow conservation equations},
hold true:
\begin{equation}\label{eq:flow}
\vB(D(S))\vu=\vzero.
\end{equation}

Let $\vone$ denote the all-ones vector. Since the number of $\ell$-grams in a word of length $n$ is $n-\ell+1$, 
we also have
\begin{equation}\label{eq:sum}
 \vone^T\vu=n-\ell+1.
\end{equation}

Let $\vA(S)$ be $\vB(D(S))$ augmented with a top row $\vone^T$; let $\vb$ be a vector of length $|V(S)|+1$ with a one as its first entry, and zeros elsewhere.
Equations \eqref{eq:flow} and \eqref{eq:sum} may then be rewritten as $\vA(S)\vu=(n-\ell+1)\vb$.

Consider the following two sets of integer points
{
\begin{align}
\F(n;S) & \triangleq\{\vu\in \ZZ^{|S|}: \vA(S)\vu=(n-\ell+1)\vb,\ \vu\ge\vzero\}, \label{sol_polytope}\\
\E(n;S) &\triangleq\{\vu\in \ZZ^{|S|}: \vA(S)\vu=(n-\ell+1)\vb,\ \vu>\vzero \} \label{sol_interior}.
\end{align}
} The preceding discussion asserts that the profile vector of any
closed word must lie in $\F(n;S)$.  Conversely, the next lemma shows
that any vector in $\E(n;S)$ is a profile vector of some word in
$\Qb(n;S)$.

\begin{lem}\label{lem:euler}
Suppose that $D(S)$ is strongly connected.
If $\vu\in\E(n;S)$, then there exists a word $\vx\in\Qb(n;S)$ such that $\vp(\vx;S)=\vu$. That is, $\E(n;S) \subset \pQb(n;S)$.
\end{lem}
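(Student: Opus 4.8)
The plan is to read $\vu$ as an arc–multiplicity assignment on the restricted de Bruijn graph $D(S)$ and to extract an Eulerian circuit from it. Concretely, I would form the directed multigraph $D_\vu$ on the node set $V(S)$ containing exactly $u_e$ parallel copies of each arc $e$ (recall that the arcs of $D(S)$ are identified with the elements of $S$, so $\vu$ is indexed by arcs). The total number of arcs of $D_\vu$ is $\vone^T\vu = n-\ell+1$ by \eqref{eq:sum}.

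First I would verify that $D_\vu$ admits a closed Eulerian trail. The flow conservation equations \eqref{eq:flow}, $\vB(D(S))\vu=\vzero$, say precisely that at every node the total multiplicity of incoming arcs equals that of outgoing arcs (loops contribute equally to both sides and cause no trouble), so every node of $D_\vu$ is balanced. Moreover, since $\vu>\vzero$, the multigraph $D_\vu$ contains a copy of every arc of $D(S)$, hence contains $D(S)$ as a spanning subgraph; as $D(S)$ is strongly connected, $D_\vu$ is connected with no isolated nodes. By the classical Euler theorem for directed multigraphs, $D_\vu$ therefore has a closed Eulerian trail $C=v_0v_1\cdots v_{n-\ell+1}$ with $v_0=v_{n-\ell+1}$ that traverses each of the $u_e$ copies of each arc $e$ exactly once.

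Next I would translate $C$ back into a word via the standard de Bruijn correspondence: each $v_i$ is an $(\ell-1)$-gram, consecutive nodes overlap in $\ell-2$ symbols, and I set $\vx$ to be $v_0$ followed by the last symbol of each of $v_1,\dots,v_{n-\ell+1}$, a word of length $(\ell-1)+(n-\ell+1)=n$. By the definition of the arc set of $D(S)$, the $\ell$-gram of $\vx$ read at step $i$ is exactly the element of $S$ labelling the arc $(v_{i-1},v_i)$ traversed by $C$; hence every $\ell$-gram of $\vx$ lies in $S$, so $\vx\in(\bbracket{q}^n;S)$, and for each $\vz\in S$ the number of occurrences of $\vz$ in $\vx$ equals the multiplicity $u_\vz$ of arc $\vz$ in $D_\vu$. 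Therefore $\vp(\vx;S)=\vu$. Finally $v_0=v_{n-\ell+1}$ means the first and last $(\ell-1)$-grams of $\vx$ coincide, so $\vx$ is a closed word, $\vx\in\Qb(n;S)$, and $\E(n;S)\subseteq\pQb(n;S)$.

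The argument is essentially bookkeeping once the Euler-circuit viewpoint is adopted; the only place that uses a hypothesis beyond $\vu\in\F(n;S)$ is the connectivity step, where strict positivity $\vu>\vzero$ (rather than $\vu\ge\vzero$) is what guarantees $D_\vu\supseteq D(S)$, so that one Eulerian circuit exhausts all arcs. The points to handle carefully are the edge cases: loops in $D(S)$, which are harmless for vertex balance and for Euler's theorem on multigraphs, and ensuring $n-\ell+1\ge 1$ so that $C$ is nonempty, which holds since $S\neq\emptyset$ and $\vu>\vzero$.
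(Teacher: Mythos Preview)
Your proposal is correct and follows exactly the same approach as the paper: build the multidigraph on $V(S)$ with $u_\vz$ copies of each arc $\vz$, use $\vu>\vzero$ together with strong connectivity of $D(S)$ and the flow conservation equations to conclude it is Eulerian, and read off a closed word of length $n$ from an Eulerian circuit. The paper's proof is terser (it leaves the de Bruijn translation from circuit to word implicit), but your added detail on that step is accurate and the arguments coincide.
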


\begin{proof}
Construct a multidigraph $D'$ on the node set $V(S)$
by adding $u_\vz$ copies of the arc $\vz$ for all $\vz\in V(S)$.
Since each $u_\vz$ is positive and $D(S)$ is strongly connected,  $D'$ is also strongly connected.
Since $\vu\in\E(n;q,\ell)$, $\vu$ also satisfies the flow conservation equations and $D'$ is consequently Eulerian. 
Also, as $D'$ has $n-\ell+1$ arcs, an Eulerian walk on $D'$ yields one such desired word $\vx$.
\end{proof}

Therefore, we have the following relation,
\begin{equation}\label{inequalities}
\E(n;S)\subseteq \pQb(n;S)\subseteq \F(n;S) .
\end{equation}

We first state our main enumeration result and defer its proof to
Section \ref{sec:enumerate}.  Specifically, under the assumption that
$D(S)$ is strongly connected, we show that both $|\E(n;S)|$ and
$|\F(n;S)|$ are quasipolynomials in $n$ whose coefficients are
periodic in $n$.  Following Beck and Robins~\cite{Beck.Robins:2007},
we define a {\em quasipolynomial} $f$ as a function in $n$ of the form
$c_D(t)t^d+c_{D-1}(t)t^{D-1}+\cdots+c_0(t)$, where
$c_D,c_{D-1},\ldots, c_0$ are periodic functions in $n$.  If $c_D$ is
not identically equal to zero, $f$ is said to be of {\em degree} $D$.
The {\em period} of $f$ is given by the lowest common multiple of the
periods of $c_D,c_{D-1},\ldots, c_0$.

In order to state our asymptotic results, we adapt the standard
$\Omega$ and $\Theta$ symbols.  We use $f(n)=\Omega'(g(n))$ to state
that for a fixed value of $\ell$, there exists an integer $\lambda$
and a positive constant $c$ so that $f(n)\ge cg(n)$ for sufficiently
large $n$ with $\lambda|(n-\ell+1)$.  In other words, $f(n)\ge cg(n)$
whenever $n$ is sufficiently large and is congruent to $\ell-1$ modulo
$\lambda$.  We write $f(n)=\Theta'(g(n))$ if $f(n)=O(g(n))$ and
$f(n)=\Omega'(g(n))$.

\begin{thm}\label{thm:closed}
Suppose $D(S)$ is strongly connected and 
let $\lambda$ be the least common multiple of the lengths of all cycles in $D(S)$.
Then $|\E(n;S)|$ and $|\F(n;S)|$ are both quasipolynomials in $n$ of the same degree $|S|-|V(S)|$ and 
share the same period that divides $\lambda$.
In particular, $|\pQb(n;S)|=\Theta'\left(n^{|S|-|V(S)|}\right)$.
\end{thm}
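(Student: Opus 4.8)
The plan is to realise both $\F(n;S)$ and $\E(n;S)$ as the sets of lattice points in integer dilates of a single rational polytope, and then invoke Ehrhart's theorem together with Ehrhart--Macdonald reciprocity (as in Beck and Robins~\cite{Beck.Robins:2007}). Write $t=n-\ell+1$ and set $P\triangleq\{\vu\in\RR^{|S|}:\vA(S)\vu=\vb,\ \vu\ge\vzero\}$, so that $\F(n;S)=tP\cap\ZZ^{|S|}$ and $\E(n;S)=t\,\mathrm{relint}(P)\cap\ZZ^{|S|}$. Since the first row of $\vA(S)$ imposes $\vone^T\vu=1$, we have $P\subseteq[0,1]^{|S|}$, so $P$ is a bounded rational polytope. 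The first step is to show $\dim P=|S|-|V(S)|$: we have $\dim P=|S|-\rank\vA(S)$, and since $D(S)$ is connected $\rank\vB(D(S))=|V(S)|-1$, so it suffices to check that the appended row $\vone^T$ is not a linear combination of the rows of $\vB(D(S))$. If $\vone^T=\sum_{\vv}c_\vv\,\vB(D(S))_{\vv,\cdot}$, then reading off the coordinates indexed by the arcs of a cycle $C$ of $D(S)$ (such a cycle exists since $D(S)$ is strongly connected and $S\neq\emptyset$) and summing the equations around $C$ yields $|C|=0$, which is impossible; if $C$ is a loop, its column in $\vB(D(S))$ is $\vzero$ and we directly get $0=1$. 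Hence $\rank\vA(S)=|V(S)|$ and $\dim P\le|S|-|V(S)|$. For the reverse inequality, pick for each arc $e$ a cycle $C_e$ through $e$ (strong connectivity) and normalise $\sum_{e\in S}\vchi(C_e)$ to have coordinate sum $1$; the resulting vector lies in $P$ with every coordinate strictly positive, so $P$ is full-dimensional inside $\{\vu:\vA(S)\vu=\vb\}$, $\dim P=|S|-|V(S)|$, and $\mathrm{relint}(P)=\{\vu:\vA(S)\vu=\vb,\ \vu>\vzero\}$, in agreement with the displayed description of $\E(n;S)$.

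By Ehrhart's theorem, $L_P(t)\triangleq|tP\cap\ZZ^{|S|}|$ is a quasipolynomial in $t$ --- hence, under $t=n-\ell+1$, a quasipolynomial in $n$ --- of degree $\dim P=|S|-|V(S)|$, with period dividing the denominator $\mathcal{D}(P)$ of $P$ (the $\lcm$ of the denominators of the coordinates of the vertices of $P$). Ehrhart--Macdonald reciprocity gives $|t\,\mathrm{relint}(P)\cap\ZZ^{|S|}|=(-1)^{|S|-|V(S)|}L_P(-t)$, which is again a quasipolynomial in $n$ of the same degree and the same period; so $|\F(n;S)|$ and $|\E(n;S)|$ are quasipolynomials in $n$ of degree $|S|-|V(S)|$ sharing one period that divides $\mathcal{D}(P)$. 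To see $\mathcal{D}(P)\mid\lambda$ I would characterise the vertices of $P$: the extreme rays of the circulation cone $\{\vu\ge\vzero:\vB(D(S))\vu=\vzero\}$ are exactly the incidence vectors $\vchi(C)$ of the cycles $C$ of $D(S)$ --- by the standard argument that any nonzero $\vu$ in this cone has a directed cycle in its support, and subtracting the appropriate positive multiple of that cycle's incidence vector strictly shrinks the support, so that $\vu$ decomposes into cycle incidence vectors, while no single $\vchi(C)$ decomposes further. Intersecting the cone with $\{\vone^T\vu=1\}$, the vertices of $P$ are the vectors $\vchi(C)/|C|$, all of whose coordinates have denominators dividing $|C|$, hence dividing $\lambda=\lcm\{|C|:C\text{ a cycle of }D(S)\}$. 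Therefore $\mathcal{D}(P)\mid\lambda$ and the common period divides $\lambda$.

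For the last claim, the chain \eqref{inequalities} gives $|\E(n;S)|\le|\pQb(n;S)|\le|\F(n;S)|$. The upper bound $|\F(n;S)|=L_P(n-\ell+1)$ is $O(n^{|S|-|V(S)|})$, being a quasipolynomial of that degree. For the lower bound, when $\lambda\mid(n-\ell+1)=:t$ we also have $\mathcal{D}(P)\mid t$, so $\mathcal{D}(P)\,P$ is a lattice polytope and along this progression $L_P(t)$ equals the lattice-point count of the $(t/\mathcal{D}(P))$-dilate of $\mathcal{D}(P)\,P$, a genuine polynomial in $t/\mathcal{D}(P)$ of degree $|S|-|V(S)|$ with positive leading coefficient; by reciprocity the interior count $|\E(n;S)|=|t\,\mathrm{relint}(P)\cap\ZZ^{|S|}|$ has the same leading term along this progression, so $|\E(n;S)|=\Theta(n^{|S|-|V(S)|})$ there, i.e.\ $|\E(n;S)|=\Omega'(n^{|S|-|V(S)|})$. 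Combining the two bounds yields $|\pQb(n;S)|=\Theta'(n^{|S|-|V(S)|})$.

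The appeals to Ehrhart theory and to reciprocity are routine; the substance of the argument is the two graph-theoretic facts --- that $\vA(S)$ has full row rank $|V(S)|$, and that the vertices of the normalised circulation polytope $P$ are precisely the normalised cycle incidence vectors $\vchi(C)/|C|$ --- and this is exactly where strong connectivity of $D(S)$ is essential (it supplies a cycle through each arc, hence the strictly positive circulation used for the dimension count, and it constrains the vertex denominators). I expect the vertex characterisation of $P$, which is what pins the period down to a divisor of $\lambda$, to be the main obstacle.
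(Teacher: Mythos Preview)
Your proposal is correct and follows essentially the same approach as the paper's proof in Section~\ref{sec:enumerate}: identify $\F(n;S)$ and $\E(n;S)$ with the lattice points of the dilated polytope $\PP(S)$ and its relative interior, compute $\dim\PP(S)=|S|-|V(S)|$ via $\rank\vA(S)=|V(S)|$ plus the existence of a strictly positive feasible point, invoke Ehrhart and Ehrhart--Macdonald reciprocity, and bound the period by showing the vertices of $\PP(S)$ are precisely the normalised cycle vectors $\vchi(C)/|C|$. Your sub-arguments differ only cosmetically---you build the strictly positive point as a normalised sum of cycles through each arc rather than from a closed walk visiting all arcs (Lemma~\ref{dim:ps}), and you phrase the vertex characterisation via extreme rays of the circulation cone rather than via the Eulerian decomposition of a scaled vertex (Lemma~\ref{lem:vertex-ps})---but the substance and structure are identical to the paper's.
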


Before we end this section, we look at certain implications of Theorem \ref{thm:closed}.
First, we show that the estimate on $|\pQb(n;S)|$ extends to $|\pQ(n;S)|$ when $D(S)$ is strongly connected.

\begin{cor}\label{cor:notclosed}
Suppose $D(S)$ is strongly connected. For any $\vz,\vz'\in V(S)$,
 consider the set of words in $\Q(n;S)$
that begin with $\vz$ and end with $\vz'$ and let $\pQ(n;S,\vz\to\vz')$ be the corresponding set of profile vectors.
Similarly, let  $\pQ(n;S,\vz\to *)$ and $\pQ(n;S,*\to\vz')$ denote the set of profile vectors of words beginning with $\vz$ and words ending with $\vz'$, respectively.
Then 
\[
|\pQ(n;S)|=\Theta'(|\pQ(n;S,\vz\to\vz')|)=\Theta'(|\pQ(n;S,*\to\vz')|)=\Theta'(|\pQ(n;S,\vz\to*)|)=\Theta'\left(n^{|S|-|V(S)|}\right).
\]
\end{cor}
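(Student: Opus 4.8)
The plan is to relate the "open-word" profile counts $|\pQ(n;S,\vz\to\vz')|$, $|\pQ(n;S,\vz\to *)|$, $|\pQ(n;S,*\to\vz')|$, and $|\pQ(n;S)|$ to the "closed-word" count $|\pQb(n;S)|$, which was pinned down in Theorem \ref{thm:closed} as $\Theta'(n^{|S|-|V(S)|})$. Since an upper bound $|\pQ(n;S)|\le |\F(n;S)|=O(n^{|S|-|V(S)|})$ is essentially immediate (any profile vector of an open word of length $n$ satisfies $\vone^T\vu=n-\ell+1$ and the incidence relations differ from flow conservation only in two coordinates of $\vA(S)\vu$, so $\pQ(n;S)$ is contained in a bounded number of translates of the lattice-point set of a dilated polytope of the same dimension), the real work is the matching lower bound. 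Because $\pQ(n;S,\vz\to\vz')\subseteq \pQ(n;S,\vz\to *)\subseteq \pQ(n;S)$ and similarly $\pQ(n;S,\vz\to\vz')\subseteq \pQ(n;S,*\to\vz')\subseteq \pQ(n;S)$, it suffices to prove the single lower bound $|\pQ(n;S,\vz\to\vz')|=\Omega'(n^{|S|-|V(S)|})$ for one fixed pair $\vz,\vz'$ and then observe the argument is uniform in the pair.

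The key step is a reduction from open words to closed words. Fix $\vz,\vz'\in V(S)$. Since $D(S)$ is strongly connected, choose a fixed walk $W$ from $\vz'$ back to $\vz$ of some length $m$ (with $m<\ell$-independent but $n$-independent constant, at most $|V(S)|$). Then I would set up a map sending a closed word $\vx$ of length $n-m$ starting and ending at $\vz$ to the open word obtained by first reading the portion of $\vx$ from $\vz$ and then appending $W$; more precisely, arrange things so that concatenating a closed word through $\vz$ with the fixed detour $W$ yields an open word from $\vz$ to $\vz'$ of length $n$. On profile vectors this is the map $\vu\mapsto \vu+\vchi(W)$ (adding the fixed increment coming from the arcs of $W$), which is injective, so $|\pQ(n;S,\vz\to\vz')|\ge|\pQb(n-m;S,\ \vz\to\vz)|$ where the latter denotes closed words anchored at the specific node $\vz$. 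It then remains to show $|\pQb(n';S,\vz\to\vz)|=\Omega'(n'^{|S|-|V(S)|})$: every closed word is cyclically equivalent to one starting at any prescribed node on it, and since the strongly connected graph has at least one cycle through $\vz$, one can similarly inject $\pQb(n';S)$ (via a bounded-length detour through $\vz$) into $\pQb(n'+O(1);S,\vz\to\vz)$, so these counts agree up to a shift of $n$ by a constant and up to the choice of anchor — and a shift of $n$ by a constant only changes the quasipolynomial $\Theta'(n^{|S|-|V(S)|})$ by a constant factor along the relevant residue class, by Theorem \ref{thm:closed}.

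The main obstacle I expect is bookkeeping the constant-length detours and the associated shift in $n$ while staying inside the $\Omega'$ framework: the definition of $\Omega'$ fixes a modulus $\lambda$ and requires $n\equiv \ell-1 \pmod\lambda$, so when I prepend or append a fixed walk $W$ of length $m$ I change the length from $n-m$ to $n$ and must make sure the divisibility condition is transported correctly (this is why $\lambda$ is taken to be the lcm of all cycle lengths in Theorem \ref{thm:closed} — adding $W$ together with a compensating traversal around a cycle can be used to fix the residue). A secondary subtlety is ensuring the profile-vector map really is injective: distinct closed words can share a profile vector, but that is fine — we are counting profile vectors, and the added increment $\vchi(W)$ is a fixed vector, so distinct profiles map to distinct profiles; one only needs that the image indeed consists of profiles of genuine words from $\vz$ to $\vz'$, which is where strong connectivity and Lemma \ref{lem:euler} (applied to the multidigraph with the extra $W$-arcs) do the work. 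Once these constant-sized adjustments are handled, combining the upper bound $O(n^{|S|-|V(S)|})$ with the lower bound $\Omega'(n^{|S|-|V(S)|})$ for the smallest set $\pQ(n;S,\vz\to\vz')$ and the trivial inclusions yields the chain of $\Theta'$ equalities.
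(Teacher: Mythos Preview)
Your approach is essentially the same as the paper's: sandwich each of the open-word profile counts between two closed-word counts (with $n$ shifted by a bounded amount coming from a fixed connecting walk), and invoke Theorem~\ref{thm:closed}. Two points of comparison are worth noting. First, your literal inequality $|\pQ(n;S)|\le|\F(n;S)|$ is false as stated (open words do not satisfy flow conservation), though your parenthetical fix via finitely many translates is correct; the paper instead gets the upper bound by injecting $\pQ(n;S,\vz\to\vz')$ into $\pQb(n+p(\vz',\vz);S)$ via appending a fixed path $\vz'\to\vz$, and then summing over the $|V(S)|^2$ endpoint pairs. Second, and more substantively, your lower bound takes an unnecessary detour through the anchored set $\pQb(n';S,\vz\to\vz)$ and then needs a further injection to relate this to $\pQb(n';S)$; the paper avoids this entirely by using $\E(n-p(\vz,\vz');S)$ directly: every $\vu\in\E$ has all coordinates positive, so by Lemma~\ref{lem:euler} the corresponding Eulerian multigraph is connected and the closed walk may be taken to start at $\vz$ without any further adjustment. (Also, your walk $W$ should go from $\vz$ to $\vz'$, not from $\vz'$ to $\vz$, if you are appending it to a closed word at $\vz$ to produce an open word ending at $\vz'$.)
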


\begin{proof}
Let $\vz,\vz'\in V(S)$. 
Since $D(S)$ is strongly connected, we consider the shortest path from $\vz$ to $\vz'$ in $D(S)$.
Let $\vw=\vz\vw'$ be the corresponding $q$-ary word, $p(\vz,\vz')=|\vw'|$ be the length of the path and 
$\vu(\vz\to\vz')=\vp(\vw;S)$ be its profile vector.
Observe that both the length $p(\vz,\vz')$ and the vector  $\vu(\vz\to\vz')$
are independent of $n$.

We demonstrate the following inequality:
\begin{equation}\label{eq:notclosed} 
|\E(n-p(\vz,\vz');S)|\le |\pQ(n;S,\vz\to\vz')|\le |\pQb(n+p(\vz,\vz');S)|.
\end{equation}

First, we construct a map $\phi_1:\E(n-p(\vz,\vz');S)\to\pQ(n;S,\vz\to\vz')$ defined by $\vu\mapsto \vu+\vu(\vz\to\vz')$.
Now, since $\vu\in\E(n-p(\vz,\vz');S)$, we can assume that $\vu$ is the profile vector of a word $\vx$ 
of length $(n-p(\vz,\vz'))$ that starts and ends with $\vz$. 
Then $\vx\vw'$ is a word of length $n$ whose profile vector lies in $\pQ(n;S,\vz\to\vz')$.
Hence, $\phi_1$ is a well-defined map and it can be easily shown that the map is injective.
Therefore, the first inequality holds.

Similarly, for the other inequality, we consider the map 
$\phi_2:\pQ(n;S,\vz\to\vz')\to\pQb(n+p(\vz',\vz);S);\vu\mapsto\vu+\vu(\vz'\to\vz)$. 
As before, 
let $\vu$ be the profile vector of a word $\vx$ 
of length $n$ that starts with $\vz$ and ends with $\vz'$. 
Let  $\vw=\vz'\vw'$  be the $q$-ary word corresponding to the shortest path from $\vz'$ to $\vz$ in $D(S)$.
Concatenating $\vx$ with $\vw'$ yields $\vx\vw'$, which is a word of length $n+p(\vz',\vz)$. The profile vector 
of this word lies in $\pQb(n+p(\vz',\vz);S)$.
Hence, $\phi_2$ is a well-defined map and is injective.

Combining \eqref{eq:notclosed} with the fact that $|\E(n;S)|=\Theta'\left(n^{|S|-|V(S)|}\right)$ and 
$|\pQb(n;S)|=\Theta'\left(n^{|S|-|V(S)|}\right)$ yields the result 
$|\pQ(n;S,\vz,\vz')|=\Theta'\left(n^{|S|-|V(S)|}\right)$.

Next, we demonstrate that $|\pQ(n;S)|=\Theta'\left(n^{|S|-|V(S)|}\right)$, and observe that the other asymptotic equalities may be 
derived similarly. 
Let $P\triangleq\max\{ p(\vz,\vz'): { \vz,\vz'\in V(S)}\}$ be the diameter of the digraph $D(S)$.
Then, 
\begin{align*}
|\pQ(n;S)|  =\sum_{\vz,\vz'\in V(S)} |\Q(n;S,\vz,\vz')|
& \le \sum_{\vz,\vz'\in V(S)}|\Qb(n+p(\vz',\vz);S)|\\
& \le |V(S)|^2 |\Qb(n+P;S)|=O\left(n^{|S|-|V(S)|}\right).
\end{align*}
Since $\Q(n;S)\ge\Qb(n;S)=\Omega'\left(n^{|S|-|V(S)|}\right)$, the corollary follows.
\end{proof}

In the special case where $S=\bbracket{q}^\ell$,
Jacquet \etal{} demonstrated a stronger version of Theorem \ref{thm:closed} using analytic combinatorics.
In addition, using a careful analysis similar to the proof of Corollary \ref{cor:notclosed},
Jacquet \etal{} also provided a tighter bound for $|\pQ(n;q,\ell)|$ for the case $\ell=2$.
Note that $f(n)\sim g(n)$ stands for $\lim_{n\to\infty}f(n)/g(n)=1$.

\begin{thm}[Jacquet \etal{} \cite{Jacquet.etal:2012}]
\label{jacquet}
Fix $q,\ell$. Let $\E(n;\bbracket{q}^\ell)$, $\F(n;\bbracket{q}^\ell)$, $\pQ(n;q,\ell)$ and $\pQb(n;q,\ell)$ be defined as above.
Then
\begin{equation}\label{eq:asym}
|\E(n;\bbracket{q}^\ell)|\sim |\F(n;\bbracket{q}^\ell)|\sim |\pQb(n,q,\ell)|\sim c(q,\ell)n^{q^{\ell}-q^{\ell-1}},
\end{equation}
where $c(q,\ell)$ is a constant. Furthermore, when $\ell=2$, 
we have $|\pQ(n;q,2)|=(q^2-q+1)|\pQb(n;q,2)|(1-O(n^{-2q}))$.
\end{thm}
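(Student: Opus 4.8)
The plan is to prove the two assertions by different routes: the chain of $\sim$-relations in \eqref{eq:asym} follows from the Ehrhart-theoretic machinery already used for Theorem~\ref{thm:closed}, while the sharper $\ell=2$ statement follows from decomposing words by their pair of endpoints and counting the resulting lattice points carefully.

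For \eqref{eq:asym}, set $m=n-\ell+1$ and view $\F(n;\bbracket q^\ell)$ as the lattice points of the dilate $m\PP$ of the rational polytope $\PP=\{\vu\ge\vzero:\vB(D(q,\ell))\vu=\vzero,\ \vone^T\vu=1\}$, considered inside the affine space it spans. Since every node of $D(q,\ell)$ has in-degree and out-degree $q$, we have $\vB(D(q,\ell))\vone=\vzero$, so $\tfrac1{q^\ell}\vone$ is a strictly positive — hence relatively interior — point of $\PP$; thus $\PP$ is full-dimensional in its affine hull of dimension $D\triangleq q^\ell-q^{\ell-1}$ and has positive relative volume. By Ehrhart's theorem, $m\mapsto|m\PP\cap\ZZ^{|S|}|$ is a quasipolynomial of degree $D$ whose leading coefficient equals the relative volume of $\PP$, a genuine constant; hence $|\F(n;\bbracket q^\ell)|\sim c(q,\ell)\,n^D$ with $c(q,\ell)\triangleq\operatorname{vol}(\PP)>0$. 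Moreover $|\F(n;\bbracket q^\ell)|-|\E(n;\bbracket q^\ell)|$ counts the lattice points on the relative boundary of $m\PP$, a union of finitely many faces of dimension $\le D-1$, so it is $O(n^{D-1})$ and $|\E(n;\bbracket q^\ell)|\sim c(q,\ell)\,n^D$ as well. Sandwiching \eqref{inequalities} between these estimates yields $|\pQb(n,q,\ell)|\sim c(q,\ell)\,n^D$.

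For the $\ell=2$ refinement, classify a word $\vx\in\bbracket q^n$ by the pair $(x_1,x_n)$; its walk in $D(q,2)$ then runs from node $x_1$ to node $x_n$. By Euler's theorem for digraphs, a profile vector $\vu$ of $L_1$-weight $n-1$ is realized by such a word iff its support is connected and $\vB(D(q,2))\vu=\ve_{x_n}-\ve_{x_1}$; hence the imbalance $\vB(D(q,2))\vu$ determines $(x_1,x_n)$ unless it is $\vzero$, in which case $x_1=x_n$ and $\vu\in\pQb(n;q,2)$. Therefore $\pQ(n;q,2)$ is the disjoint union of $\pQb(n;q,2)$ and the $q(q-1)$ sets $\R_{z,z'}(n)$ of realizable profiles with imbalance $\ve_{z'}-\ve_z$ ($z\ne z'$). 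For each $(z,z')$ the set of lattice points satisfying the corresponding affine constraints alone is, for large $n$, an integer translate of a dilate of $\PP$ and hence has $\sim\operatorname{vol}(\PP)\,n^D$ points, while imposing connectivity discards only $O(n^{D-1})$ of them (a disconnected support lies on a proper face); thus $|\R_{z,z'}(n)|\sim c(q,2)\,n^{q^2-q}$, and summing over the $1+q(q-1)$ flow classes gives the leading relation $|\pQ(n;q,2)|\sim(q^2-q+1)\,|\pQb(n;q,2)|$.

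Upgrading this to the stated relative error $O(n^{-2q})$ is the main obstacle. It requires passing from the affine lattice-point counts to the realizable counts with a sharp remainder — simultaneously controlling the disconnected-support profiles (which are not realizable) and the discrepancies between the polytopes attached to the different flow classes — and hence tracking the lower-order terms of the relevant quasipolynomials and exhibiting the cancellations that bring $|\pQ(n;q,2)|-(q^2-q+1)|\pQb(n;q,2)|$ down to the claimed order. This is precisely the refined analysis of Jacquet \etal{}, which is most cleanly executed by analytic combinatorics: one writes $|\F(n)|$, $|\E(n)|$ and the endpoint-restricted counts as Cauchy coefficient integrals of $\prod_{e=(\vv,\vv')}\bigl(1-t\,z_{\vv'}/z_\vv\bigr)^{-1}$ and performs a multivariate saddle-point expansion as $t\to1$ with the auxiliary variables tending to $1$; this route also yields an explicit value of the constant $c(q,\ell)$. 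Given the Ehrhart tools already developed in this paper, the argument above is the economical way to obtain \eqref{eq:asym}.
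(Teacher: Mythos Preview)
The paper does not prove this theorem: it is quoted from Jacquet \etal{}~\cite{Jacquet.etal:2012}. The only part the paper re-derives in its own framework is \eqref{eq:asym}, obtained as the special case $S=\bbracket q^\ell$ of Corollary~\ref{cor:aperiodic}. Your route to \eqref{eq:asym} is the same Ehrhart-theoretic one, with one difference worth noting: you invoke directly the standard fact that the leading coefficient of the Ehrhart quasipolynomial of a rational polytope equals its relative volume and is therefore a genuine constant, whereas the paper (Proposition~\ref{prop:leading} and Lemma~\ref{lem:monotone} in Appendix~\ref{app:aperiodic}) instead argues constancy by showing $L_{\PP(S)}$ is monotone, using the existence of a loop in $D(q,\ell)$. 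Your argument is shorter and does not need the loop hypothesis; the paper's monotonicity argument is what Corollary~\ref{cor:aperiodic} is packaging. One small slip: the polytope attached to a fixed endpoint pair $(z,z')$ is not literally an integer translate of a dilate of $\PP$, but it does have the same relative volume, which is all you need for the leading-order claim.

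For the $\ell=2$ refinement with relative error $O(n^{-2q})$, the paper offers no proof either; it simply attributes the bound to Jacquet \etal{} and remarks that it comes from ``a careful analysis similar to the proof of Corollary~\ref{cor:notclosed}''. Your endpoint decomposition correctly recovers the leading relation $|\pQ(n;q,2)|\sim(q^2-q+1)|\pQb(n;q,2)|$, and you are right that pushing the remainder down to $O(n^{-2q})$ requires the analytic-combinatorics analysis of~\cite{Jacquet.etal:2012} rather than anything developed here. On this point you and the paper are in the same position: both defer to the original source.
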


Next, we extend Theorem \ref{thm:closed} to provide estimates on $\Qb(n;S)$ and $\Q(n;S)$ for general $S$,
where $D(S)$ is not necessarily strongly connected. 

Given $D(S)$, let $V_1,V_2,\ldots,V_I$ be a partition of $V(S)$ such that 
the induced subgraph $(V_i,S_i)$ is strongly connected for all $i\in [I]$.
Define $\delta_i\triangleq |S_i|-|V_i|$. Then by Theorem \ref{thm:closed}, 
there are $\Theta'(n^{\delta_i})$ closed words belonging to $\Qb(n;S_i)$ and therefore, $\Qb(n;S)$.
Suppose $\Db=\max\{\delta_i: i\in I\}$. 
Then $\Qb(n;S)=\Omega'(n^\Db)$.

On the other hand, any closed word $\vx$ in $\Qb(n;S)$ corresponds to a closed walk in $D(S)$
and a closed walk in $D(S)$ must belong to some strongly connected component $(V_i,S_i)$.
In other words, $\vx$ must belong to $\Qb(n;S_i)$ for some $i\in[I]$. Hence, we have  $|\Qb(n;S)|=O(n^\Db)$. 

\begin{cor}\label{cor:general:closed}
Given $D(S)$, let $V_1,V_2,\ldots,V_I$ be a partition of $V(S)$ such that 
the induced subgraph $(V_i,S_i)$ is strongly connected for all $i\in I$.
Define $\Db\triangleq \max\{|S_i|-|V_i|: i\in I\}$.
Then $|\Qb(n;S)|=\Theta'(n^{\Db})$.
\end{cor}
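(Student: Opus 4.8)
The plan is to establish the corollary by combining Theorem~\ref{thm:closed} applied to each strongly connected component with two bounding arguments --- one for the lower bound $|\Qb(n;S)|=\Omega'(n^{\Db})$ and one for the upper bound $|\Qb(n;S)|=O(n^{\Db})$ --- exactly as sketched in the paragraphs immediately preceding the statement. Indeed, most of the argument is already written out there; what remains is to assemble it cleanly and to be careful about one subtlety in the lower bound, namely that any closed word living \emph{entirely} inside a single component $(V_i,S_i)$ is genuinely a closed word in $\Qb(n;S)$, so that the count $\Theta'(n^{\delta_i})$ from Theorem~\ref{thm:closed} injects into $|\Qb(n;S)|$.

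For the lower bound, I would fix an index $i^\star$ achieving $\delta_{i^\star}=\Db$. The induced subgraph $(V_{i^\star},S_{i^\star})$ is strongly connected by hypothesis, so Theorem~\ref{thm:closed} applies to it and gives $|\pQb(n;S_{i^\star})| = \Theta'(n^{\Db})$, where the implicit period $\lambda_{i^\star}$ (the lcm of the cycle lengths in that component) depends only on $S_{i^\star}$ and hence only on $\ell$, as required by the $\Theta'$ convention. Every closed word whose $\ell$-grams all lie in $S_{i^\star}$ is in particular a closed word whose $\ell$-grams lie in $S$, and distinct profile vectors remain distinct; thus $\pQb(n;S_{i^\star})$ embeds in $\pQb(n;S)$, giving $|\Qb(n;S)| = |\pQb(n;S)| \ge |\pQb(n;S_{i^\star})| = \Omega'(n^{\Db})$. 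The only mild care needed here is to pick the single arithmetic progression modulo $\lambda_{i^\star}$ coming from Theorem~\ref{thm:closed} so that the $\Omega'$ bound is literally valid along it.

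For the upper bound, I would argue that every closed word $\vx\in\Qb(n;S)$ traces a closed walk in $D(S)$, and a closed walk uses only arcs lying within a single strongly connected component of $D(S)$: if it used an arc leaving $V_i$ and never returned, it could not close up, so the set of nodes visited must lie in one of the $V_i$. Hence $\vx\in\Qb(n;S_i)$ for that $i$, so $\pQb(n;S)=\bigcup_{i\in I}\pQb(n;S_i)$, and therefore $|\Qb(n;S)| \le \sum_{i\in I} |\Qb(n;S_i)| = \sum_{i\in I}\Theta'(n^{\delta_i}) = O(n^{\Db})$, where we used that $I$ and the $\delta_i$ are constants independent of $n$ and that $\delta_i\le\Db$ for all $i$. (One should note that here the $O$ is an honest $O$ with no congruence restriction, since each summand is $O(n^{\delta_i})$ unconditionally; combined with the $\Omega'$ from the lower bound this yields the stated $\Theta'$.)

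Combining the two bounds gives $|\Qb(n;S)|=\Theta'(n^{\Db})$, which is the claim. I do not expect a genuine obstacle: the theorem does the heavy lifting, and the remaining work is bookkeeping. The one place to be slightly attentive is consistency of the periods/progressions across the pieces --- making sure the single $\Omega'$ progression from the dominant component is compatible with the global $O$ bound --- but since $O$ imposes no progression constraint, this reconciles trivially.
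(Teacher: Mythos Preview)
Your proposal is correct and follows essentially the same approach as the paper, which gives precisely this argument in the two paragraphs immediately preceding the corollary. Your write-up is in fact slightly more careful than the paper's in justifying the embedding of profile vectors and in noting that the $O$ bound holds unconditionally (no congruence restriction), so that it combines cleanly with the $\Omega'$ from the dominant component.
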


\begin{exa}
Let $q=4$, $\ell=2$ $S=\{00,01,10,12,23,32,33\}$. Then $D(S)$ is as shown below. 
\vspace{3mm}

\centerline{
\xymatrix@=20mm{
0 \ar@(u,l)[]_{00} \ar@/^/[r]^{01} & 
1 \ar[r]^{12}\ar@/^/[l]^{10} & 
2 \ar@/^/[r]^{23} & 
3 \ar@/^/[l]^{32}\ar@(u,r)[]^{33}
}}

\vspace{3mm}
We have two strongly connected components, namely, $V_1=\{0,1\}$ and $V_2=\{2,3\}$.
So, $(V_1,S_1=\{00,01,10\})$ and $(V_2,S_2=\{23,32,33\})$ are both strongly connected digraphs with
$|\pQb(n;S_1)|=|\pQb(n;S_2)|=\floor{n/2}+1=\Theta'(n)$.
Hence, $|\pQb(n;S)|=|\pQb(n;S_1)|+|\pQb(n;S_2)|=\Theta'(n)$, in agreement with Corollary \ref{cor:general:closed}.

On the other hand, let us enumerate the elements of $\Q(n;S)$ or $\pQ(n;S)$. Let $\vu\in\pQ(n;S)$.
If $u_{12}=0$, then $\vu$ belongs to $\pQ(n;S_1)$ or $\pQ(n;S_2)$.
Otherwise, $u_{12}=1$ and  we have $\vu=\vu_1+\vchi(12)+\vu_2$ 
with $\vu_1\in \pQ(n_1;S_1, *\to 1)$, $\vu_2\in \pQ(n_2;S_2, 2\to *)$ and $n_1+n_2+1=n-\ell+1$. 

Now, $|\pQ(n;S_1)|=|\pQ(n;S_2)|=n+\ceiling{n/2}+1$ and $|\pQb(n;S_1, *\to 1)|=|\pQb(n;S_2, 2\to *)|=n+1$.
By setting $|\pQ(0;S_1, *\to 1)|=|\pQ(0;S_2, 2\to *)|=1$, we arrive
\begin{align*}
|\pQ(n;S)| &=|\pQ(n;S_1)|+|\pQ(n;S_2)|+\sum_{n_1=0}^{n-1} |\pQ(n_1;S_1,*\to 1)||\pQ(n-n_1-1;S_2,2\to *)|\\
& =2n+2\ceiling{n/2}+2+\sum_{n_1=0}^{n-1} (n_1+1)(n-n_1)\\
&= 2n+2\ceiling{n/2}+2+\frac16 n(n+1)(n+2)=\Theta'(n^3).
\end{align*}
Therefore, when $D(S)$ is not strongly connected, it is not necessarily true that $|\pQb(n;S)|$ and $|\pQ(n;S)|$ differ only by a constant factor. Furthermore, we can extend the methods in this example to obtain $|\pQ(n;S)|$
for digraphs that are not necessarily strongly connected.
 \end{exa}
 
To determine $|\pQ(n;S)|$, we construct an auxiliary weighted digraph with nodes $v_1,v_2,\ldots,v_I,v_{\rm source}$ and $v_{\rm sink}$.
If there exists an arc from the component $V_i$ to component $V_j$, $i,j\in[I]$, 
we add an arc from $v_i$ to $v_j$.
Further, we add an arc from $v_{\rm source}$ to $v_i$ and  from $v_i$ to $v_{\rm sink}$ for all $i\in[I]$.
The arcs leaving $v_{\rm source}$ have zero weight. For all $i\in [I]$, the arcs leaving $v_i$ have weight $\delta_i=|S_i|-|V_i|$ if their terminal node is $v_{\rm sink}$, and weight $\delta_i+1$ otherwise.
 (see Fig. \ref{fig:auxiliary} for the transformation).

\begin{figure}
\begin{center}
\begin{tabular}{ >{\centering\arraybackslash}m{1in} >{\centering\arraybackslash}m{1in} >{\centering\arraybackslash}m{1in}}
\xymatrix@=7mm{
*\txt{} &V_2 \ar[dr]\ar[dd] & *\txt{} \\
V_1 \ar[ur]\ar[dr] & *\txt{} & V_3 \\
*\txt{} &V_4 & *\txt{} 
}
&
$\longrightarrow$
&
\xymatrix@R=8mm@C=25mm{
*\txt{} &
v_1 \ar[d]|{\delta_1+1}\ar@/^2.5pc/[dd]|(0.25){\delta_1+1}\ar[dr]|{\delta_1} & 
*\txt{} \\
v_{\rm source}\ar[ur]|{0}\ar[r]|{0}\ar[dr]|{0}\ar[ddr]|{0} &
v_2 \ar[d]|{\delta_2+1}\ar@/_2.5pc/[dd]|(0.25){\delta_2+1}\ar[r]|{\delta_2} & 
v_{\rm sink} \\
*\txt{} &
v_3 \ar[ur]|{\delta_3} & 
*\txt{} \\
*\txt{} &
v_4 \ar[uur]|{\delta_4} & 
*\txt{} \\
}
\end{tabular}
\end{center}

\caption{Constructing a weighted digraph from the connected components of $D(S)$.}
\label{fig:auxiliary}
\end{figure}
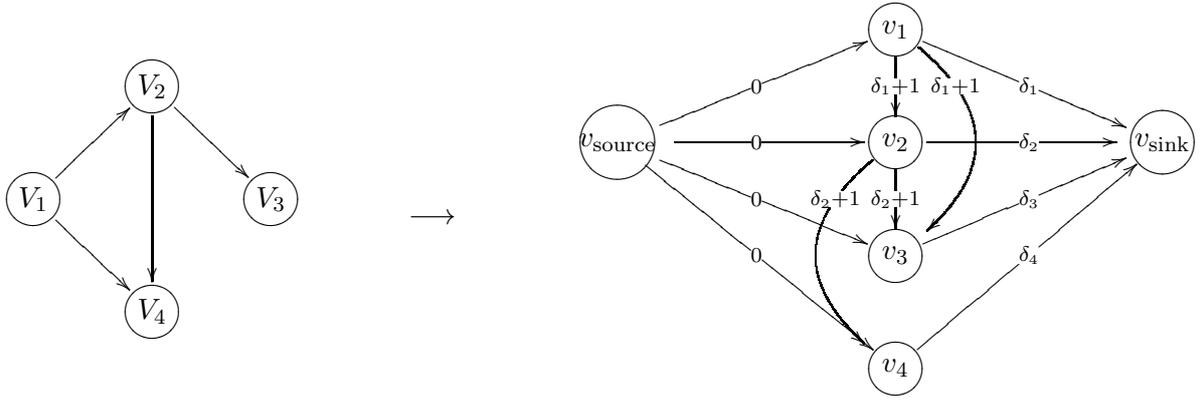

Let $D'$ be the resulting digraph and observe that $D'$ is acyclic.
Hence, we can find the longest weighted path from $v_{\rm source}$ to $v_{\rm sink}$ in linear time
(see Ahuja \etal{} \cite[Ch. 4]{Ahuja.etal:1993}). Furthermore, suppose that $\Delta$ is the weight of the longest path.
Then the next corollary states that $|\pQ(n;S)|=\Theta'(n^\Delta)$.

\begin{cor}\label{cor:generalnotclosed}
Given $D(S)$, let $V_1,V_2,\ldots,V_I$ be a partition $V(S)$ such that 
the induced subgraph $(V_i,S_i)$ is strongly connected for all $i\in I$.
Construct $D'$ as above (see Fig. \ref{fig:auxiliary}) and let $\Delta$ be the weight of the longest weighted path 
from  $v_{\rm source}$ to $v_{\rm sink}$.
Then $|\pQ(n;S)|=\Theta'(n^{\Delta})$.
\end{cor}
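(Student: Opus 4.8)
The plan is to generalize the counting argument from the worked Example and from Corollary~\ref{cor:notclosed}: decompose each word in $\Q(n;S)$ according to the sequence of strongly connected components that its walk in $D(S)$ visits, bound the contribution of each such ``component itinerary'' by a product of counts of the form $|\pQ(\cdot;S_i,\cdot\to\cdot)|$ and $|\pQb(\cdot;S_i,\cdot\to\cdot)|$, and then recognize the dominant itinerary as a longest weighted path in $D'$.

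First I would set up the decomposition. Any walk $W$ in $D(S)$, read left to right, visits the strongly connected components in some order $V_{i_1},V_{i_2},\ldots,V_{i_k}$, where consecutive components are joined by a single ``bridge'' arc (an arc of $S$ whose endpoints lie in different components), and where $(V_{i_j},V_{i_{j+1}})$ is an arc of the condensation of $D(S)$; since the condensation is acyclic, the $i_j$ are distinct, so $k\le I$. Conversely, fixing such an itinerary and fixing the bridge arcs $a_1,\ldots,a_{k-1}$ used between consecutive components (finitely many choices, independent of $n$), the walk restricted to each component $V_{i_j}$ is an ordinary walk inside the strongly connected digraph $D(S_{i_j})$, starting at the head of the incoming bridge arc (or at an arbitrary node, for $j=1$) and ending at the tail of the outgoing bridge arc (or at an arbitrary node, for $j=k$). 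If that internal walk has $n_j$ arcs, then $\sum_j n_j + (k-1) = n-\ell+1$. Hence, summing over itineraries, over the $O(1)$ choices of bridge arcs, and over the compositions $n_1+\cdots+n_k = n-\ell+1-(k-1)$, the profile vector of $W$ is determined (via addition of the fixed bridge-arc incidence vectors) by the tuple of internal profile vectors, one from each $\pQ(n_j;S_{i_j}, \vz_j\to\vz_j')$ (or the $*$-variants at the ends).

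Next I would apply the estimates already available. By Theorem~\ref{thm:closed} and Corollary~\ref{cor:notclosed}, each factor $|\pQ(n_j;S_{i_j},\cdot\to\cdot)|$ is $\Theta'(n_j^{\delta_{i_j}})$ with $\delta_{i_j}=|S_{i_j}|-|V_{i_j}|$, uniformly in the choice of endpoints. The number of distinct profile-vector tuples for a given itinerary and composition is therefore $\Theta'\bigl(\prod_{j=1}^k n_j^{\delta_{i_j}}\bigr)$. Summing over all compositions of $m\triangleq n-\ell+1-(k-1)$ into $k$ parts yields, by a standard estimate (e.g.\ comparing with a Riemann sum, or by induction on $k$), an order of $\Theta(m^{\,k-1+\sum_j \delta_{i_j}})$; note the itinerary $V_{i_1}\to\cdots\to V_{i_k}$ has weight exactly $k-1+\sum_{j=1}^k \delta_{i_j}$ in $D'$ (each of the $k-1$ internal arcs contributes $\delta_{i_j}+1$ as we leave $V_{i_j}$, and the final arc into $v_{\rm sink}$ contributes $\delta_{i_k}$; the zero-weight arcs out of $v_{\rm source}$ contribute nothing). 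Taking the maximum over all itineraries gives the exponent $\Delta$, the weight of the longest $v_{\rm source}\to v_{\rm sink}$ path, so $|\pQ(n;S)| = O(n^\Delta)$ after absorbing the $O(1)$ number of itineraries and bridge-arc choices. For the matching lower bound, one picks an itinerary achieving $\Delta$, fixes one valid choice of bridge arcs, and restricts to compositions with each $n_j\ge c n$ for a suitable constant $c>0$ and $n_j$ in the correct residue class mod the relevant cycle-length lcm $\lambda_{i_j}$; the injectivity of the concatenation map (as in the proofs of $\phi_1,\phi_2$ in Corollary~\ref{cor:notclosed}) guarantees distinct words give distinct profile vectors, yielding $\Omega'(n^\Delta)$.

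The main obstacle I anticipate is bookkeeping rather than conceptual: making the decomposition genuinely a bijection (or an injection in each direction with controlled multiplicity) requires care that each word determines its itinerary, its bridge arcs, and its split-point positions unambiguously, and that the map from profile-vector tuples back to words does not collapse too much — exactly the subtlety flagged in the Example, where $\pQb$ and $\pQ$ differ by more than a constant. Handling the $*$-endpoints at $j=1$ and $j=k$, and verifying uniformity of the $\Theta'$ constants over the (finitely many) endpoint pairs so they can be pulled out of the sum over compositions, are the places where one must be precise. The congruence constraints hiding inside each $\Theta'$ (the periods $\lambda_{i_j}$) must also be shown simultaneously satisfiable for $n$ in a single residue class mod $\mathrm{lcm}_j \lambda_{i_j}$, which is routine once stated but easy to gloss over.
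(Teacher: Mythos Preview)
Your proposal is correct and follows essentially the same route as the paper: decompose each word by the sequence of strongly connected components its walk visits, apply the per-component estimate $|\pQ(n_j;S_{i_j},\cdot\to\cdot)|=\Theta'(n_j^{\delta_{i_j}})$ from Corollary~\ref{cor:notclosed}, sum over compositions to pick up the extra $n^{k-1}$, and identify the resulting exponent with the weight of the corresponding $v_{\rm source}\to v_{\rm sink}$ path in $D'$. The paper's proof is terser on the bookkeeping you flag and handles the lower bound by a one-line appeal to Jensen's inequality rather than your explicit restriction to balanced compositions, but the structure is the same.
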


\begin{proof}
Let $\vu\in\pQ(n;S)$. Then there exists a set of indices $\{i_1, i_2,\ldots, i_t\}\subseteq [I]$, set of vectors 
$\vu_1,\vu_2,\ldots,\vu_{t}$, $\ve_{1},\ve_{2},\ldots, \ve_{t-1}$, and integers $n_1,n_2,\ldots,n_t$
 such that the following hold:
 \begin{itemize}
 \item $\vu=\vu_1+\ve_1+\vu_1+\ve_1+\cdots+\ve_{t-1}+\vu_t$;
 \item for $j\in [t-1]$, $\ve_j$ is the incidence vector of some arc $(\vz_j,\vz_{j+1})$ in $D(S)$ and $\vz_j\in S_{i_j}$;
 \item $\vu_1\in\pQ(n_1;S_{i_1},*\to \vz_1)$, $\vu_{t}\in\pQ(n_t;S_{i_t},\vz_{t}\to *)$ and 
 $\vu_j\in\pQ(n_j;S_{i_j},\vz_{j}\to \vz_{j+1})$ for $2\le j\le t-1$;
 \item $(t-1)+\sum_{j=1}^t n_j=n-\ell+1$;
 \item $v_{\rm source}v_{i_1}v_{i_2}\cdots v_{i_t}v_{\rm sink}$ is a path in $D'$.
 \end{itemize}
 
 For a fixed subset $\{i_1, i_2,\ldots, i_t\}\subseteq [I]$, write $n'=(n-\ell+1)-(t-1)$.
 Observe that 
 \begin{align*}
 &\sum_{\sum n_j=n'}
 |\pQ(n_1;S_{i_1},*\to \vz_1)||\pQ(n_t;S_{i_t},\vz_{t}\to *)|\prod_{j=2}^{t-1}|\pQ(n_j;S_{i_j},\vz_{j}\to \vz_{j+1})|\\
 &= \sum_{n_1=0}^{n'}\sum_{n_2=0}^{n'-n_1}\cdots \sum_{n_{t-1}=0}^{n'-n_1-\cdots-n_{t-2}} 
 O\left(n^{\delta_{i_1}+\delta_{i_1}+\cdots+\delta_{i_t}}\right)
 =O\left(n^{\delta_{i_1}+\delta_{i_1}+\cdots+\delta_{i_t}+(t-1)}\right)=O(n^\Delta).
 \end{align*}
 The last equality follows from the fact that $(t-1)+\sum_{j=1}^t \delta_{i_j}$ measures the weight of $v_{\rm source}v_{i_1}v_{i_2}\cdots v_{i_t}v_{\rm sink}$ and this value is upper bounded by $\Delta$.
 Since the number of subsets of $[I]$ is independent of $n$, we have $|\pQ(n;S)|=O(n^\Delta)$.
 
Conversely, suppose $v_{\rm source}v_{i_1}v_{i_2}\cdots v_{i_t}v_{\rm sink}$ is a path in $D'$ of maximum weight $\Delta$. Define $\vz_j$, $S_{i_j}$, $n_j$ and $n'$ as before. We then have
 \begin{align*}
 |\pQ(n;S)|
 &\ge \sum_{\sum n_j=n'}
 |\pQ(n_1;S_{i_1},*\to \vz_1)||\pQ(n_t;S_{i_t},\vz_{t}\to *)|\prod_{j=2}^{t-1}|\pQ(n_j;S_{i_j},\vz_{j}\to \vz_{j+1})|\\
 &\ge \sum_{\sum n_j=n'}
 C_1 n_1^{\delta_{i_1}}n_2^{\delta_{i_1}}\cdots n_t^{\delta_{i_t}}\\
& \ge \sum_{\sum n_j=n'} C_2 n^{\delta_{i_1}+\delta_{i_1}+\cdots+\delta_{i_t}} \mbox{   (by Jensen's inequality)}\\
 & \ge C_3 n^{\delta_{i_1}+\delta_{i_1}+\cdots+\delta_{i_t}+(t-1)}=C_3 n^\Delta,\\
 \end{align*}
\noindent where $C_1$, $C_2$ and $C_3$ are positive constants. Therefore,  $|\pQ(n;S)|=\Omega'(n^\Delta)$, completing the proof.
\end{proof}

\section{Ehrhart Theory and Proof of Theorem \ref{thm:closed}}\label{sec:enumerate}

We assume $D(S)$ to be strongly connected and provide a detailed proof
of Theorem \ref{thm:closed}.  For this purpose, In the next
subsection, we introduce some fundamental results from Ehrhart theory.
Ehrhart theory is a natural framework for enumerating profile vectors
and one may simplify the techniques of \cite{Jacquet.etal:2012}
significantly and obtain similar results for a more general family of
digraphs.  Furthermore, Ehrhart theory also allows us to extend the
enumeration procedure to profiles at a prescribed distance.

\subsection{Ehrhart Theory}

As hinted by \eqref{sol_polytope} and \eqref{sol_interior},
to enumerate codewords of interest, we need to enumerate certain sets of integer points or 
lattice points in polytopes.
The first general treatment of the theory of enumerating lattice points in polytopes was described by Ehrhart~\cite{Ehrhart:1962},
and later developed by Stanley from a commutative-algebraic point of view
(see \cite[Ch. 4]{Stanley:2011}).
Here, we follow the combinatorial treatment of Beck and Robins~\cite{Beck.Robins:2007}.

Consider any {\em rational polytope} $\PP$ given by 
\[\PP\triangleq\{\vu\in\RR^n: \vA\vu\le\vb\},\]
for some integer matrix $\vA$ and some integer vector $\vb$.
A rational polytope is {\em integer} if all its vertices are integral.
The {\em lattice point enumerator} $L_\PP(t)$ of $\PP$ is given by 
\[L_\PP(t)\triangleq |\ZZ^n\cap t\PP|,\mbox{ for all }t\in \ZZ_{>0}.\]

Ehrhart~\cite{Ehrhart:1962} introduced the lattice point enumerator for rational polytopes and 
showed that $L_\PP(t)$ is a quasipolynomial of degree $D$, 
where $D$ is given by the dimension of the polytope $\PP$.
Here, we define the {\em dimension} of a polytope to be the dimension of the affine space 
spanned by points in $\PP$.
A formal statement of Ehrhart's theorem is provided below.

\begin{thm}[{Ehrhart's theorem for polytopes \cite[Thm 3.8 and 3.23]{Beck.Robins:2007}}]
\label{thm:ehrhart}
If $\PP$ is a rational convex polytope of dimension $D$,
then $L_\PP(t)$ is a quasipolynomial of degree $D$.
Its period divides the least common multiple of 
the denominators of the coordinates of the vertices of $\PP$.
Furthermore, if $\PP$ is integer,
then $L_\PP(t)$ is a polynomial of degree $D$.
\end{thm}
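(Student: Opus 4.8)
The plan is to prove Ehrhart's theorem by the classical cone method, reducing first to simplices and then analyzing the Ehrhart series $\sum_{t\ge 0}L_\PP(t)z^t$. \textbf{Reduction to simplices.} Triangulate $\PP$ into $D$-dimensional rational simplices $\Delta_1,\dots,\Delta_m$ using only the vertices of $\PP$, so that the $\Delta_j$ have pairwise disjoint interiors and every intersection $\Delta_{j_1}\cap\dots\cap\Delta_{j_k}$ is again a rational simplex all of whose vertices lie among the vertices of $\PP$. Inclusion–exclusion applied to $\PP=\bigcup_j\Delta_j$ expresses $L_\PP(t)$ as an integer combination of the enumerators $L_{\Delta_{j_1}\cap\dots\cap\Delta_{j_k}}(t)$. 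Since a finite sum of quasipolynomials of degree $\le D$ with periods dividing a common integer $P$ is again such a quasipolynomial, it suffices to prove: for a rational simplex $\Delta$ of dimension $e$, $L_\Delta(t)$ is a quasipolynomial of degree exactly $e$ whose period divides the $\lcm$ of the denominators of the coordinates of the vertices of $\Delta$ — and, by the choice of triangulation, no denominator other than those of the vertices of $\PP$ is ever introduced.

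\textbf{The cone and its fundamental parallelepiped.} Given a rational simplex $\Delta\subseteq\RR^e$ with vertices $v_1,\dots,v_{e+1}$, lift to $\RR^{e+1}$ via $w_i=(v_i,1)$ and set $K=\{\sum_i\lambda_i w_i:\lambda_i\ge 0\}$. The lattice points of $K$ at height $t$ (last coordinate $t$) are exactly the points $(p,t)$ with $p\in\ZZ^e\cap t\Delta$, so the Ehrhart series of $\Delta$ equals the height-graded generating function $\sigma_K(z):=\sum_{x\in K\cap\ZZ^{e+1}}z^{x_{e+1}}$. Replace each $w_i$ by the primitive lattice point $w_i^\ast=q_i w_i$ on its ray, where $q_i$ is the least positive integer with $q_i v_i\in\ZZ^e$ (the denominator of $v_i$); then $w_i^\ast$ has height $q_i$, generates the same cone, and linear independence of the $w_i^\ast$ gives a \emph{unique} decomposition $x=y+\sum_i n_i w_i^\ast$ of each $x\in K$ with $y$ in the half-open parallelepiped $\Pi=\{\sum_i\mu_i w_i^\ast:0\le\mu_i<1\}$ and $n_i\in\ZZ_{\ge 0}$; if $x\in\ZZ^{e+1}$ then automatically $y\in\Pi\cap\ZZ^{e+1}$. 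Summing over the free monoid generated by the $w_i^\ast$ then yields
\[
\sigma_K(z)=\Bigl(\sum_{y\in\Pi\cap\ZZ^{e+1}}z^{\,y_{e+1}}\Bigr)\prod_{i=1}^{e+1}\frac{1}{1-z^{q_i}},
\]
whose numerator is an honest polynomial, since $\Pi$ contains finitely many lattice points, each of height $<\sum_i q_i$.

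\textbf{Reading off the quasipolynomial.} The denominator of $\sigma_K$ divides $(1-z^Q)^{e+1}$ with $Q=\lcm(q_1,\dots,q_{e+1})$, and over this common denominator the numerator has degree $<\sum_i q_i+\sum_i(Q-q_i)=(e+1)Q$, so $\sigma_K$ is a proper rational function whose poles are $Q$-th roots of unity of order $\le e+1$. By the standard partial-fraction dictionary between such rational functions and quasipolynomials — a pole of order $k$ at a primitive $d'$-th root of unity contributes a degree-$(k-1)$ term in $t$ with coefficient periodic of period $d'\mid Q$ — the sequence $L_\Delta(t)$ agrees, for all $t\ge 0$, with a quasipolynomial of degree $\le e$ and period dividing $Q$. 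That the degree is exactly $e$ follows by comparison with volume: $L_\Delta(t)/t^e\to\mathrm{vol}_e(\Delta)>0$, so the periodic leading coefficient is the nonzero constant $\mathrm{vol}_e(\Delta)$. Back in the triangulation, the top-dimensional $\Delta_j$ contribute degree exactly $D$ with positive leading coefficients while the proper intersections are lower-dimensional, so there is no cancellation at the top: $L_\PP(t)$ is a quasipolynomial of degree exactly $D$ and period dividing the $\lcm$ of the denominators of the coordinates of the vertices of $\PP$. If every such vertex is integral, each $q_i=1$, hence $Q=1$, and every quasipolynomial above — therefore also $L_\PP$ — is an ordinary polynomial.

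\textbf{Main obstacle.} The technical heart is the product formula for $\sigma_K$: one must verify that the half-open-parallelepiped decomposition is unique and that passing to the \emph{primitive} generators $w_i^\ast$ is precisely what forces the remainder $y$ to be a lattice point, and then correctly read off the period $Q$ and the degree bound $e$ from the shapes of numerator and denominator. A secondary point demanding care is checking that the triangulation together with the inclusion–exclusion never introduces a denominator exceeding the $\lcm$ of the denominators of the coordinates of the vertices of $\PP$ (this is why the triangulation must use no auxiliary vertices).
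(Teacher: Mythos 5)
The paper does not prove this theorem at all --- it is quoted verbatim from Beck and Robins \cite[Thm 3.8 and 3.23]{Beck.Robins:2007} as a black box --- so there is no internal proof to compare against. Your argument is precisely the standard proof from that cited source: triangulate without new vertices, cone over each simplex at height one, decompose the cone via the half-open fundamental parallelepiped of the primitive ray generators to get $\sigma_K(z)$ as a polynomial over $\prod_i(1-z^{q_i})$, and read off degree and period from the partial-fraction expansion. All of those steps are carried out correctly, including the properness check on $\sigma_K$ and the observation that the inclusion--exclusion terms are faces of the $\Delta_j$ and hence introduce no new denominators.

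The one genuine soft spot is your degree argument, and it matters for exactly the polytopes this paper feeds into the theorem. You prove ``degree exactly $e$'' by $L_\Delta(t)/t^e\to\mathrm{vol}_e(\Delta)>0$, which presumes $\Delta$ is full-dimensional in its ambient space ($\Delta\subseteq\RR^e$). But a rational polytope of dimension $D$ sitting in $\RR^n$ with $D<n$ --- such as $\PP(S)=\{\vu:\vA(S)\vu=\vb,\ \vu\ge\vzero\}$, which is cut out by equalities --- has Euclidean $D$-volume zero in $\RR^n$, and $t\,\mathrm{aff}(\Delta)$ may contain no lattice points at all for some residues of $t$, so the limit as you state it is not available and the leading coefficient is genuinely a periodic function rather than a constant volume. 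The cone/parallelepiped half of your proof survives unchanged in this setting (the lifted cone is still simplicial of dimension $e+1$ in $\RR^{n+1}$), so the upper bound ``degree $\le e$, period dividing $Q$'' stands; to get the lower bound, replace the volume limit by a direct count: for $t=Qs$ the $\binom{s+e}{e}$ lattice points $\sum_i n_i\,(Qv_i)$ with $n_i\in\ZZ_{\ge0}$, $\sum_i n_i=s$, are distinct (affine independence of the $v_i$) and lie in $t\Delta$, so $L_\Delta(Qs)=\Omega(s^e)$ and the leading quasipolynomial coefficient is not identically zero. With that repair the proof is complete and matches the cited one.
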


Motivated by \eqref{sol_interior}, we consider the {\em relative interior} of $\PP$. 
For the case where $\PP$ is convex, the relative interior, or interior, is given by
\[\Pint \triangleq\{\vu\in\PP: \mbox{ for all $\vu'\in\PP$, there exists an $\epsilon>0$ such that $\vu+\epsilon(\vu-\vu')\in\PP$}\}.\]

For a positive integer $t$, we consider the quantity 
\[L_\Pint(t)=|\ZZ^n\cap t\Pint|.\]
Ehrhart conjectured the following relation between $L_\PP(t)$ and $L_\Pint(t)$, 
proved by Macdonald~\cite{Macdonald:1971}.

\begin{thm}[{Ehrhart-Macdonald reciprocity \cite[Thm 4.1]{Beck.Robins:2007}}]
\label{thm:ehrhar-macdonald}
If $\PP$ is a rational convex polytope of dimension $D$,
then the evaluation of $L_\PP(t)$ at negative integers satisfies
\[L_\PP(-t)=(-1)^D L_\Pint(t).\]
\end{thm}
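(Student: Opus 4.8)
The plan is to derive the identity from the generating-function machinery of Beck and Robins, by reducing it to a reciprocity statement for the integer-point transform of a simplicial cone. For a pointed rational cone $K\subseteq\RR^m$ write $\sigma_K(\vz)\triangleq\sum_{\vv\in K\cap\ZZ^m}\vz^{\vv}$ for its integer-point transform and $\sigma_{K^\circ}$ for the analogous sum over the relative interior; both are rational functions of $\vz=(z_1,\ldots,z_m)$, a fact that underlies (and follows from) Theorem~\ref{thm:ehrhart}. First I would embed $\PP\subseteq\RR^n$ at height one and set $K=\mathrm{cone}(\PP)\subseteq\RR^{n+1}$, the $(D+1)$-dimensional cone generated by $\{(\vv,1):\vv\text{ a vertex of }\PP\}$. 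Slicing $K$ at height $t\in\ZZ_{>0}$ returns $t\PP$ and slicing $K^\circ$ at height $t$ returns $t\Pint$, so specializing $\vz=(1,\ldots,1,z)$ gives
\[
1+\sum_{t\ge1}L_\PP(t)\,z^{t}=\sigma_K(1,\ldots,1,z),\qquad
\sum_{t\ge1}L_\Pint(t)\,z^{t}=\sigma_{K^\circ}(1,\ldots,1,z).
\]

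Next I would prove the \emph{Stanley reciprocity} $\sigma_K(1/\vz)=(-1)^{\dim K}\sigma_{K^\circ}(\vz)$, first for a simplicial cone. If $K$ is generated by linearly independent primitive vectors $\vw_1,\ldots,\vw_d$, its half-open fundamental parallelepiped $\Pi=\{\sum_i\lambda_i\vw_i:0\le\lambda_i<1\}$ tiles $K$ under the semigroup $\sum_i\ZZ_{\ge0}\vw_i$, giving $\sigma_K(\vz)=\sigma_\Pi(\vz)\big/\prod_i(1-\vz^{\vw_i})$, while the opposite parallelepiped $\Pi'=\{\sum_i\lambda_i\vw_i:0<\lambda_i\le1\}$ tiles $K^\circ$, giving $\sigma_{K^\circ}(\vz)=\sigma_{\Pi'}(\vz)\big/\prod_i(1-\vz^{\vw_i})$. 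The involution $\vx\mapsto(\vw_1+\cdots+\vw_d)-\vx$ is a bijection $\Pi\cap\ZZ^m\to\Pi'\cap\ZZ^m$, hence $\sigma_{\Pi'}(\vz)=\vz^{\vw_1+\cdots+\vw_d}\,\sigma_\Pi(1/\vz)$; combining this with the identity $\prod_i(1-\vz^{-\vw_i})=(-1)^d\vz^{-(\vw_1+\cdots+\vw_d)}\prod_i(1-\vz^{\vw_i})$ yields $\sigma_K(1/\vz)=(-1)^d\sigma_{K^\circ}(\vz)$ as rational functions. For a general pointed rational cone I would fix a triangulation into simplicial cones using only its extreme rays; passing to a half-open version of the triangulation, the simplicial pieces partition $K$ (and, with the complementary open/closed facet choices, partition $K^\circ$) without overlap, so $\sigma_K$ and $\sigma_{K^\circ}$ are honest sums of the simplicial contributions and the identity extends by summation. (Alternatively one argues by induction on $d$ via inclusion--exclusion over the shared faces of an ordinary triangulation.)

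Finally I would apply the cone reciprocity to $K=\mathrm{cone}(\PP)$, with $\dim K=D+1$, and specialize $\vz=(1,\ldots,1,z)$; one checks that this point lies off the polar locus of the rational functions produced by the triangulation, so the identity survives. By the two displays above this reads
\[
1+\sum_{t\ge1}L_\PP(t)\,z^{-t}=(-1)^{D+1}\sum_{t\ge1}L_\Pint(t)\,z^{t}.
\]
The left side is $P(1/z)$, where $P(z)=\sum_{t\ge0}L_\PP(t)z^{t}$ is the Ehrhart series; and since $L_\PP$ is a quasipolynomial (Theorem~\ref{thm:ehrhart}) the elementary identity $\sum_{t\ge1}p(-t)z^{t}=-P(1/z)$, valid for any quasipolynomial $p$ with generating function $P$, turns $P(1/z)$ into $-\sum_{t\ge1}L_\PP(-t)z^{t}$. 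Substituting and comparing coefficients of $z^{t}$ gives $L_\PP(-t)=(-1)^{D}L_\Pint(t)$ for every $t\in\ZZ_{>0}$. The routine parts are the simplicial parallelepiped computation and the final coefficient extraction; the main obstacle is the passage from simplicial to arbitrary cones, where one must track carefully the lattice points lying on the lower-dimensional faces shared by the pieces of a triangulation so that the rational-function identities add up correctly—precisely what the half-open decomposition (or a M\"obius-function inclusion--exclusion over the face poset) is designed to handle—together with the secondary technical point of verifying admissibility of the specialization $\vz=(1,\ldots,1,z)$.
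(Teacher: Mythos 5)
The paper does not prove this theorem: it is imported as a known result, attributed to Macdonald and cited as \cite[Thm 4.1]{Beck.Robins:2007}, so there is no in-paper argument to compare yours against. Your proposal is the standard textbook proof (essentially the one in Beck--Robins): cone over $\PP$ at height one, establish Stanley reciprocity $\sigma_K(1/\vz)=(-1)^{\dim K}\sigma_{K^\circ}(\vz)$ for simplicial cones via the two half-open fundamental parallelepipeds and the involution $\vx\mapsto(\vw_1+\cdots+\vw_d)-\vx$, extend to arbitrary pointed rational cones by a half-open triangulation, specialize $\vz=(1,\ldots,1,z)$, and finish with the quasipolynomial identity $\sum_{t\ge1}p(-t)z^t=-P(1/z)$. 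I checked the computations: the parallelepiped bijection, the sign bookkeeping from $\prod_i(1-\vz^{-\vw_i})=(-1)^d\vz^{-(\vw_1+\cdots+\vw_d)}\prod_i(1-\vz^{\vw_i})$, the admissibility of the specialization (each generator of the coned polytope has last coordinate at least one, so the denominator factors specialize to $1-z^{h_i}$ with $h_i\ge1$), and the final coefficient comparison are all correct, and you have correctly located the only genuinely delicate step, namely that the closed simplicial cones in a triangulation overlap along shared faces, so one must pass to a half-open decomposition (with complementary facet choices for $K$ and $K^\circ$) or do M\"obius inclusion--exclusion before the rational-function identities can simply be summed. Two minor points worth making explicit if you write this up in full: the constant term $1$ in the series for $\sigma_K(1,\ldots,1,z)$ is accounted for because the apex is the unique lattice point of $K$ at height zero, and the quasipolynomial reciprocity lemma you invoke at the end needs $L_\PP$ extended to $t=0$ by $L_\PP(0)=1$ so that the generating function in the lemma matches the one produced by the cone.
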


\subsection{Proof of Theorem \ref{thm:closed}}

Recall the definitions of $\vA(S)$ and $\vb$ in \eqref{sol_polytope}, and consider the polytope
\begin{equation}
\PP(S) \triangleq\{\vu\in \RR^{|S|}: \vA(S)\vu=\vb,\vu\ge \vzero\}, \label{polytope}\\
\end{equation}

Using lattice point enumerators, we may write $|\F(n;S)|=L_{\PP(S)}(n-\ell+1)$. 
Therefore, in view of Ehrhart's theorem, we need to determine the dimension of the 
polytope $\PP(S)$ and characterize the interior and the vertices of this polytope.

\begin{lem}\label{dim:ps}
Suppose that $D(S)$ is strongly connected. 
Then the dimension of $\PP(S)$ is $|S|-|V(S)|$. 
\end{lem}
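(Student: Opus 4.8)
The plan is to exhibit the dimension of $\PP(S)$ as the dimension of the affine solution space of the linear system $\vA(S)\vu=\vb$, which by the rank--nullity theorem equals $|S|-\rank\vA(S)$, and then to compute $\rank\vA(S)$. First I would recall that $\vA(S)$ is the incidence matrix $\vB(D(S))$ with an extra top row $\vone^T$ appended. Since $D(S)$ is strongly connected, it is in particular connected, so by the cited fact from Bollob\'as \cite[\S II, Thm 9 and Ex. 38]{Bollobas:1998} we have $\rank\vB(D(S))=|V(S)|-1$. The key claim is that prepending the row $\vone^T$ raises the rank by exactly one, i.e. $\rank\vA(S)=|V(S)|$; granting this, the affine space $\{\vu:\vA(S)\vu=\vb\}$ has dimension $|S|-|V(S)|$, and since $\PP(S)$ is the intersection of this affine space with the full-dimensional orthant $\vu\ge\vzero$ and is nonempty (it contains a strictly positive point by Lemma~\ref{lem:euler} applied to a sufficiently long closed walk, e.g. the profile of any closed Eulerian-type word whose length is a multiple of $\lambda$), the dimension of the polytope equals the dimension of its affine hull, namely $|S|-|V(S)|$.

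To establish $\rank\vA(S)=|V(S)|$, it suffices to show that the row $\vone^T$ does not lie in the row space of $\vB(D(S))$. Every row of $\vB(D(S))$ has coordinate sum zero (each column corresponding to a non-loop arc has one $+1$ and one $-1$, and loop columns are zero), so every vector in the row space of $\vB(D(S))$ is orthogonal to... more directly: each row of $\vB(D(S))$, viewed as a function on arcs, sums to $0$ over... the cleanest argument is via the right null space. We have $\vB(D(S))\vchi(C)=\vzero$ for any closed walk $C$; taking $C$ to be a cycle, $\vone^T\vchi(C)=|C|>0$, so $\vchi(C)$ is in the null space of $\vB(D(S))$ but not in the null space of $\vA(S)$. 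Hence the null space strictly shrinks when the row is added, so the rank strictly increases; combined with $\rank\vA(S)\le\rank\vB(D(S))+1=|V(S)|$, we get equality.

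The main obstacle, and the step requiring the most care, is verifying nonemptiness of $\PP(S)$ and more specifically that $\PP(S)$ has a point in its relative interior (equivalently, that $\E(n;S)\ne\emptyset$ for some admissible $n$), because the dimension statement is only meaningful once we know the polytope is nonempty, and the subsequent Ehrhart-theoretic arguments will need a strictly positive lattice point. This follows from strong connectivity: one concatenates, for every arc $\vz\in S$, a short closed walk through $\vz$ (strong connectivity guarantees a return path), obtaining a closed word whose profile is strictly positive; normalizing by its length $n-\ell+1$ gives a rational point of $\PP(S)$ with all coordinates positive, i.e. a relative interior point. Everything else is bookkeeping: the rank computation is routine graph theory, and the passage from affine-hull dimension to polytope dimension is immediate once nonemptiness with an interior point is in hand.
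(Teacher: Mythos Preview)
Your proposal is correct and follows essentially the same route as the paper: compute $\rank\vA(S)=|V(S)|$ via the cycle argument showing $\vone^T$ is not in the row space of $\vB(D(S))$, then produce a strictly positive point of $\PP(S)$ from a closed walk traversing every arc (strong connectivity), which forces the polytope's dimension to equal that of its affine hull. One small quibble: your invocation of Lemma~\ref{lem:euler} is in the wrong direction (that lemma turns lattice points into words, not the reverse), but this is harmless since you immediately give the correct direct construction of the positive point; the paper obtains its positive point by the slightly different device of translating an arbitrary affine solution $\vu'$ by a large multiple of such a walk vector and rescaling, but the content is the same.
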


\begin{proof}
  We first establish that the rank of $\vA(S)$ is $|V(S)|$.  Since
  $D(S)$ is connected, the rank of $\vB(D(S))$ is $|V(S)|-1$.  We next
  show that $\vone^T$ does not belong to the row space of $\vB(D(S))$.
  As $D(S)$ is strongly connected, $D(S)$ contains a cycle, say $C$.
  Since $\vB(D(S))\vchi(C) = 0$ but $\vone\vchi(C)=|C|\ne 0$, $\vone$
  does not belong to the row space of $\vB(D(S))$, so augmenting the
  matrix with the all-one row increases its rank by one.  Therefore,
  the nullity of $\vA(S)$ is $|S|-|V(S)|$.

Next, we show that there exists a $\vu>\vzero$ such that $\vA(S)\vu=\vb$.
Since the nullity of $\vB(D(S))$ is positive, there exists a $\vu'$ such that  $\vA(S)\vu'=\vb$.
Since $D(S)$ is strongly connected, there exists a closed walk on $D(S)$ that visits all arcs at least once.
In other words, there exists a vector $\vv>\vzero$ such that $\vA(S)\vv=\mu\vb$ for $\mu>0$.
Choose $\mu'$ sufficiently large so that $\vu'+\mu'\vv>\vzero$ and set $\vu=(\vu'+\mu'\vv)/(1+\mu'\mu)$. 
One can easily verify that $\vA(S)\vu=\vb$.

To complete the proof, we exhibit a set of $|S|-|V(S)|+1$ affinely independent points in $\PP(S)$.
Let $\vu_1,\vu_2,\ldots$, $\vu_{|S|-|V(S)|}$ be linearly independent vectors
that span the null space of $\vA(S)$.
Since $\vu$ has strictly positive entries, we can find $\epsilon$ small enough so that 
$\vu+\epsilon\vu_i$ belongs to $\PP(S)$ for all $ i\in [|S|-|V(S)|]$. 
Therefore $\{\vu,\vu+\epsilon\vu_1,\vu+\epsilon\vu_2,\ldots, \vu+\epsilon\vu_{|S|-|V(S)|}\}$ 
is the desired set of $|S|-|V(S)|+1$ affinely independent points in $\PP(S)$.
\end{proof}

\begin{lem}\label{int:ps}
Suppose $D(S)$ is strongly connected. 
Then $\Pint(S)=\{\vu\in\RR^{|S|}: \vA(S)\vu=\vb, \vu>\vzero\}$. Therefore, $|\E(n;S)|=L_{\Pint(S)}(n-\ell+1)$.
\end{lem}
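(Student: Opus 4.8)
The plan is to establish the set equality by two inclusions, and then read off the consequence by a homogeneity/scaling argument. The one structural input I would use is the strictly positive feasible point whose existence was already shown inside the proof of Lemma~\ref{dim:ps}: there is a $\vu^\star\in\PP(S)$ with $\vu^\star>\vzero$ (built from a closed walk traversing every arc of the strongly connected graph $D(S)$).

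For the inclusion $\Pint(S)\subseteq\{\vu:\vA(S)\vu=\vb,\ \vu>\vzero\}$, I would argue by contradiction. Suppose $\vu\in\Pint(S)$ but some coordinate satisfies $u_i=0$. Applying the definition of the relative interior to the point $\vu'=\vu^\star\in\PP(S)$ yields an $\epsilon>0$ with $\vu+\epsilon(\vu-\vu^\star)\in\PP(S)$; but the $i$-th coordinate of that point equals $0+\epsilon(0-u^\star_i)=-\epsilon u^\star_i<0$, contradicting the nonnegativity constraint defining $\PP(S)$. Hence every coordinate of $\vu$ is strictly positive, and of course $\vA(S)\vu=\vb$ since $\vu\in\PP(S)$.

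For the reverse inclusion, take any $\vu\in\PP(S)$ with $\vu>\vzero$ and any $\vu'\in\PP(S)$. Because both satisfy $\vA(S)\vu=\vA(S)\vu'=\vb$, we get $\vA(S)\bigl(\vu+\epsilon(\vu-\vu')\bigr)=\vb$ for every $\epsilon$, so only the constraints $\vu\ge\vzero$ can be violated. Coordinatewise, $u_i+\epsilon(u_i-u'_i)\ge 0$ holds automatically whenever $u_i\ge u'_i$, and whenever $u_i<u'_i$ it holds for all $\epsilon$ up to the positive threshold $u_i/(u'_i-u_i)$. Taking $\epsilon$ to be the minimum of these finitely many thresholds (and $\epsilon=1$ if no such index exists) gives $\epsilon>0$ with $\vu+\epsilon(\vu-\vu')\in\PP(S)$, so $\vu\in\Pint(S)$. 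This proves $\Pint(S)=\{\vu\in\RR^{|S|}:\vA(S)\vu=\vb,\ \vu>\vzero\}$.

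Finally, for the stated consequence, I would scale: for a positive integer $t$ one has $t\,\Pint(S)=\{\vu\in\RR^{|S|}:\vA(S)\vu=t\vb,\ \vu>\vzero\}$, so intersecting with $\ZZ^{|S|}$ and setting $t=n-\ell+1$ recovers precisely the set $\E(n;S)$ of \eqref{sol_interior}; therefore $|\E(n;S)|=|\ZZ^{|S|}\cap (n-\ell+1)\Pint(S)|=L_{\Pint(S)}(n-\ell+1)$. I do not expect a genuine obstacle here: the only point that genuinely requires the strong-connectivity hypothesis is the availability of the strictly positive $\vu^\star$ in the first inclusion — without it, some coordinate could be identically zero on all of $\PP(S)$ and the relative interior would be strictly smaller than the claimed set; everything else is a routine convexity computation.
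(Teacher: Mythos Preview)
Your proposal is correct and follows essentially the same approach as the paper: both establish the two inclusions by using the strictly positive feasible point from Lemma~\ref{dim:ps} to show that any $\vu\in\PP(S)$ with a zero coordinate fails the relative-interior test, and that any strictly positive $\vu$ passes it for suitably small $\epsilon$. The paper just presents the two inclusions in the opposite order and omits the explicit scaling argument for the consequence, which you spell out.
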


\begin{proof}
Let $\vu>\vzero$ be such that $\vA(S)\vu=\vb$.
For any $\vu'\in\PP(S)$, we have $\vA(S)\vu'=\vb$ and hence, $\vA(S)(\vu-\vu')=\vzero$.
Since $\vu$ has strictly positive entries, we choose $\epsilon$ small enough so that $\vu+\epsilon(\vu-\vu')\ge\vzero$.
Therefore, $\vu+\epsilon(\vu-\vu')$ belongs to $\PP(S)$ and $\vu$ belongs to the interior of $\PP(S)$. 

Conversely, let $\vu\in\PP(S)$, with $u_\vz=0$ for some $\vz\in S$.
Since $D(S)$ is strongly connected, from the proof of Lemma \ref{dim:ps}, there exists a $\vu'\in\PP(S)$ with $\vu'>\vzero$. Hence, for all $\epsilon>0$, the $\vz$-coordinate of $\vu+\epsilon(\vu-\vu')$ 
is given by $-\epsilon u_\vz'$, which is always negative. In other words, $\vu$ does not belong to $\Pint(S)$.
\end{proof}

Therefore, using Ehrhart's theorem and Ehrhart-Macdonald reciprocity along with Lemmas \ref{dim:ps} and \ref{int:ps}, we arrive at the fact that $|\E(n;S)|$ and $|\F(n;S)|$ are quasipolynomials in $n$ whose coefficients are periodic in $n$.

In order to determine the period of the quasipolynomials, we characterize the vertex set of $\PP(S)$.
A point $\vv$ in a polytope is a {\em vertex} if 
$\vv$ cannot be expressed as a convex combination of the other points.

\begin{lem} \label{lem:vertex-ps}
The vertex set of $\PP(S)$ is given by $\{\vchi(C)/|C|: C \mbox{ is a cycle in }D(S)\}$.
\end{lem}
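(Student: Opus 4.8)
The plan is to prove the claim by two inclusions. First I would show that every normalized cycle incidence vector $\vchi(C)/|C|$ is a vertex of $\PP(S)$, and then that every vertex of $\PP(S)$ has this form. For the forward direction, note first that $\vchi(C)/|C|$ indeed lies in $\PP(S)$: it is nonnegative, the flow conservation equations $\vB(D(S))\vchi(C)=\vzero$ hold for any closed walk (in particular any cycle), and $\vone^T\vchi(C)/|C| = |C|/|C| = 1$, so $\vA(S)(\vchi(C)/|C|) = \vb$. To see it is a vertex, I would argue via the support: the support of $\vchi(C)/|C|$ is precisely the arc set of the cycle $C$, which forms a connected subdigraph in which every node has in-degree and out-degree one. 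If $\vchi(C)/|C| = \tfrac12(\vu_1+\vu_2)$ with $\vu_1,\vu_2\in\PP(S)$ distinct, then $\supp(\vu_1),\supp(\vu_2)\subseteq \supp(\vchi(C)/|C|) = C$; but the only nonnegative vector supported on a simple cycle satisfying the flow conservation equations is a constant multiple of $\vchi(C)$ (the flow is forced to be equal on all arcs of a cycle), and the normalization $\vone^T\vu_i = 1$ then forces $\vu_i = \vchi(C)/|C|$, a contradiction.

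For the reverse direction, let $\vv$ be a vertex of $\PP(S)$. A standard fact about rational polytopes defined by equalities and nonnegativity constraints is that $\vv$ is a vertex if and only if the columns of $\vA(S)$ indexed by $\supp(\vv)$ are linearly independent. Restricting attention to the subdigraph $D'$ on $V(S)$ with arc set $\supp(\vv)$, I would observe that $\vv$ restricted to $D'$ is a strictly positive circulation (it satisfies flow conservation there), so every connected component of $D'$ is itself strongly connected and hence, if it has any arc, it contains a cycle. If $D'$ contained two arc-disjoint cycles $C_1$ and $C_2$, or even just a component with more arcs than a single cycle, then $\vchi(C_1)$ and $\vchi(C_2)$ would give a nontrivial linear dependence among the columns of $\vB(D(S))$ indexed by $\supp(\vv)$ — and I need to check this dependence survives after augmenting with the $\vone^T$ row. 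The cleanest way: the number of arcs in $\supp(\vv)$ must equal the rank of the corresponding columns of $\vA(S)$, which is at most (number of nodes touched)$\,$; combined with the circulation structure this forces $\supp(\vv)$ to be a single simple cycle. Then, as in the forward direction, flow conservation on a simple cycle forces $\vv$ to be constant on its arcs, and the normalization gives $\vv = \vchi(C)/|C|$.

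The main obstacle I anticipate is the bookkeeping in the reverse direction: making precise the claim that a strictly positive circulation supported on an arc set whose columns (in the augmented matrix $\vA(S)$, not just $\vB(D(S))$) are linearly independent must be supported on exactly one simple cycle. The subtlety is the extra $\vone^T$ row, which breaks the clean ``circulations = kernel'' picture for $\vB(D(S))$ alone; one must verify that a union of two or more cycles still yields a linear dependence after augmentation (it does, because one can take a difference $|C_2|\,\vchi(C_1) - |C_1|\,\vchi(C_2)$, which is killed by both $\vB(D(S))$ and $\vone^T$, provided the two cycles are distinct). I would handle this by a dimension count: if $D'$ has $k$ nodes and $m$ arcs with $m > $ (size of a spanning structure), the cycle space of $D'$ has dimension $m - k + (\text{number of components})\ge 1$, giving a vector in $\ker \vB(D(S))$ supported on $\supp(\vv)$; intersecting with $\ker\vone^T$ (codimension one) still leaves a nonzero such vector unless $m$ is as small as possible, i.e. $\supp(\vv)$ is a single cycle. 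This makes the independence-of-columns criterion and the ``single simple cycle'' conclusion match up cleanly.
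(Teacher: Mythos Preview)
Your proposal is correct, and both inclusions go through, but the route is genuinely different from the paper's.

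For the direction ``cycle $\Rightarrow$ vertex'' you give a self-contained support argument: any $\vu\in\PP(S)$ with $\supp(\vu)\subseteq C$ must be constant on the arcs of the simple cycle $C$ (by flow conservation at each node of $C$), hence equal to $\vchi(C)/|C|$ after normalization. This is cleaner than the paper, which first proves the other inclusion and then uses it to say that any convex combination must be a convex combination of normalized cycle vectors $\vchi(C_j)/|C_j|$, each of which must contain an arc outside $C$.

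For the direction ``vertex $\Rightarrow$ cycle'' you invoke the basic-feasible-solution characterization (columns of $\vA(S)$ indexed by $\supp(\vv)$ are linearly independent) and then a dimension count to force $\supp(\vv)$ to be a single simple cycle. This works, and your worry about the extra $\vone^T$ row is handled exactly as you suggest: if $\supp(\vv)$ contains two distinct cycles $C_1,C_2$, then $|C_2|\,\vchi(C_1)-|C_1|\,\vchi(C_2)$ is a nonzero vector killed by both $\vB(D(S))$ and $\vone^T$, hence by $\vA(S)$, contradicting independence. The paper instead scales $\vv$ to an integer vector, builds the corresponding Eulerian multigraph, decomposes its arc set into edge-disjoint cycles, and observes that if more than one cycle appears then $\vv$ is a nontrivial convex combination of the normalized cycle vectors --- impossible for a vertex. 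The paper's argument is shorter and avoids the rank bookkeeping, but yours has the virtue of making the polyhedral structure (basic feasible solutions, column independence) explicit, which generalizes more readily to variants of the polytope such as $\Pgrc(\vH,S)$.
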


\begin{proof}
First, observe that $\vchi(C)/|C|$ belongs to $\PP(S)$ for any cycle $C$ in $D(S)$. 
 
Let $\vv\in\PP(S)$ and suppose $\vv$ is a vertex.
Since $\vA(S)$ has integer entries, $\vv$ is rational. Choose $\mu > 0$ so that $\mu\vv$ has integer entries. 
Construct the multigraph $D'$ on $V(S)$ 
by adding $\mu v_\vz$ copies of the arc $\vz$ for all $\vz\in S$.
Since $\vv\in\PP(S)$, $\vB(S)\mu\vv=\vzero$ and hence, each of the connected components of $D'$ are Eulerian. Therefore, the arc set of $D'$ can be decomposed into disjoint cycles.
Since $\vv$ is a vertex, there can only be one cycle and hence, $\vv=\vchi(C)/|C|$ for some cycle $C$.

Conversely, we show that for any cycle $C$ in $D(S)$,
$\vchi(C)/|C|$ cannot be expressed as a convex combination of other points in $\PP(S)$. 
Suppose otherwise. Then there exist cycles $C_1, C_2,\ldots, C_t$ distinct from $C$
and nonnegative scalars $\alpha_1,\alpha_2,\ldots, \alpha_t$
such that $\vchi(C)=\sum_{i=1}^t\alpha_i\vchi(C_i)$. For each $j$, let $e_j$ be an arc that belongs to $C_j$
but not $C$.
Then \[0=\vchi(C)_{e_j}=\sum_{1\le i\le t}\alpha_i\vchi(C_i)_{e_j}\ge \alpha_j\vchi(C_j)_{e_j}=\alpha_j.\] 
Hence, $\alpha_j=0$ for all $j$. Therefore, $\vchi(C)=\vzero$, a contradiction.
\end{proof}

Let $\lambda_S=\lcm\{|C|: C\mbox{ is a cycle in }D(S)\}$, where $\lcm$ denotes the lowest common multiple.
Then the period of the quasipolynomial $L_{\PP(S)}(n-\ell+1)$ divides $\lambda_S$ by Ehrhart's theorem.

Let us dilate the polytope $\PP(S)$ by $\lambda_S$ and consider the polytope $\lambda_S\PP(S)$ 
and $L_{\lambda_S\PP(S)}(t)$.
Since $\lambda_S\PP$ is integer, both 
$L_{\lambda_S\PP(S)}(t)$ and $L_{\lambda_S\Pint(S)}(t)$ are polynomials of degree $|S|-|V(S)|$.
Hence,
\[ |\Qb(n;S)|\ge L_{\lambda_S\Pint(S)}(t)=\Omega\left(t^{|S|-|V(S)|}\right), 
\mbox{ whenever $n-\ell+1=\lambda_St$ or $\lambda_S|(n-\ell+1)$,} \]
\noindent and therefore, 
$|\Qb(n;S)|=\Theta'\left(n^{|S|-|V(S)|}\right)$. This completes the proof of Theorem \ref{thm:closed}.

In the special case where $D(S)$ contains a loop, we can show further that the leading coefficients of the quasipolynomials 
$|\E(n;\bbracket{q}^\ell)|$ and $|\F(n;\bbracket{q}^\ell)|$ are the same and constant. 
This result is a direct consequence of Ehrhart-Macdonald reciprocity and the fact that $|\E(n;\bbracket{q}^\ell)|$ is monotonically increasing. We demonstrate the latter claim in Appendix \ref{app:aperiodic}.

Note that when $S=\bbracket{q}^\ell$, Corollary \ref{cor:aperiodic} yields \eqref{eq:asym}, a result of Jacquet \etal{} \cite{Jacquet.etal:2012}.

\begin{cor}\label{cor:aperiodic}
Suppose $D(S)$ is strongly connected. If $D(S)$ contains a loop, then
\begin{equation}
|\E(n;S)|\sim|\Qb(n;S)|\sim|\F(n;S)|\sim c(S)n^{|S|-|V(S)|}+O(n^{|S|-|V(S)|-1}), \mbox{ for some constant }c(S).
\end{equation}
\end{cor}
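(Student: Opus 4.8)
The plan is to invoke Ehrhart--Macdonald reciprocity (Theorem \ref{thm:ehrhar-macdonald}) together with a monotonicity argument to pin down the leading coefficient. By Lemmas \ref{dim:ps} and \ref{int:ps}, we have $|\F(n;S)| = L_{\PP(S)}(n-\ell+1)$ and $|\E(n;S)| = L_{\Pint(S)}(n-\ell+1)$, and both are quasipolynomials of degree $D \triangleq |S|-|V(S)|$. Write $L_{\PP(S)}(t) = c_D(t)t^D + \cdots + c_0(t)$ and $L_{\Pint(S)}(t) = c'_D(t)t^D + \cdots$ for periodic functions $c_i, c'_i$ with common period dividing $\lambda_S$. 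The reciprocity relation $L_{\PP(S)}(-t) = (-1)^D L_{\Pint(S)}(t)$, expanded coefficient-by-coefficient, forces $c'_D(t) = c_D(t)$ (the sign $(-1)^D$ from reciprocity cancels the $(-1)^D$ from $(-t)^D$ in the leading term), so the two quasipolynomials share the same leading periodic coefficient. Since $\E(n;S) \subseteq \pQb(n;S) \subseteq \F(n;S)$ by \eqref{inequalities}, it then suffices to show that this common leading coefficient $c_D(t)$ is in fact a \emph{constant} (independent of $t$); the sandwich immediately yields the claimed asymptotic with $c(S) = c_D$.

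To show $c_D$ is constant I would argue as follows. When $D(S)$ contains a loop, the loop is a cycle of length one, so $1 \in \{|C| : C \text{ a cycle in } D(S)\}$, and by Lemma \ref{lem:vertex-ps} the vector $\vchi(C_{\rm loop})/1 = \ve_{\rm loop}$ (a standard basis vector) is a vertex of $\PP(S)$ with integer coordinates. This is a geometric hint that $\PP(S)$ behaves well, but the cleaner route is via monotonicity of $|\E(n;S)|$. Specifically, I claim $|\E(n;S)| \le |\E(n+\lambda_S;S)|$ for all $n$ (more precisely, along each residue class), which is exactly the statement deferred to Appendix \ref{app:aperiodic}; granting this, consider any two residues $n_1 \equiv n_2 \pmod{\lambda_S}$ is not quite what we want --- rather, we want to compare \emph{different} residue classes. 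The point is that injecting a loop-traversal gives an injection from $\E(n;S)$ into $\E(n+1;S)$ (add one more copy of the loop arc to any strictly positive flow; this keeps all coordinates positive and increases the arc count by one, and it respects flow conservation since a loop contributes zero to $\vB(D(S))$), so $|\E(n;S)| \le |\E(n+1;S)|$ for \emph{every} $n$, across residue classes. A monotone quasipolynomial of degree $D$ and period $\lambda_S$ must have constant leading coefficient: if $c_D(t_1) < c_D(t_2)$ for residues $t_1, t_2$, then evaluating along $n \equiv t_1$ versus $n \equiv t_2$ produces values that eventually violate monotonicity (the residue class with larger leading coefficient eventually dominates, so choosing $n$ in the smaller-coefficient class just below a large $n'$ in the larger-coefficient class is fine, but choosing $n$ in the larger class just below $n'$ in the smaller class forces $c_D(t_2) n^D(1+o(1)) \le c_D(t_1)(n')^D(1+o(1))$ with $n' = n + O(\lambda_S)$, a contradiction). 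Hence $c_D$ is constant, call it $c(S)$.

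Assembling: $c_D = c'_D = c(S)$ is a positive constant, $|\E(n;S)| = c(S)n^D + O(n^{D-1})$ and $|\F(n;S)| = c(S)n^D + O(n^{D-1})$ along every residue class $\lambda_S \mid (n-\ell+1)$ (and in fact, since $c_D$ is constant, the leading term is uniform across residues), and $|\Qb(n;S)| = |\pQb(n;S)|$ is squeezed between them by \eqref{inequalities}, giving $|\pQb(n;S)| = c(S)n^D + O(n^{D-1})$ as well. Finally, recalling $D = |S|-|V(S)|$ gives the stated conclusion.

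\textbf{Main obstacle.} The crux is the monotonicity claim $|\E(n;S)| \le |\E(n+1;S)|$ and turning it into ``monotone quasipolynomial $\Rightarrow$ constant leading coefficient.'' The injection-by-loop-insertion makes the first half clean, but one must be careful that it genuinely lands in $\E$ (strict positivity is preserved because we only \emph{add} arcs) and that the map is injective (it is: deleting one loop copy recovers the source). The second half --- deducing constancy of the top coefficient of a quasipolynomial from global monotonicity --- is the real content deferred to Appendix \ref{app:aperiodic}, and is where the argument could get delicate if, say, $D(S)$ has no loop (then the conclusion genuinely can fail, as the hypothesis demands). Here the loop hypothesis is exactly what makes the cross-residue injection available, so the proof should go through smoothly once that appendix lemma is in hand.
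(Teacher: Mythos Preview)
Your proposal is correct and follows essentially the same route as the paper: Ehrhart--Macdonald reciprocity to match the leading coefficients of $L_{\PP(S)}$ and $L_{\Pint(S)}$, a loop-insertion injection to get monotonicity across all residue classes, and then the observation that a globally monotone quasipolynomial must have constant leading coefficient, after which the sandwich \eqref{inequalities} finishes. The only cosmetic difference is that the paper runs the monotonicity argument on $|\F(n;S)|=L_{\PP(S)}(t)$ (Lemma~\ref{lem:monotone}) rather than on $|\E(n;S)|$, but the loop injection works equally well on either side; one small technical caveat is that reciprocity literally gives $c'_D(t)=c_D(-t)$ rather than $c_D(t)$, but since you establish that one of them is constant, this distinction evaporates.
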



\section{Constructive Lower Bounds}\label{sec:lower}

Fix $S\subseteq \bbracket{q}^\ell$ and 
recall that $\pQ(n;S)$ denotes the set of all $\ell$-gram profile vectors of words in $\Q(n;S)$.
For ease of exposition, we henceforth identify words in $\Q(n;S)$ with their corresponding profile vectors in $\pQ(n;S)$.
In Section \ref{sec:decoding}, we provide an efficient method to 
map a profile vector in $\pQ(n;S)$ back to a $q$-ary codeword in $\Q(n;S)$,
Therefore, in this section, we construct GRCs as sets of profile vectors $\pQ(n;S)$ which we may map back to corresponding $q$-ary codewords in $\Q(n;S)$.

Suppose that $\C$ is an $(N,d)$-AECC. We construct GRCs from $\C$ via the following methods:

\begin{enumerate}
\item When $N=|S|$, we intersect $\C$ with $\pQ(n;S)$ to 
obtain an $\ell$-gram reconstruction code. In other words,
we pick out the codewords in $\C$ that are also profile vectors.
Specifically, $\C\cap\pQ(n;S)$ is an $(n,d;S)$-GRC.
However, the size $|\C\cap\pQ(n;S)|$ is usually smaller than $|\C|$ and so,
we provide estimates to $|\C\cap\pQ(n;S)|$ for a classical family of AECCs 
in Section \ref{sec:intersect}.

\item When $N<|S|$, we extend each codeword in $\C$ to 
a profile vector of length $|S|$ in $\pQ(n;q,\ell)$. 
In contrast to the previous construction, we may in principle obtain an $(n,d;q,\ell)$-GRC with the same cardinality as $\C$.
However, one may not always be able to extend an arbitrary word to a profile vector. Section \ref{sec:sys} describes one method of  
mapping words in $\bbracket{m}^{N}$ to $\pQ(n;q,\ell)$ that preserves the code size for a suitable choice of 
the parameters $m$ and $N$.
\end{enumerate}

\subsection{Intersection with $\pQ(n;S)$}\label{sec:intersect}
In this section, we estimate $|\C\cap\pQ(n;S)|$ when $\C$ belongs to a
classical family of AECCs proposed by Varshamov
\cite{Varshamov:1973}. Fix $d$ and let $p$ be a prime such that $p>d$
and $p>N$.  Choose $N$ distinct nonzero elements
$\alpha_1,\alpha_2,\ldots,\alpha_N$ in $\ZZ/p\ZZ$ and consider the
matrix
\[
\vH\triangleq
\left(\begin{array}{cccc}
\alpha_1 & \alpha_2 & \cdots & \alpha_N \\
\alpha_1^2 & \alpha_2^2 & \cdots & \alpha_N^2 \\
\vdots &\vdots& \ddots &\vdots\\
\alpha_1^d & \alpha_2^d & \cdots & \alpha_N^d 
\end{array}\right).
\]
Pick any vector $\vbeta\in(\ZZ/p\ZZ)^N$ and define the code
\[\C(\vH,\vbeta)\triangleq\{\vu:\vH\vu\equiv\vbeta \bmod p\}.\]
Then, $\C(\vH,\vbeta)$ is an $(N,d+1)$-AECC \cite{Varshamov:1973}.
Hence, $\C(\vH,\vbeta)\cap \pQ(n;S)$ is an $(n,d+1;S)$-GRC for all $\vbeta\in(\ZZ/p\ZZ)^N$.
Therefore, by the pigeonhole principle, there exists a $\vbeta$ such that $|\C(\vH,\vbeta)\cap \pQ(n;S)|$
is at least $|\pQ(n;S)|/p^d$.
However, the choice of $\vbeta$ that guarantees this lower bound is not known.

In the rest of this section, we fix a certain choice of $\vH$ and $\vbeta$ and 
provide lower bounds on the size of $\C(\vH,\vbeta)\cap \pQ(n;S)$ 
as a function of $n$. 
As before, instead of looking at $\pQ(n;S)$ directly, 
we consider the set of closed words  $\Qb(n;S)$ and 
the corresponding set of profile vectors  $\pQb(n;S)$.

Let $\vbeta=\vzero$ and choose $\vH$ and $p$ based on the restricted de Bruijn digraph $D(S)$.
For an arbitrary matrix $\vM$, let $\nullplus\vM$ denote the set of vectors in the null space of $\vM$
that have positive entries.
We assume $D(S)$ to be strongly connected so that $\nullplus\vB(D(S))$ is nonempty.
Hence, we choose $\vH$ and $p$ such that $\C(\vH,\vzero)\cap \nullplus\vB(D(S))$ is nonempty.

Define the $(|V(S)|+1+d)\times (|S|+d)$-matrix
\[
\vA(\vH,S)\triangleq
\left(\begin{array}{c|c}
\vA(S) & \vzero\\ \hline
\vH &-p\vI_d
\end{array}\right),
\]
\noindent where $\vA(S)$ is as described in Section \ref{sec:graph}.
Let $\vb$ be a vector of length $|V(S)|+1+d$ that has 
$1$ as the first entry and zeros elsewhere, and define the polytope
\begin{equation}\label{eq:pgrc}
\Pgrc(\vH,S)\triangleq\{\vu\in\RR^{|S|+d}: \vA(\vH,S)\vu=\vb, \vu\ge\vzero\}
\end{equation}

Since $\E(n;S)\subseteq \pQb(n;S)\subseteq \pQ(n;S)$,  
$|\C(\vH,\vzero)\cap \E(n;S)|$ is a lower bound for  $|\C(\vH,\vzero)\cap \pQ(n;S)|$.
The following proposition demonstrates that $|\C(\vH,\vzero)\cap \E(n;S)|$
is given by the number of lattice points in the interior of a dilation of $\Pgrc(\vH,S)$.

\begin{prop} \label{codepolytope}
Let $\C(\vH,\vzero)$ and $\Pgrc(\vH,S)$ be defined as above.
If $D(S)$ is strongly connected and $\C(\vH,\vzero)\cap \nullplus\vB(D(S))$ is nonempty,
then $|\C(\vH,\vzero)\cap \E(n;S)|=\left\lvert\ZZ\cap (n-\ell+1)\Pgrcint(\vH,S)\right\rvert$.
\end{prop}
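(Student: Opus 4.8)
The plan is to show that the two sets $\C(\vH,\vzero)\cap\E(n;S)$ and $\ZZ^{|S|+d}\cap(n-\ell+1)\Pgrcint(\vH,S)$ are in bijection, via the natural projection onto the first $|S|$ coordinates. First I would unpack what membership in the dilated interior means: a lattice point $\vw\in\ZZ^{|S|+d}$ lies in $(n-\ell+1)\Pgrcint(\vH,S)$ iff, writing $\vw=(\vu,\vy)$ with $\vu\in\RR^{|S|}$ and $\vy\in\RR^d$, we have $\vA(\vH,S)\vw=(n-\ell+1)\vb$ and $\vw>\vzero$. Reading off the block structure of $\vA(\vH,S)$, the first block of equations is exactly $\vA(S)\vu=(n-\ell+1)\vb'$ (where $\vb'$ is the original right-hand side from \eqref{sol_polytope}), which together with $\vu>\vzero$ says $\vu\in\E(n;S)$; the last $d$ equations say $\vH\vu=p\vy$, i.e. $\vH\vu\equiv\vzero\bmod p$, which is precisely $\vu\in\C(\vH,\vzero)$. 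So the map $(\vu,\vy)\mapsto\vu$ sends $\ZZ^{|S|+d}\cap(n-\ell+1)\Pgrcint(\vH,S)$ into $\C(\vH,\vzero)\cap\E(n;S)$.

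Next I would check this map is injective and surjective. Injectivity is immediate: if $\vH\vu=p\vy=p\vy'$ then $\vy=\vy'$ since we are working over $\RR$ (or $\ZZ$) where $p$ is a nonzero scalar. For surjectivity, given $\vu\in\C(\vH,\vzero)\cap\E(n;S)$, set $\vy\triangleq\vH\vu/p$; since $\vH\vu\equiv\vzero\bmod p$ we get $\vy\in\ZZ^d$, and $(\vu,\vy)$ then satisfies $\vA(\vH,S)(\vu,\vy)=(n-\ell+1)\vb$ and has $\vu>\vzero$. The one genuinely delicate point — and the step I expect to be the main obstacle — is showing that the companion coordinates $\vy$ are themselves strictly positive, which is what is needed for $(\vu,\vy)$ to lie in the \emph{interior} $\Pgrcint$ rather than merely on the boundary of $\Pgrc$. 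This is exactly where the hypothesis that $\C(\vH,\vzero)\cap\nullplus\vB(D(S))$ is nonempty enters.

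To handle the sign of $\vy$, I would argue as follows. Pick $\vv\in\C(\vH,\vzero)\cap\nullplus\vB(D(S))$, so $\vv>\vzero$, $\vB(D(S))\vv=\vzero$, and $\vH\vv\equiv\vzero\bmod p$; since $\vv$ spans a positive direction in the null space of $\vB(D(S))$ one may also arrange $\vone^T\vv=\mu>0$ by scaling. Given $\vu\in\C(\vH,\vzero)\cap\E(n;S)$, the coordinates of $\vH\vu$ need not be positive a priori, but replacing $\vu$ by $\vu+k\vv$ for a large positive integer $k$ keeps $\vu$ inside $\E$ up to a rescaling of $n$ — more carefully, I would instead note that it suffices to characterize $\Pgrcint(\vH,S)$ abstractly, analogously to Lemma~\ref{int:ps}: show that the relative interior of $\Pgrc(\vH,S)$ is exactly $\{(\vu,\vy): \vA(\vH,S)(\vu,\vy)=\vb,\ \vu>\vzero\}$, \emph{without} requiring $\vy>\vzero$, because the $\vy$-coordinates are affine images $\vy=\vH\vu/p$ of the $\vu$-coordinates and hence do not cut out facets of the polytope. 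The existence of the strictly positive vector $\vv$ in $\C(\vH,\vzero)\cap\nullplus\vB(D(S))$ guarantees that $\Pgrc(\vH,S)$ has a point with $\vu>\vzero$, so the relative interior is nonempty and the facet description reduces to the inequalities $\vu\ge\vzero$; the reciprocity-type interior computation of Lemma~\ref{int:ps} then carries over verbatim. Combining this characterization with the block computation of the first two paragraphs gives $|\C(\vH,\vzero)\cap\E(n;S)|=|\ZZ^{|S|+d}\cap(n-\ell+1)\Pgrcint(\vH,S)|$, which is the claim.
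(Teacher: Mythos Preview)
Your overall strategy---the bijection $(\vu,\vy)\mapsto\vu$ between lattice points of $(n-\ell+1)\Pgrcint(\vH,S)$ and $\C(\vH,\vzero)\cap\E(n;S)$, verified block by block from the structure of $\vA(\vH,S)$---is exactly the paper's approach, and your injectivity and the forward direction are fine.

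The gap is in how you dispose of the positivity of $\vy$. Your assertion that ``the $\vy$-coordinates are affine images $\vy=\vH\vu/p$ of the $\vu$-coordinates and hence do not cut out facets of the polytope'' is not a valid principle: on the affine hull, the constraint $y_j\ge 0$ reads $(\vH\vu)_j\ge 0$, and nothing about being an affine image prevents such a half-space from being facet-defining (take, e.g., $y=1-u$, $u\ge 0$, $y\ge 0$). So your alternative characterization of $\Pgrcint$ as requiring only $\vu>\vzero$ is not justified by the reason you give, and the appeal to the existence of a strictly positive $\vv$ does not repair this---a single interior-like point tells you the relative interior is nonempty, not which inequalities are facet-defining.

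The fix is much simpler than either of your attempts and you have overlooked it: the entries of $\vH$ are the integers $\alpha_i^j$ with the $\alpha_i$ chosen in $\{1,\ldots,p-1\}$, so every entry of $\vH$ is strictly positive. Hence $\vu>\vzero$ immediately gives $\vH\vu>\vzero$, so $\vy=\vH\vu/p>\vzero$. With this one line your surjectivity step goes through directly, and the paper's characterization $\Pgrcint(\vH,S)=\{\vw:\vA(\vH,S)\vw=\vb,\ \vw>\vzero\}$ (all coordinates strictly positive) is the one to use. The paper relies on the same observation, implicitly, when its appendix writes ``$\vH\vu_0=p\vbeta$ for some $\vbeta>0$.''
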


\begin{proof}
Similar to Lemma \ref{int:ps}, we have that $\Pgrcint(\vH,S)=\{\vu\in\RR^{|S|+d}: \vA(\vH,S)\vu=\vb, \vu>\vzero\}$,
and we defer the proof of this claim to Appendix \ref{app:pgrc}.

Let $\vu>\vzero$ be such that $\vA(\vH,S)\vu=(n-\ell+1)\vb$. 
Consider the vector $\vu_0$ which equals the vector $\vu$ restricted to the first $N$ coordinates.
Then $\vA(S)\vu_0=(n-\ell+1)\vb_0$, 
where $\vb_0$ is a vector of length $|V(S)|+1$ with one in its first coordinate and zeros elsewhere. 
Hence, $\vu_0\in\E(n;S)$.
On the other hand, $\vH\vu_0=p\vbeta'$, where $\vbeta'$ consists of the last $d$ entries of $\vu$. In other words, $\vH\vu_0\equiv\vzero \bmod p$ 
and so $\vu_0\in\C(\vH,\vzero)$.

Therefore, $\vu\mapsto \vu_0$ is a map from $\{\vu: \vA(\vH,S)\vu=(n-\ell+1)\vb\mbox{ and }\vu>\vzero\}$ 
to $\C(\vH,\vzero)\cap \E(n;q,\ell)$. It can be verified that this map is a bijection. This proves the claimed result.
\end{proof}

As before, we compute the dimension of $\Pgrc(\vH,S)$ and characterize its vertex set.
Since the proofs are similar to the ones in Section \ref{sec:enumerate},
the reader is referred to Appendix \ref{app:pgrc} for a detailed analysis.

\begin{lem}\label{lem:pgrc}
Let $\C(\vH,\vzero)$ and $\Pgrc(\vH,S)$ be defined as above.
Suppose further that $D(S)$ is strongly connected and $\C(\vH,\vzero)\cap \nullplus\vB(D(S))$ is nonempty.
The dimension of $\Pgrc(\vH,S)$ is $|S|-|V(S)|$, while its vertex set is given by
\[\left\{\left(\frac{\vchi(C)}{|C|},\frac{\vH\vchi(C)}{p|C|}\right): C \mbox{ is a cycle in }D(S)\right\}.\]
\end{lem}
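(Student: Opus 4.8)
The plan is to mirror the structure of the proofs of Lemmas~\ref{dim:ps} and~\ref{lem:vertex-ps} from Section~\ref{sec:enumerate}, adapting each step to the enlarged matrix $\vA(\vH,S)$ and the extra $d$ slack coordinates. For the dimension claim, I would first compute $\rank\vA(\vH,S)$. The top block already has rank $|V(S)|+1$ by the argument in Lemma~\ref{dim:ps} (connectedness gives $\rank\vB(D(S))=|V(S)|-1$, and the all-ones row together with the fact that cycles lie in the null space of $\vB(D(S))$ but not of $\vone^T$ bumps the rank by one). The bottom-left block $\vH$ contributes nothing new to the column space once we observe the bottom-right block $-p\vI_d$ is already full rank $d$: the last $d$ rows of $\vA(\vH,S)$ are linearly independent of the first $|V(S)|+1$ rows because only they have nonzero entries in the last $d$ columns. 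So $\rank\vA(\vH,S)=|V(S)|+1+d$, and since $\vA(\vH,S)$ has $|S|+d$ columns, the nullity is $|S|-|V(S)|$. To upgrade the nullity bound to a genuine dimension statement, I need a strictly positive feasible point: starting from any $\vu'$ with $\vA(\vH,S)\vu'=\vb$ (which exists since the nullity is positive, as in Lemma~\ref{dim:ps}) and a strictly positive null vector, I can translate into the interior. The strictly positive null vector is precisely where the hypothesis $\C(\vH,\vzero)\cap\nullplus\vB(D(S))\neq\emptyset$ is used: take $\vw>\vzero$ with $\vB(D(S))\vw=\vzero$ and $\vH\vw\equiv\vzero\bmod p$; then $(\vw,\vH\vw/p)$ has strictly positive first $|S|$ coordinates, and since $D(S)$ strongly connected means the closed walk achieving $\vw$ can be chosen so $\vH\vw/p$ is also coordinatewise positive (or one adds a suitable multiple of such a vector), one obtains a point in $\Pgrcint(\vH,S)$. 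Combining this interior point with the $|S|-|V(S)|$ linearly independent null vectors yields $|S|-|V(S)|+1$ affinely independent points, so $\dim\Pgrc(\vH,S)=|S|-|V(S)|$.

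For the vertex set, I would argue both inclusions exactly as in Lemma~\ref{lem:vertex-ps}. For the forward inclusion: if $\vv=(\vv_0,\vv_1)$ is a vertex of $\Pgrc(\vH,S)$, then $\vv$ is rational (integer matrix), scale by $\mu$ so $\mu\vv$ is integral, build the multidigraph $D'$ on $V(S)$ with $\mu (v_0)_\vz$ copies of each arc $\vz$; the flow-conservation equations force each component of $D'$ to be Eulerian, hence its arc set decomposes into cycles. If $\vv_0$ were supported on more than one cycle we could write $\vv_0$, and correspondingly $\vv_1=\vH\vv_0/p$, as a proper convex combination of the corresponding points $(\vchi(C_i)/|C_i|,\vH\vchi(C_i)/(p|C_i|))$ in $\Pgrc(\vH,S)$ — contradicting vertexhood — so there is exactly one cycle $C$ and $\vv=(\vchi(C)/|C|,\vH\vchi(C)/(p|C|))$. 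For the reverse inclusion: each such point does lie in $\Pgrc(\vH,S)$ (flow conservation holds since $\vchi(C)$ is a closed walk, the sum constraint is normalized by $1/|C|$, and the $\vH$-block is satisfied by construction; nonnegativity is clear once we check $\vH\vchi(C)\equiv\vzero\bmod p$, which holds because any cycle's incidence vector lies in $\nullplus\vB(D(S))$ up to the congruence condition baked into the choice of $\vH,p$ — this may need a small remark). Then suppose $(\vchi(C)/|C|,\cdot)=\sum\alpha_i(\vchi(C_i)/\cdot,\cdot)$ with $C_i\neq C$; picking an arc $e_j\in C_j\setminus C$ and reading off the $e_j$-coordinate (which is in the first $|S|$ block) forces $\alpha_j=0$ for all $j$, hence $\vchi(C)=\vzero$, a contradiction — identical to Lemma~\ref{lem:vertex-ps}.

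The main obstacle I anticipate is not the rank computation or the cycle-decomposition argument, both of which transfer almost verbatim, but rather handling the last $d$ (slack) coordinates cleanly in the vertex and interior characterizations: one must be careful that the projection $\vu\mapsto\vu_0$ onto the first $|S|$ coordinates is well-behaved and that the extra coordinates, which are pinned down by $\vu_1=\vH\vu_0/p$, do not create spurious vertices or obstruct positivity. Concretely, I would verify early that the affine map $\vu_0\mapsto(\vu_0,\vH\vu_0/p)$ is a linear isomorphism from $\{\vu_0:\vA(S)\vu_0=\vb,\ \vH\vu_0\equiv\vzero\}$ onto $\Pgrc(\vH,S)$ (this is essentially the bijection already exhibited in the proof of Proposition~\ref{codepolytope}), which reduces every structural question about $\Pgrc(\vH,S)$ to one about the restricted polytope in $\RR^{|S|}$, and then the proofs of Section~\ref{sec:enumerate} apply directly. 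Since the proposition explicitly defers details to Appendix~\ref{app:pgrc}, I would keep the body terse and relegate the coordinate bookkeeping there.
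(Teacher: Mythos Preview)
Your approach is essentially the paper's: compute the rank of $\vA(\vH,S)$ from its block structure, exhibit a strictly positive feasible point using the hypothesis $\C(\vH,\vzero)\cap\nullplus\vB(D(S))\neq\emptyset$, and then replay the cycle-decomposition argument of Lemma~\ref{lem:vertex-ps} on the first $|S|$ coordinates, noting that the last $d$ coordinates are determined by $\vv_1=\vH\vv_0/p$.

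Two slips to fix. First, your rank count is off by one: Lemma~\ref{dim:ps} gives $\rank\vA(S)=|V(S)|$, not $|V(S)|+1$ (your own parenthetical $\rank\vB=|V(S)|-1$ plus one for $\vone^T$ confirms this), so $\rank\vA(\vH,S)=|V(S)|+d$ and the nullity is $|S|-|V(S)|$, consistent with the claimed dimension. Second, your worry about nonnegativity of the last $d$ coordinates of $\bigl(\vchi(C)/|C|,\vH\vchi(C)/(p|C|)\bigr)$ is misplaced: the entries of $\vH$ are positive integers (the $\alpha_i^j$ taken as representatives in $\{1,\ldots,p-1\}$) and $\vchi(C)\ge\vzero$, so $\vH\vchi(C)/(p|C|)\ge\vzero$ directly; no congruence $\vH\vchi(C)\equiv\vzero\bmod p$ is needed, and indeed that congruence does \emph{not} hold for arbitrary cycles. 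The polytope $\Pgrc(\vH,S)$ is a real polytope, so membership only requires nonnegativity, not integrality of the slack coordinates.
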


Let $\lambda_{\rm GRC}=\lcm\{|C|: C\mbox{ is a cycle in }D(S)\}\cup\{p\}$.
Then Lemma \ref{lem:pgrc}, Ehrhart's theorem and Ehrhart-Macdonald's reciprocity imply that 
$L_{\Pgrcint(\vH,S)}(t)$ is a quasipolynomial of degree $|S|-|V(S)|$ whose period divides $\lambda_{\rm GRC}$.
As in Section \ref{sec:enumerate}, we dilate the polytope $\Pgrc(\vH,S)$ by $\lambda_{\rm GRC}$ to 
obtain an integer polytope and assume that the polynomial 
$L_{\lambda_{\rm GRC}\Pgrc(\vH,S)}(t)$ has leading coefficient $c$.
Hence, whenever $n-\ell+1=\lambda_{\rm GRC} t$, that is, whenever $\lambda_{\rm GRC}|(n-\ell+1)$, 
\begin{align*}
|\C(\vH,\vzero)\cap \E(n;S)|= L_{\lambda_{\rm GRC}\Pgrcint(\vH,S)}(t)
&=ct^{|S|-|V(S)|}+O(t^{|S|-|V(S)|-1})\\
&=c(n/\lambda_{\rm GRC})^{|S|-|V(S)|}+O(n^{|S|-|V(S)|-1}).
\end{align*}
We denote $c/\lambda_{\rm GRC}^{|S|-|V(S)|}$ by $c(\vH,S)$ and
summarize the results in the following theorem.


\begin{thm}\label{thm:code}
Fix $S\subseteq \bbracket{q}^\ell$ and $d$. 
Choose $\vH$ and $p$ so that 
$\C(\vH,\vzero)$ is an $(|S|,d+1)$-AECC and 
$\C(\vH,\vzero)\cap \nullplus\vB(D(S))$ is nonempty.
Suppose that
$\lambda_{\rm GRC}=\lcm\{{\{|C|: C\mbox{ is a cycle in }D(S)\}\cup\{p\}\}}$.
Then there exists a constant $c(\vH,S)$ such that whenever $\lambda_{\rm GRC}|(n-\ell+1)$,
\[
|\C(\vH,\vzero)\cap\pQ(n;S)| \ge c(\vH,S)n^{|S|-|V(S)|}+O(n^{|S|-|V(S)|-1}).
\]
\end{thm}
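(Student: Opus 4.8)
The plan is to assemble the theorem from the machinery already developed in the excerpt. First I would recall the chain of inclusions $\E(n;S)\subseteq\pQb(n;S)\subseteq\pQ(n;S)$, so that it suffices to lower-bound $|\C(\vH,\vzero)\cap\E(n;S)|$. By Proposition~\ref{codepolytope}, under the stated hypotheses (strong connectivity of $D(S)$ and nonemptiness of $\C(\vH,\vzero)\cap\nullplus\vB(D(S))$), this count equals $|\ZZ\cap(n-\ell+1)\Pgrcint(\vH,S)|=L_{\Pgrcint(\vH,S)}(n-\ell+1)$, the lattice-point enumerator of the relative interior of the rational polytope $\Pgrc(\vH,S)$.

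Next I would invoke Lemma~\ref{lem:pgrc}, which gives the dimension of $\Pgrc(\vH,S)$ as $|S|-|V(S)|$ and identifies its vertex set explicitly in terms of cycles of $D(S)$ (scaled by $|C|$ and, in the last $d$ coordinates, by $p|C|$). In particular the denominators of the vertex coordinates all divide $\lambda_{\rm GRC}=\lcm(\{|C|:C\text{ a cycle in }D(S)\}\cup\{p\})$. Then Ehrhart's theorem (Theorem~\ref{thm:ehrhart}) says $L_{\Pgrc(\vH,S)}(t)$ is a quasipolynomial of degree $|S|-|V(S)|$ whose period divides $\lambda_{\rm GRC}$, and Ehrhart--Macdonald reciprocity (Theorem~\ref{thm:ehrhar-macdonald}) transfers this to $L_{\Pgrcint(\vH,S)}(t)$, which is therefore also a quasipolynomial of degree $|S|-|V(S)|$ with period dividing $\lambda_{\rm GRC}$.

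To extract a genuine leading term I would pass to the dilated polytope $\lambda_{\rm GRC}\,\Pgrc(\vH,S)$, which is an integer polytope by the vertex description, so by the last sentence of Theorem~\ref{thm:ehrhart} both $L_{\lambda_{\rm GRC}\Pgrc(\vH,S)}(t)$ and $L_{\lambda_{\rm GRC}\Pgrcint(\vH,S)}(t)$ are honest \emph{polynomials} of degree $|S|-|V(S)|$; call the leading coefficient of the latter $c$. Restricting to $n$ with $\lambda_{\rm GRC}\mid(n-\ell+1)$ and writing $n-\ell+1=\lambda_{\rm GRC}t$ gives
\[
|\C(\vH,\vzero)\cap\E(n;S)|=L_{\lambda_{\rm GRC}\Pgrcint(\vH,S)}(t)=c\,t^{|S|-|V(S)|}+O(t^{|S|-|V(S)|-1})
=\frac{c}{\lambda_{\rm GRC}^{|S|-|V(S)|}}\,n^{|S|-|V(S)|}+O(n^{|S|-|V(S)|-1}),
\]
and setting $c(\vH,S)\triangleq c/\lambda_{\rm GRC}^{|S|-|V(S)|}$ yields the bound, since $|\C(\vH,\vzero)\cap\pQ(n;S)|\ge|\C(\vH,\vzero)\cap\E(n;S)|$.

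The one substantive point that needs care—and the only place where the argument is not a pure quotation of earlier results—is making sure the leading coefficient $c$ is strictly positive, i.e. that the polynomial $L_{\lambda_{\rm GRC}\Pgrcint(\vH,S)}(t)$ really has degree exactly $|S|-|V(S)|$ rather than being identically smaller. This is where the hypothesis that $\C(\vH,\vzero)\cap\nullplus\vB(D(S))$ is nonempty does its work: it guarantees (as in the proof of Proposition~\ref{codepolytope} and Appendix~\ref{app:pgrc}) that $\Pgrcint(\vH,S)$ is nonempty, hence a full-dimensional rational polytope, hence its dilates eventually contain interior lattice points and $L_{\Pgrcint(\vH,S)}$ is a nonzero quasipolynomial of the stated degree. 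So the main obstacle is really just bookkeeping: confirming that the interior of $\Pgrc(\vH,S)$ is exactly the strict-inequality region $\{\vu:\vA(\vH,S)\vu=\vb,\ \vu>\vzero\}$ and is nonempty, which the paper has deferred to the appendix; everything else follows by plugging the Ehrhart machinery into Lemma~\ref{lem:pgrc} and Proposition~\ref{codepolytope}.
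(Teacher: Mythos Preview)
Your proposal is correct and follows essentially the same route as the paper: reduce to $|\C(\vH,\vzero)\cap\E(n;S)|$ via the inclusion $\E(n;S)\subseteq\pQ(n;S)$, identify this with $L_{\Pgrcint(\vH,S)}(n-\ell+1)$ by Proposition~\ref{codepolytope}, invoke Lemma~\ref{lem:pgrc} together with Ehrhart's theorem and Ehrhart--Macdonald reciprocity, dilate by $\lambda_{\rm GRC}$ to obtain an integer polytope, and read off the leading coefficient. Your added remark about why the leading coefficient is positive (nonemptiness of the interior) is a useful clarification that the paper leaves implicit.
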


Theorem \ref{thm:code} guarantees that the code size is at least $c(\vH,S)n^{|S|-|V(S)|}$
for some constant $c(\vH,S)$.
In other words, when $d$ is constant, we have $C(n,d;S)=\Omega'(n^{|S|-|V(S)|})$.
Since $C(n,d;S)\le |\Q(n;S)|=O(n^{|S|-|V(S)|})$, we have $C(n,d;S)=\Theta'(n^{|S|-|V(S)|})$.

\subsection{Systematic Encoding of Profile Vectors}\label{sec:sys}

In this subsection, we look at efficient one-to-one mappings from 
$\llbracket m\rrbracket^{N}$ to $\pQ(n;S)$.
As with usual constrained coding problems, 
we are interested in maximizing the number of messages, i.e. the size of $m^N$,
so that the number of messages is close to $|\pQ(n;S)|=\Theta'(n^{|S|-|V(S)|})$.
We achieve this goal by exhibiting a systematic encoder with $m=\Theta(n)$ and $N=|S|-|V(S)|-1$.
More formally, we prove the following theorem.

\begin{thm}[Systematic Encoder] \label{thm:sys}
Fix $n$ and $S\subseteq \bbracket{q}^\ell$. 
Pick any $m$ so that
\begin{equation}\label{eq:sys}
m\le \frac{n-\ell+1}{\binom{|V(S)|}{2}(q-1)+|S|-|V(S)|-1}.
\end{equation}
Suppose further that $D(S)$ is Hamiltonian and contains a loop.
Then, there exists a set $I\subseteq S$ of coordinates of size
$|S|-|V(S)|-1$ with the following property:
for any $\vv\in\llbracket m\rrbracket^{I}$, 
there exists an $\ell$-gram profile vector $\vu\in\pQ(n;S)$
such that $\vu|_I=\vv$.
Furthermore, $\vu$ can be found in time $O(|V(S)|)$.
\end{thm}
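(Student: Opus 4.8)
The plan is to exploit the Hamiltonicity of $D(S)$ to build a ``backbone'' profile vector that can be freely perturbed along a chosen set of coordinates. First, since $D(S)$ is Hamiltonian, fix a Hamiltonian cycle $H$ in $D(S)$; it has length $|V(S)|$ and corresponds to a closed word of length $|V(S)|+\ell-1$ whose profile vector is $\vchi(H)$. Because $D(S)$ also contains a loop, say the loop $\sigma$ at node $\vz_0\in V(S)$, we can repeat $\sigma$ as many times as we like to pad the word to the desired length. Thus the ``all-backbone'' word (traverse $H$ once, then loop at $\vz_0$ a suitable number of times) gives us a baseline profile vector in $\pQ(n;S)$, provided $n-\ell+1$ is at least the length of $H$; the condition \eqref{eq:sys} will ensure there is enough slack.

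Next I would choose the index set $I$. The $|S|$ arcs of $D(S)$, once we remove the $|V(S)|-1$ arcs forming a spanning tree of the Hamiltonian cycle (equivalently, fix a spanning tree and count chords), leave $|S|-|V(S)|+1$ ``free'' arcs; discarding one more (e.g. the loop $\sigma$, which we reserve as the padding degree of freedom) leaves exactly $N=|S|-|V(S)|-1$ coordinates, and these form $I$. The claim is: for any target values $\vv\in\llbracket m\rrbracket^I$, we can realize a profile vector $\vu\in\pQ(n;S)$ with $\vu|_I=\vv$. The construction is to traverse, for each $e\in I$, a short closed walk through the arc $e$ exactly $v_e$ times — here Hamiltonicity is crucial, because from $\vz_0$ we can reach the tail of $e$, traverse $e$, and return to $\vz_0$ along the Hamiltonian cycle, giving a closed walk of length at most $|V(S)|$ (in fact at most $|V(S)|+1$ counting $e$ itself) through $e$. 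Concatenating all these closed walks at $\vz_0$, for all $e\in I$ with multiplicity $v_e$, produces a closed word whose length is at most $m\bigl(\binom{|V(S)|}{2}(q-1)+|S|-|V(S)|-1\bigr)$ — this is where the bound \eqref{eq:sys} is used, so that the total length does not exceed $n-\ell+1$ — and then we pad out the remaining length using the loop $\sigma$ at $\vz_0$. The resulting word starts and ends at $\vz_0$, so it is a legitimate element of $\Qb(n;S)$, and by construction its $e$-coordinate is exactly $v_e$ for each $e\in I$ (the short walks are chosen so that arcs in $I$ other than $e$ are not used; arcs outside $I$ may pick up extra counts, which is fine since only $\vu|_I$ is prescribed).

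Finally, the running time: the encoder only needs to know, for each $e\in I$, the two short paths along the Hamiltonian cycle (from $\vz_0$ to the tail of $e$, and from the head of $e$ back to $\vz_0$), which can be precomputed from the cyclic order of $H$; emitting the profile vector then amounts to summing $v_e$ times the incidence vectors of these $O(|V(S)|)$-length walks, plus the loop padding. Organizing the output as a profile vector (a length-$|S|$ array) and reading off the contributions takes $O(|V(S)|)$ arithmetic updates per symbol of the Hamiltonian cycle, hence $O(|V(S)|)$ overall once we store counts rather than the explicit word. The main obstacle I anticipate is bookkeeping the total length exactly: one must verify that the greedy concatenation of the $v_e$-fold short walks, together with loop padding, can hit \emph{exactly} $n-\ell+1$ (not merely at most), which is why the loop $\sigma$ is indispensable — it provides a length-one adjustment quantum — and why the inequality in \eqref{eq:sys} is stated with the particular denominator $\binom{|V(S)|}{2}(q-1)+|S|-|V(S)|-1$, bounding the worst-case combined length of all the short detours before padding.
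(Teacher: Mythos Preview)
Your approach is combinatorial---build an explicit closed walk and read off its profile---whereas the paper's is algebraic, and the difference matters for the length accounting. First, a minor point: the paper takes $I = S \setminus (H \cup \{\sigma\})$, removing \emph{all} $|V(S)|$ Hamiltonian arcs plus the loop, so $|I| = |S| - |V(S)| - 1$; your ``spanning tree'' description removes only $|V(S)|-1$ arcs and the arithmetic does not match.

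The real gap is the length bound. Your closed walk $W_e$ goes $\vz_0 \to \text{tail}(e)$ along $H$, across $e$, then $\text{head}(e) \to \vz_0$ along $H$. If $e = (\vz_i,\vz_j)$ with $i > j$ (a ``backward'' arc in the Hamiltonian order), the two directed path segments overlap and $|W_e| = i + 1 + (|V|-j)$, which can be as large as $2|V|-1$, not $|V|+1$ as you claim. More importantly, since the profile of any closed walk satisfies flow conservation, the Hamiltonian-arc multiplicities $x_k$ produced by your concatenated walk obey the same recurrence $x_{k+1}-x_k = r_k$ (with $r_k$ the net $I$-imbalance at vertex $k$) as any other valid completion of $\vv$; they differ from the minimal solution only by a global additive constant. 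Your construction implicitly sets this constant far too high---for instance $x_1 = \sum_{e:\,\text{tail}(e)\neq\vz_0} v_e$, which can be of order $m|I|$, whereas the minimum achievable is $1$. The resulting $\sum_k x_k$ can exceed $(q-1)m\binom{|V|}{2}$ substantially, so the budget $n-\ell+1$ granted by \eqref{eq:sys} need not suffice and the padding count $y$ could be forced negative.

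The paper avoids this by skipping the walk construction and solving for the $x_k$ directly: write $x_k = 1 + \sum_{j<k} r_j$ after a cyclic relabeling that makes all partial sums nonnegative (so $\min_k x_k = 1$), then use the key observation that each vertex has at most $q-1$ incoming $I$-arcs (one of its $\le q$ in-arcs lies in $H$), giving $r_j < (q-1)m$ and hence $x_k < 1 + (k-1)(q-1)m$. Summing yields $\sum_k x_k \le (q-1)m\binom{|V|}{2}$, which is exactly the term in the denominator of \eqref{eq:sys}. This algebraic route both produces the minimal Hamiltonian multiplicities and gives the $O(|V(S)|)$ running time (one pass to form the $r_j$ and their prefix sums).
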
 

In other words, given any word $\vv$ of length $N=|I|=|S|-|V(S)|-1$, 
one can always extend it to obtain a profile vector $\vu\in \pQ(n;S)$ of length $|S|$.
As pointed out earlier, this theorem provides a simple way of constructing $\ell$-gram codes from AECCs and we sketch the construction in what follows.

Let $\phi_{\rm sys}(\vv)$ denote the profile vector resulting from Theorem \ref{thm:sys} given input $\vv$.
Consider an $m$-ary $(N,d)$-AECC $\C$ with $N=|S|-|V(S)|-1$ and $m$ satisfying \eqref{eq:sys}.
Let $\phi_{\rm sys}(\C)\triangleq \{\phi_{\rm sys}(\vv): \vv\in \C\}$.
Then $\phi_{\rm sys}(\C)\subseteq \pQ(n;S)$. Furthermore, $\phi_{\rm sys}(\C)$  has asymmetric distance at least $d$ 
since restricting the code $\phi_{\rm sys}(\C)$ on the coordinates in $I$ yields $\C$.
Hence, we have the following corollary.

\begin{cor}\label{cor:sys}
Fix $n$ and $S\subseteq \bbracket{q}^\ell$ and pick $m$ satisfying \eqref{eq:sys}.
Suppose $D(S)$ is Hamiltonian and contains a loop.
If  $\C$ is an $m$-ary $(|S|-|V(S)|-1,d)$-AECC, then
$\phi_{\rm sys}(\C)\triangleq \{\phi_{\rm sys}(\vv): \vv\in \C\}$
is a $(n,d;S)$-GRC.
%
\end{cor}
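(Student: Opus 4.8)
The plan is to deduce the corollary directly from Theorem \ref{thm:sys}, whose hypotheses ($D(S)$ Hamiltonian and containing a loop, and $m$ satisfying \eqref{eq:sys}) are exactly those assumed here. First I would fix the index set $I\subseteq S$ with $|I|=|S|-|V(S)|-1$ provided by Theorem \ref{thm:sys} and identify the $m$-ary code $\C$ of length $N=|S|-|V(S)|-1$ with a subset of $\bbracket{m}^I$ through a fixed bijection between the $N$ coordinate positions and $I$. Then the two things to verify are: (i) $\phi_{\rm sys}(\C)$ is a subset of $\pQ(n;S)$ (equivalently of $\Q(n;S)$) with $|\phi_{\rm sys}(\C)|=|\C|$; and (ii) the minimum $\ell$-gram distance of $\phi_{\rm sys}(\C)$ is at least $d$.

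For (i), Theorem \ref{thm:sys} guarantees that for every $\vv\in\bbracket{m}^I$ the vector $\vu=\phi_{\rm sys}(\vv)$ lies in $\pQ(n;S)$ and satisfies $\vu|_I=\vv$. Hence the coordinate-restriction map $\vu\mapsto\vu|_I$ is a left inverse of $\phi_{\rm sys}$, so $\phi_{\rm sys}$ is injective and $|\phi_{\rm sys}(\C)|=|\C|$; membership $\phi_{\rm sys}(\C)\subseteq\pQ(n;S)$ is part of the conclusion of Theorem \ref{thm:sys}.

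For (ii), I would take distinct $\vv,\vv'\in\C$, set $\vu=\phi_{\rm sys}(\vv)$ and $\vu'=\phi_{\rm sys}(\vv')$, and use the single elementary fact that $\Delta(\vu,\vu')=\sum_{\vz\in S}\max(u_\vz-u'_\vz,0)$ is a sum of nonnegative terms over the coordinate set $S\supseteq I$, so dropping the positions in $S\setminus I$ can only decrease it: $\Delta(\vu,\vu')\ge\Delta(\vu|_I,\vu'|_I)=\Delta(\vv,\vv')$, and symmetrically $\Delta(\vu',\vu)\ge\Delta(\vv',\vv)$. Taking maxima gives $\dg(\vu,\vu';S)=\da(\vu,\vu')\ge\da(\vv,\vv')\ge d$. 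Since $\phi_{\rm sys}$ is injective on $\C$, every pair of distinct codewords of $\phi_{\rm sys}(\C)$ is of this form, so $\phi_{\rm sys}(\C)\subseteq\pQ(n;S)$ has minimum $\ell$-gram distance at least $d$, i.e.\ it is an $(n,d;S)$-GRC.

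I do not expect any real obstacle: the substance is entirely in Theorem \ref{thm:sys}, and the corollary rests only on the observation that restricting profile vectors to a subset of coordinates never increases the asymmetric distance $\da$ (because $\Delta$ is a sum of nonnegative contributions). The minor points to keep in mind are the bookkeeping identification of the abstract coordinate set of the $m$-ary AECC with $I$, and the convention---already used in the paragraph preceding the corollary---that calling something an $(n,d;S)$-GRC means its minimum $\ell$-gram distance is at least $d$.
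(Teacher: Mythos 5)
Your proposal is correct and follows essentially the same route as the paper: the paper's justification (in the paragraph preceding the corollary) is precisely that restricting $\phi_{\rm sys}(\C)$ to the coordinates in $I$ recovers $\C$, so the asymmetric distance cannot drop below $d$; you have simply made explicit the two ingredients the paper leaves implicit, namely injectivity of $\phi_{\rm sys}$ via the left inverse $\vu\mapsto\vu|_I$ and the monotonicity of $\Delta$ (hence of $\da$) under coordinate restriction.
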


For compactness, we write $V$, $\vA$ and $\vB$, instead of $V(S)$, $\vA(S)$ and $\vB(D(S))$.
To prove Theorem \ref{thm:sys}, consider the restricted de Bruijn digraph $D(S)$.
By the assumptions of the theorem, denote the set of $|V|$ arcs in a Hamiltonian cycle as $H$
and the arc corresponding to a loop by $\va_0$. 
We set $I$ to be $S\setminus (H\cup \{\va_0\})$.

We reorder the coordinates so that the arcs in $H$ are ordered first, 
followed by the arc $\va_0$ and then the arcs in $I$.
So, given $\vv=(v_1,v_2,\ldots, v_{|I|})\in\bbracket{m}^{|I|}$,
the proof of Theorem \ref{thm:sys} essentially reduces to finding integers 
 $x_1,x_2,\ldots, x_{|V|},y$ 
such that 
\begin{equation}
  \label{eq:sysvector}
  \vA\left(x_1,x_2,\ldots, x_{|V|},y,v_1,v_2,\ldots,v_{|I|}\right)^T=(n-\ell+1)\vb.  
\end{equation}
Considering the first row of $\vA$ separately from the remaining rows, we see
that \eqref{eq:sysvector} is equivalent to the following system of equations:
\begin{align}
  \sum_{i=1}^{|V|} x_i+y &=(n-\ell+1)-\sum_{i=1}^{|I|} v_i, \label{sumflow} \\
  \vzero=\vB\left(\begin{array}{c} x_1\\ \vdots \\ x_{|V|} \\ y \\ u_1 \\ \vdots \\ u_{|I|}\end{array}\right) &=
  \vB\left(\begin{array}{c} x_1\\ \vdots \\ x_{|V|} \\ 0 \\ 0 \\ \vdots \\ 0\end{array}\right) +
  \vB\left(\begin{array}{c} 0\\ \vdots \\ 0 \\ y \\ 0 \\ \vdots \\ 0\end{array}\right) +
  \vB\left(\begin{array}{c} 0\\ \vdots \\ 0 \\ 0 \\ u_1 \\ \vdots \\ u_{|I|}\end{array}\right).\label{flowconstraint0}
\end{align}
Since the first $|V|$ columns of $\vB$ correspond to the arcs in $H$, we have 
\[ 
\vB\left( x_1, \ldots , x_{|V|} , 0 , 0 , \ldots , 0\right)^T =
\left(\begin{array}{c} x_2-x_1\\ x_3-x_2 \\ \vdots \\ x_1-x_{|V|} \end{array}\right).\]
Since the $(|V|+1)$-th column of $\vB$ is a $\vzero$-column, we have 
$\vB\left( 0, \ldots , 0 , y , 0 , \ldots , 0\right)^T =\vzero$ for any $y$.

For the final summand, let $\vB\left( 0, \ldots , 0 , 0 , v_1 , \ldots , v_{|I|}\right)^T =(r_1,r_2,\ldots,r_{|V|})^T$.
We can then rewrite \eqref{flowconstraint0} as
\begin{equation}\label{flowconstraint}
x_i-x_{i+1}=r_i, \mbox{ for } 1\le i \le |V|-1.
\end{equation}
Since $\vone^T\vB=\vzero^T$, we have 
$\vone^T(r_1,r_2,\ldots,r_{|V|})^T=\sum_{i=1}^{|V|} r_i=0$. 
Furthermore, we assume without loss of generality 
that $\sum_{i=1}^j r_i\ge 0$, for all $1\le j\le |V|$.
This can be achieved by cyclically relabelling the nodes.

It suffices to show that an integer solution for 
\eqref{flowconstraint} and \eqref{sumflow} exists, satisfying $y \geq 1$ and $x_i\ge 1$ for $i\in  [|V|]$.
Consider the following choices of $x_i$ and $y$:
\begin{align*}
  x_i &= 1 + \sum_{j=1}^{i-1}r_j, \\
  y &= (n-\ell+1)-\sum_{i=1}^{|I|} v_i - \sum_{i=1}^{|V|}x_i.
\end{align*}
Clearly, $x_i$ and $y$ satisfy \eqref{flowconstraint} and
\eqref{sumflow}. Since each $v_i$ is an integer, all $r_i$ are
integers, so $x_i$ and $y$ are also integers. Furthermore, each $x_i
\geq 1$, since we chose the labeling so that $\sum_{j=1}^{i-1}r_j \geq
0$. We still must show that $y \geq 1$.

First, we observe that $r_i< (q-1)m$ for all $i$, since each vertex has at most $(q-1)$ incoming arcs in $I$ and 
by design, each $v_i$ is strictly less than $m$. Thus, each $x_i$ satisfies
\[ x_i < 1 + (i-1)(q-1)m. \]
Summing over all $i$, we have
\[ \sum_{i=1}^{|V|}x_i \leq \sum_{i=1}^{|I|}(i-1)(q-1)m = (q-1)m{\sizeof{V} \choose 2}. \]
Since also each $v_i \leq m$, we have
\[ y \geq (n-\ell+1) - m\left[\sizeof{I} + (q-1){\sizeof{V} \choose 2}\right]. \]
By the choice of $m$, it follows that $y\ge 0$. This completes the proof of Theorem~\ref{thm:sys}.

\begin{exa}\label{exa:systematic}
Let $S=\bbracket{2}^3$ and let $n=20$. Then Theorem \ref{thm:sys} states that there is a systematic encoder
that maps words from $\llbracket 2\rrbracket^{3}$ into $\pQ(20;2,3)$.
We list all eight encoded profile vectors with their systematic part highlighted in boldface.

\begin{center}
\begin{tabular}{cccc}
\xymatrix@=8mm{
00 \ar[d]_{1}\ar@(u,l)[]_{14} & 10\ar[l]_{1}\ar@/^/[dl]^{{\bf0}}\\
01 \ar[r]_{1}\ar@/^/[ur]^{{\bf0}} & 11 \ar[u]_{1}\ar@(d,r)[]_{{\bf0}}
} &
\xymatrix@=8mm{
00 \ar[d]_{1}\ar@(u,l)[]_{13} & 10\ar[l]_{1}\ar@/^/[dl]^{{\bf0}}\\
01 \ar[r]_{1}\ar@/^/[ur]^{{\bf0}} & 11 \ar[u]_{1}\ar@(d,r)[]_{{\bf1}}
} &
\xymatrix@=8mm{
00 \ar[d]_{1}\ar@(u,l)[]_{11} & 10\ar[l]_{1}\ar@/^/[dl]^{{\bf1}}\\
01 \ar[r]_{2}\ar@/^/[ur]^{{\bf0}} & 11 \ar[u]_{2}\ar@(d,r)[]_{{\bf0}}
} &
\xymatrix@=8mm{
00 \ar[d]_{1}\ar@(u,l)[]_{10} & 10\ar[l]_{1}\ar@/^/[dl]^{{\bf1}}\\
01 \ar[r]_{2}\ar@/^/[ur]^{{\bf0}} & 11 \ar[u]_{2}\ar@(d,r)[]_{{\bf1}}
} \\

\xymatrix@=8mm{
00 \ar[d]_{1}\ar@(u,l)[]_{11} & 10\ar[l]_{2}\ar@/^/[dl]^{{\bf0}}\\
01 \ar[r]_{2}\ar@/^/[ur]^{{\bf1}} & 11 \ar[u]_{1}\ar@(d,r)[]_{{\bf0}}
} &
\xymatrix@=8mm{
00 \ar[d]_{2}\ar@(u,l)[]_{10} & 10\ar[l]_{2}\ar@/^/[dl]^{{\bf0}}\\
01 \ar[r]_{1}\ar@/^/[ur]^{{\bf1}} & 11 \ar[u]_{1}\ar@(d,r)[]_{{\bf1}}
} &
\xymatrix@=8mm{
00 \ar[d]_{1}\ar@(u,l)[]_{12} & 10\ar[l]_{1}\ar@/^/[dl]^{{\bf1}}\\
01 \ar[r]_{1}\ar@/^/[ur]^{{\bf1}} & 11 \ar[u]_{1}\ar@(d,r)[]_{{\bf0}}
} &
\xymatrix@=8mm{
00 \ar[d]_{1}\ar@(u,l)[]_{11} & 10\ar[l]_{1}\ar@/^/[dl]^{{\bf1}}\\
01 \ar[r]_{1}\ar@/^/[ur]^{{\bf1}} & 11 \ar[u]_{1}\ar@(d,r)[]_{{\bf1}}
} 
\end{tabular}
\end{center}

For instance, the codeword $000\in\bbracket{2}^3$ is mapped to the profile vector
$(18,1,{\bf 0},1,1,{\bf 0}, 1,{\bf 0})$.
Via the {\sc Euler} map described in Section \ref{sec:decoding}, 
this profile vector is mapped to $00\cdots01100\in\Q(24,2,3)$.

Observe that we can systematically encode $\llbracket 2\rrbracket^3$ into $\pQ(n;2,3)$
even when $n$ is smaller than 24.
In fact, in this example, we can systematically encode $\llbracket 2\rrbracket^{3}$ into $\pQ(10;2,3)$.
In general, we can can systematically encode $\llbracket m\rrbracket^{3}$ into $\pQ(4m+2;2,3)$.
In this case, the size of the message set is approximately $n^3/8$ while
the number of all possible closed profile vectors is approximately $n^4/288$~\cite{Jacquet.etal:2012}.

%

In Section \ref{sec:numerical} and Example \ref{exa:intersect}, we observe that
the construction given in Section \ref{sec:intersect} yields a larger code size.
Nevertheless, the systematic encoder is conceptually simple and 
furthermore, the systematic property of the construction in Section \ref{sec:sys} 
can be exploited to integrate rank modulation codes into our coding schemes for DNA storage, useful for
automatic decoding via \emph{hybridization}. We describe this procedure in detail in Section \ref{sec:rankmod}.
\end{exa}

\section{Numerical Computations for $S=S(q,\ell;q^*,[w_1,w_2])$} \label{sec:numerical}

In what follows, we summarize numerical results for code sizes pertaining to the special case when $S=S(q,\ell;q^*,[w_1,w_2])$.

By Proposition \ref{debruijn}, $D(q,\ell;q^*,[w_1,w_2])$ is Eulerian and therefore strongly connected.
In other words, Theorem \ref{thm:closed} applies and we have $|\Q(n;S)|=\Theta'(n^{|S|-|V(S)|})$,
where $|S|$ is given by $|S(q,\ell;q^*,[w_1,w_2])|=\sum_{w=w_1}^{w_2}\binom{\ell}{w}(q^*)^w(q-q^*)^{\ell-w}$,
while $|V(S)|$ is given by $|S(q,\ell-1;q^*,[w_1-1,w_2])|=\sum_{w=w_1-1}^{w_2}\binom{\ell-1}{w}(q^*)^w(q-q^*)^{\ell-1-w}$.

Let $D=|S|-|V(S)|$. We determine next the coefficient of $n^D$ in $|\Q(n;S)|$.
When $w_2=\ell$, the digraph $D(q,\ell;q^*,[w_1,\ell])$ 
contains the loop that correspond to the $\ell$-gram $\vone^T$.
Hence, by Corollary \ref{cor:aperiodic}, the desired coefficient is constant and 
we denote it by $c(q,\ell;q^*,[w_1,\ell])$. 
When $S=\bbracket{q}^\ell$, we denote this coefficient by $c(q,\ell)$ and 
remark that this value corresponds to the constant defined in Theorem~\ref{jacquet}.

When $w_2<\ell$, the digraph $D(q,\ell;q^*,[w_1,w_2])$ does not contain any loops.
Recall from Section \ref{sec:enumerate} the definitions of $\PP(S)$, $\lambda_S$ and $L_{\PP(S)}(n-\ell+1)$.
In particular, recall that the lattice point enumerator $L_{\PP(S)}(n-\ell+1)$ is a quasipolynomial of degree $D$
whose period divides $\lambda_S$ and that consequently, the coefficient of $n^D$ in $|\Q(n;S)|$ is periodic.
For ease of presentation, we only determine the coefficient of $n^D$
for those values for which $\lambda_S$ divides $(n-\ell+1)$ or $n-\ell+1=\lambda_St$ for some integer $t$.
In this instance, the desired coefficient is given by $c(q,\ell;q^*,[w_1,w_2])\triangleq c/\lambda_S^D$,
where $c$ is the leading coefficient of the polynomial $L_{\lambda_S\PP(S)}(t)$.

In summary, we have the following corollary.

\begin{cor}\label{cor:cql}
Consider $S=S(q,\ell;q^*,[w_1,w_2])$ and define 

$$D=\sum_{w=w_1}^{w_2}\binom{\ell}{w}(q^*)^w(q-q^*)^{\ell-w}-
\sum_{w=w_1-1}^{w_2}\binom{\ell-1}{w}(q^*)^w(q-q^*)^{\ell-1-w}.$$ 

Suppose that $\lambda_S=\lcm\{|C|: C \mbox{ is a cycle in }D(S)\}$.
Then for some constant $c(q,\ell;q^*,[w_1,w_2])$,
\begin{enumerate}
\item If $w_2=\ell$, $|\Q(n;S)|=c(q,\ell;q^*,[w_1,\ell])n^D+O(n^{D-1})$ for all $n$;
\item Otherwise, if $w_2<\ell$,  $|\Q(n;S)|=c(q,\ell;q^*,[w_1,w_2])n^D+O(n^{D-1})$ for all $n$ such that 
$\lambda_S|(n-\ell+1)$.
\end{enumerate}
When $S=\bbracket{q}^\ell$, we write $c(q,\ell)$ instead of $c(q,\ell;1,[0,\ell])$.
\end{cor}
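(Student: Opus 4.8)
By Proposition~\ref{debruijn} the graph $D(S)=D(q,\ell;q^*,[w_1,w_2])$ is Eulerian, hence strongly connected, so Theorem~\ref{thm:closed}, Corollary~\ref{cor:notclosed} and Corollary~\ref{cor:aperiodic} are all available; counting $\ell$-grams and $(\ell-1)$-grams of prescribed $q^*$-weight gives the stated formulas for $|S|$ and $|V(S)|$, so $D:=|S|-|V(S)|$ is as in the statement. The goal is to sharpen $|\pQ(n;S)|=\Theta'(n^{D})$ (Corollary~\ref{cor:notclosed}) into an expansion with a single leading coefficient. The first step is the \emph{disjoint} decomposition
\[
\pQ(n;S)=\pQb(n;S)\;\sqcup\;\bigsqcup_{\vz\neq\vz'}\pQ(n;S,\vz\to\vz').
\]
This is legitimate because, for a word $\vx$ whose first and last $(\ell-1)$-grams are $\vz$ and $\vz'$, the vector $\vB(D(S))\vp(\vx;S)$ telescopes along the corresponding walk: it is $\vzero$ when $\vz=\vz'$ and otherwise the vector with $+1$ in coordinate $\vz'$, $-1$ in coordinate $\vz$ and zeros elsewhere. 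Thus a profile with $\vB(D(S))\vu\neq\vzero$ has its endpoints uniquely determined, while a profile with $\vB(D(S))\vu=\vzero$ is closed and lies in $\pQb(n;S)$. Hence $|\pQ(n;S)|=|\pQb(n;S)|+\sum_{\vz\neq\vz'}|\pQ(n;S,\vz\to\vz')|$, and it suffices to control each term $|\pQ(n;S,\vz\to\vz')|$. For this I would generalize the injections $\phi_1,\phi_2$ from the proof of Corollary~\ref{cor:notclosed} to allow \emph{any} connecting walk, not only a shortest one: given a $\vz\to\vz'$ walk of length $a$ and a $\vz'\to\vz$ walk of length $b$, appending them gives injections $\E(n-a;S)\hookrightarrow\pQ(n;S,\vz\to\vz')\hookrightarrow\pQb(n+b;S)$, so $|\E(n-a;S)|\le|\pQ(n;S,\vz\to\vz')|\le|\pQb(n+b;S)|$.

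\textbf{The case $w_2=\ell$.}
Here the constant $\ell$-gram $(q-1)^{\ell}$ lies in $S$, so $D(S)$ has a loop and Corollary~\ref{cor:aperiodic} supplies a constant $c(S)$ with $|\E(m;S)|=|\pQb(m;S)|=c(S)m^{D}+O(m^{D-1})$ for all large $m$. Taking $a,b$ to be the lengths of shortest connecting walks (constants bounded by the diameter of $D(S)$), the squeeze above forces $|\pQ(n;S,\vz\to\vz')|=c(S)n^{D}+O(n^{D-1})$ for every ordered pair with $\vz\neq\vz'$; summing over the decomposition yields $|\pQ(n;S)|=c(S)\bigl(|V(S)|^{2}-|V(S)|+1\bigr)n^{D}+O(n^{D-1})$ for all $n$ (the finitely many small $n$ absorbed into the error term), and I would take $c(q,\ell;q^{*},[w_1,\ell])$ to be this constant. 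For $S=\bbracket{q}^{\ell}$ with $\ell=2$ this reproduces the factor $q^{2}-q+1$ of Theorem~\ref{jacquet}, a useful consistency check.

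\textbf{The case $w_2<\ell$.}
Now $D(S)$ has no loops, and I would restrict to $n$ with $\lambda_S\mid(n-\ell+1)$. On this arithmetic progression Theorem~\ref{thm:closed} together with Ehrhart's theorem (Theorem~\ref{thm:ehrhart}) makes $|\E(\cdot;S)|$ and $|\F(\cdot;S)|$ honest polynomials of degree $D$; since by Lemma~\ref{int:ps} the set $\F(n;S)\setminus\E(n;S)$ consists of the lattice points on the relative boundary of $(n-\ell+1)\PP(S)$, a union of faces of dimension $<D$, we have $|\F(n;S)|-|\E(n;S)|=O(n^{D-1})$, so $|\E(\cdot;S)|$, $|\pQb(\cdot;S)|$ and $|\F(\cdot;S)|$ share one leading term $c_0n^{D}$ on this progression. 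Call $(\vz,\vz')$ \emph{admissible} if $D(S)$ has a $\vz\to\vz'$ walk of length divisible by $\lambda_S$. Since any two $\vz\to\vz'$ walks differ by an integer combination of cycles, their lengths are congruent modulo $g:=\gcd\{|C|:C\text{ a cycle of }D(S)\}$, and $g\mid\lambda_S$; hence for an inadmissible pair no $\vz\to\vz'$ word of length $n$ exists once $\lambda_S\mid(n-\ell+1)$, so $\pQ(n;S,\vz\to\vz')=\emptyset$ there. For an admissible pair, choosing the connecting-walk lengths $a,b$ divisible by $\lambda_S$ keeps $n-a$ and $n+b$ in the progression, so the squeeze gives $|\pQ(n;S,\vz\to\vz')|=c_0n^{D}+O(n^{D-1})$. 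Summing, $|\pQ(n;S)|=c_0\bigl(1+\#\{\text{admissible pairs}\}\bigr)n^{D}+O(n^{D-1})$ whenever $\lambda_S\mid(n-\ell+1)$, and I would set $c(q,\ell;q^{*},[w_1,w_2])$ equal to this constant (again small $n$ absorbed).

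\textbf{Main obstacle.}
Everything in the first case is essentially bookkeeping on top of Corollary~\ref{cor:aperiodic}; the real work is the second case. One must simultaneously make the leading coefficients of $|\E|$, $|\pQb|$ and $|\F|$ literally coincide on the progression $\lambda_S\mid(n-\ell+1)$ (the boundary-lattice-point estimate) and choose connecting walks whose lengths are divisible by $\lambda_S$ so that the squeeze $|\E(n-a;S)|\le|\pQ(n;S,\vz\to\vz')|\le|\pQb(n+b;S)|$ is evaluated \emph{inside} that progression rather than on a shifted residue class, where the relevant quasipolynomial may have strictly smaller degree; the congruence argument killing the inadmissible pairs is exactly what makes this choice consistent.
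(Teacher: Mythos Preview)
Your argument is correct and in fact considerably more careful than the paper's own treatment. The paper offers no formal proof: the discussion preceding the corollary invokes Proposition~\ref{debruijn}, Theorem~\ref{thm:closed} and (for $w_2=\ell$) Corollary~\ref{cor:aperiodic}, but those results concern $|\pQb(n;S)|$, $|\E(n;S)|$ and $|\F(n;S)|$, and the passage from these to $|\pQ(n;S)|$ is never made explicit. (Corollary~\ref{cor:notclosed} gives only $\Theta'$, not a single leading coefficient; indeed the constants the paper tabulates are leading coefficients of $|\pQb|$, not of $|\pQ|$.) Your endpoint decomposition $\pQ(n;S)=\pQb(n;S)\sqcup\bigsqcup_{\vz\neq\vz'}\pQ(n;S,\vz\to\vz')$, justified via $\vB(D(S))\vp(\vx;S)$, together with the generalized $\phi_1,\phi_2$ squeeze supplies exactly this bridge, and as a bonus identifies the leading constant as $c(S)\bigl(|V(S)|^{2}-|V(S)|+1\bigr)$ in the loop case, recovering the factor $q^2-q+1$ of Theorem~\ref{jacquet}.

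One point to tighten in the $w_2<\ell$ case: you use not only a $\vz\to\vz'$ walk of length divisible by $\lambda_S$ (the definition of admissible) but also a $\vz'\to\vz$ walk of that type, so that $b$ can be chosen in the same residue class. This does follow from your gcd observation---any $\vz'\to\vz$ walk has length $\equiv 0\pmod g$ since its concatenation with the admissible $\vz\to\vz'$ walk is closed, and one can then pad by closed walks at $\vz$ (which, by strong connectivity, eventually realize all large multiples of $g$) to reach a multiple of $\lambda_S$---but it deserves an explicit sentence. Note also that the emptiness of $\pQ(n;S,\vz\to\vz')$ for inadmissible pairs on the progression $\lambda_S\mid(n-\ell+1)$ is immediate from the definition of ``inadmissible''; the congruence argument is needed on the admissible side to manufacture $a$ and $b$, not (as your final paragraph suggests) to ``kill'' the inadmissible pairs.
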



We determine $c(q,\ell;q^*,[w_1,w_2])$ via numerical computations.
Computing the lattice point enumerator is a fundamental problem in discrete optimization and 
many algorithms and software implementations have been developed for such purposes.
We make use of the software {\tt LattE}, developed by Baldoni \etal{} \cite{Baldoni.etal:2014},
which is based on an algorithm of Barvinok~\cite{Barvinok:1994}. 
Barvinok's algorithm essentially triangulates the supporting cones of the vertices
of a polytope to obtain simplicial cones and then decompose the simplicial cones recursively
into unimodular cones. As the rational generating functions of the
resulting unimodular cones can be written down easily, 
adding and subtracting them according to the inclusion-exclusion principle and 
Brion's theorem gives the desired rational generating function of the polytope.
The algorithm is shown to enumerate the number of lattice points in polynomial time
when the dimension of the polytope is fixed.

Using {\tt LattE}, we computed the desired coefficients for various values of $(q,\ell;q^*,[w_1,w_2])$.
As an illustrative example, {\tt LattE} determined $c(2,4)= 283/9754214400$ with computational time less than a minute. This shows that although the exact evaluation of $c(q,\ell)$ is prohibitively complex (as pointed by Jacquet \etal{} \cite{Jacquet.etal:2012}), numerical computations of $c(q,\ell)$ and $c(q,\ell;q^*,[w_1,w_2])$
are feasible for certain moderate values of parameters. We tabulate these values in Table \ref{tab:cql} and \ref{tab:cqlw}.

\begin{table}[!t]
\caption{Computation of $c(q,\ell)$}
\label{tab:cql}
\begin{center}
\begin{tabular}{cccc}\hline
$q$ & $\ell$ & $D$ & $c(q,\ell)$  \\\hline
2 & 2 & 2 & 1/4* \\
3 & 2 & 6 & 1/8640* \\
4 & 2 & 12 & 1/45984153600* \\
5 & 2 & 20 & 37/84081093402584678400000* \\
2 & 3 & 4 & 1/288* \\
3 & 3 & 18 & 887/358450977137334681600000 \\
2 & 4 & 8 & 283/9754214400 \\
2 & 5 & 16 & 722299813/94556837526637331349504000000 \\
\hline
\end{tabular}
\vspace{2mm}

Entries marked by an asterisk refer to values that were also derived by Jacquet \etal{} \cite{Jacquet.etal:2012}.
\end{center}
\end{table}

\begin{table}[!t]
\caption{Computation of $c(q,\ell;q^*,[w_1,w_2])$. We fixed $q=2$ and $q^*=1$.}
\label{tab:cqlw}
\begin{center}
\begin{tabular}{cccccc}\hline
$\ell$ & $w_1$ & $w_2$ & $D$ & $\lambda_S$ & $c(2,\ell;1,[w_1,w_2])$  \\\hline
4 & 2 & 3 & 3 & 60 & 1/360 \\
4 & 2 & 4 & 4 & -- & 1/1440 \\
5 & 2 & 3 & 6 & 120 & 1/5184000 \\
5 & 2 & 4 & 10 & 27720 & 40337/34566497280000000 \\
5 & 2 & 5 & 11 & -- & 3667/34566497280000000 \\
5 & 3 & 4 & 4 & 420 & 23/302400 \\
5 & 3 & 5 & 5 & -- & 23/1512000 \\
6 & 3 & 4 & 10 & 65520 & 43919/754932300595200000 \\
6 & 3 & 5 & 15 & 5354228880 & 1106713336565579/739506679855711968646397952000000000 \\
6 & 4 & 5 & 5 & 840 & 1/518400 \\
\hline
\end{tabular}
\end{center}
\end{table}

\subsection{Lower Bounds on Code Sizes}

Next, we provide numerical results for lower bounds on the code sizes derived in Section \ref{sec:intersect}.

When $S=S(q,\ell;q^*,[w_1,w_2])$, the digraph $D(S)$ is Eulerian by Proposition \ref{debruijn}
and hence, $\vone$ belongs to ${\rm Null}_{>0}\vB(D(S))$.
Therefore, if $\C(\vH,\vzero)$ contains the vector $\vone$ as well, 
$\C(\vH,\vzero)\cap {\rm Null}_{>0}\vB(D(S))$ is nonempty and 
the condition of Theorem \ref{thm:code} is satisfied. 
Hence, we have the following corollary.

\begin{cor}\label{cor:cHS}
Let $S=S(q,\ell;q^*,[w_1,w_2])$. 
Fix $d$ and choose $\vH$ and $p$ such that $\C(\vH,\vzero)$
is an $(|S|,d+1)$-AECC containing $\vone$.
Suppose that
$\lambda_{\rm GRC}=\lcm\{{\{|C|: C\mbox{ is a cycle in }D(S)\}\cup\{p\}\}}$.
Then there exists a constant $c(\vH,S)$ such that whenever $\lambda_{\rm GRC}|(n-\ell+1)$,
\[
|\C(\vH,\vzero)\cap\pQ(n;S)|\ge c(\vH,S){n}^D+O(n^{D-1}),
\]
where $D=\sizeof{S} - \sizeof{V(S)}=\sum_{w=w_1}^{w_2}\binom{\ell}{w}(q^*)^w(q-q^*)^{\ell-w}-
\sum_{w=w_1-1}^{w_2}\binom{\ell-1}{w}(q^*)^w(q-q^*)^{\ell-1-w}$.  
\end{cor}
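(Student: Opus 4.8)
The plan is to derive Corollary~\ref{cor:cHS} as a direct specialization of Theorem~\ref{thm:code}, so the work reduces to checking that the hypotheses of Theorem~\ref{thm:code} hold when $S=S(q,\ell;q^*,[w_1,w_2])$ and when $\C(\vH,\vzero)$ is chosen to contain the all-ones vector $\vone$. First I would invoke Proposition~\ref{debruijn}: since $1\le q^*\le q-1$ and $1\le w_1<w_2\le\ell$, the restricted de Bruijn graph $D(S)=D(q,\ell;q^*,[w_1,w_2])$ is Eulerian. An Eulerian digraph is in particular strongly connected, so the first hypothesis of Theorem~\ref{thm:code} is satisfied. Moreover, an Eulerian walk traverses every arc exactly once, and more generally every arc a positive number of times when we repeat it; concretely, if $D(S)$ is Eulerian then the incidence vector of an Eulerian circuit is $\vone$, and $\vB(D(S))\vone=\vzero$ because every closed walk lies in the kernel of the incidence matrix. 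Hence $\vone\in\nullplus\vB(D(S))$.

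Next I would use the assumption that $\vH$ and $p$ are chosen so that $\C(\vH,\vzero)$ is an $(|S|,d+1)$-AECC containing $\vone$. Combining this with the previous paragraph, $\vone\in \C(\vH,\vzero)\cap \nullplus\vB(D(S))$, so this intersection is nonempty, which is precisely the second hypothesis of Theorem~\ref{thm:code}. (One should remark that such an $\vH$ exists: if $p \mid \sum_{i} \alpha_i^k$ for $1 \le k \le d$ fails we can instead take $\vbeta = \vH\vone \bmod p$ and translate, but since the corollary is stated for a choice of $\vH$ with $\vbeta=\vzero$ and $\vone$ in the code, this is part of the hypothesis and needs no separate argument here.) With both hypotheses of Theorem~\ref{thm:code} verified, the theorem yields a constant $c(\vH,S)$ such that whenever $\lambda_{\rm GRC}\mid (n-\ell+1)$,
\[
|\C(\vH,\vzero)\cap\pQ(n;S)|\ge c(\vH,S)\,n^{|S|-|V(S)|}+O(n^{|S|-|V(S)|-1}),
\]
where $\lambda_{\rm GRC}=\lcm(\{|C|:C\text{ is a cycle in }D(S)\}\cup\{p\})$, exactly as in the statement.

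Finally, I would substitute the explicit values $|S|=|S(q,\ell;q^*,[w_1,w_2])|=\sum_{w=w_1}^{w_2}\binom{\ell}{w}(q^*)^w(q-q^*)^{\ell-w}$ and $|V(S)|=|S(q,\ell-1;q^*,[w_1-1,w_2])|=\sum_{w=w_1-1}^{w_2}\binom{\ell-1}{w}(q^*)^w(q-q^*)^{\ell-1-w}$, which were already established in the surrounding discussion of Section~\ref{sec:numerical} (the nodes of $D(S)$ are the $(\ell-1)$-grams appearing in words of $S$, namely those of $q^*$-weight in $[w_1-1,w_2]$, and the arcs are indexed by $S$ itself). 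This gives $D=|S|-|V(S)|$ equal to the stated difference of binomial sums, and the corollary follows. The main obstacle, such as it is, is not a deep one: it is simply making sure that the Eulerian property from Proposition~\ref{debruijn} genuinely delivers $\vone\in\nullplus\vB(D(S))$ and that the node/arc counts of $D(S)$ are correctly identified with the weight-restricted $(\ell-1)$-gram and $\ell$-gram sets, so that the exponent $D$ matches. Everything else is a mechanical instantiation of Theorem~\ref{thm:code}.
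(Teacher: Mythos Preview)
Your proposal is correct and follows essentially the same approach as the paper: the paper's justification (given in the paragraph immediately preceding the corollary) is precisely that Proposition~\ref{debruijn} makes $D(S)$ Eulerian, hence strongly connected with $\vone\in\nullplus\vB(D(S))$, so the hypothesis $\vone\in\C(\vH,\vzero)$ forces the intersection to be nonempty and Theorem~\ref{thm:code} applies, with the explicit value of $D=|S|-|V(S)|$ read off from the weight-restricted $\ell$-gram and $(\ell-1)$-gram counts stated earlier in Section~\ref{sec:numerical}. Your write-up spells out the same steps, with slightly more detail on why Eulerian implies $\vB(D(S))\vone=\vzero$.
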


\begin{exa}\label{exa:intersect}
Let $S=\bbracket{2}^3$ and $d=2$.
Choose $p=13$ and 
\[\vH=\left(\begin{array}{cccc cccc}
1 & 2 & 3 & 5 & 8 & 10 & 11 & 12 \\
1 & 4 & 9 & 12 & 12 & 9 & 4 & 1
\end{array}\right).\]
Then $\C(\vH,\vzero)$ is an $(8,3)$-AECC containing $\vone$.
We have $\lambda_{\rm GRC}=\lcm\{{\{1,2,\ldots,8\}\cup\{13\}\}}=156$.
Using {\tt LattE}, we compute the lattice point enumerator of $\lambda_{\rm GRC}\Pgrcint(\vH,S)$ to be
$12168  t^{4} - 1248  t^{3} + 131 t^{2} - 16 t + 1$.
Hence, for $n=156t+2$, 
the number of codewords in 
$\C(\vH,\vzero)\cap\E(n;2,3)$ is given by $12168 t^{4} - 1248 t^{3} + 131 t^{2} - 16 t + 1$.
When $t=1$ or $n=158$, there exist a $(158,3;2,3)$-GRC of size at least $11036$.

We compare this result with the one provided by the construction using the systematic encoder described in Section \ref{sec:sys}
and in particular, Example \ref{exa:systematic}.
When $n=158$, we can systematically encode words in $\bbracket{39}^3$ into $\pQ(158;2,3)$.
Hence, we consider a $39$-ary $(3,3)$-AECC. Using Varshamov's construction with $p_1=5$ and 
$\vH_1=\left(\begin{array}{ccc}
1 & 2 & 3 \\
1 & 4 & 4
\end{array}\right)$, we obtain a $39$-ary $(3,3)$-AECC of size $2368$.
Applying the systematic encoder in Theorem \ref{thm:sys}, we construct a $(158,3;2,3)$-GRC of size $2368$.
\end{exa}

\begin{table*}[!t]
\caption{Computations of $c(\vH,S)$}
\label{tab:cqld}
When $S=\bbracket{2}^3$, we have $c(2,3)=1/288$. 
\begin{center}
\begin{tabular}{ccccccc}\hline
$d$ & $p$ & $D$ & $\lambda_{\rm GRC}$ & $c(\vH,S)$ & $c(2,3)/p^d$ \\ \hline
1 & 11 & 4 & 132 & 1/3168 & 1/3168 \\
2 & 13 & 4 & 156 & 1/48672 & 1/48672 \\
3 & 13 & 4 & 156 & 1/632736 & 1/632736 \\
4 & 17 & 4 & 204 & 1/24054048 & 1/24054048 \\
5 & 17 & 4 & 204 & 1/24054048 & 1/408918816 \\
6 & 17 & 4 & 204 & 1/24054048 & 1/6951619872 \\
\hline
\end{tabular}
\end{center}

When $S=\bbracket{2}^4$, we have $c(2,4)=283/9754214400$. 
\begin{center}
\begin{tabular}{cccccc}\hline
$d$ & $p$ & $D$ & $\lambda_{\rm GRC}$ & $c(\vH,S)$ & $c(2,4)/p^d$ \\ \hline
1 & 17 & 8 & 14280 & 283/165821644800 & 283/165821644800 \\
2 & 17 & 8 & 14280 & 283/2818967961600 & 283/2818967961600 \\
3 & 17 & 8 & 14280 & 283/47922455347200 & 283/47922455347200 \\\hline
\end{tabular}
\end{center}

When $S=S(2,5;1,[2,3])$, we have $c(2,5;1,[2,3])=1/5184000$.

\begin{center}
\begin{tabular}{cccccc}\hline
$d$ & $p$ & $D$ & $\lambda_{\rm GRC}$ & $c(\vH,S)$ & $c(2,5;1,[2,3])/p^d$ \\ \hline
1 & 23 & 6 & 2760 & 1/119232000 & 1/119232000 \\
2 & 29 & 6 & 3480 & 1/4359744000 & 1/4359744000 \\
3 & 29 & 6 & 3480 & 1/126432576000 & 1/126432576000 \\ \hline
\end{tabular}
\end{center}
 \end{table*}
 
Using {\tt LattE}, we determined $c(\vH,S)$ for moderate parameter values and summarize the results in Table \ref{tab:cqld}.

We conclude this section with a conjecture on the relation between $c(q,\ell)$ and $c(\vH,S)$.
\begin{conj}
Fix $q,\ell,d$. Choose $\vH$ and $p$ such that $\C(\vH,\vzero)$ is an $(N,d+1)$-AECC containing $\vone$.
Let $c(q,\ell)$ and $c(\vH,S)$ be the constants defined in Corollaries \ref{cor:cql} and \ref{cor:cHS}, respectively.
Then $c(\vH,S)\ge c(q,\ell)/p^d$.
\end{conj}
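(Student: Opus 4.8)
The plan is to interpret both constants $c(q,\ell)$ and $c(\vH,S)$ as relative volumes of rational polytopes and to link those two polytopes by a single explicit affine map, so that the factor $p^{-d}$ emerges as the reciprocal of a lattice index. First I would unwind the two definitions. From the computation preceding Theorem~\ref{thm:code}, $c(\vH,S)$ is the leading coefficient of the Ehrhart quasipolynomial $L_{\Pgrcint(\vH,S)}(t)$; by Ehrhart--Macdonald reciprocity (Theorem~\ref{thm:ehrhar-macdonald}) this coincides with the leading coefficient of $L_{\Pgrc(\vH,S)}(t)$, which is the relative volume of the $D$-dimensional polytope $\Pgrc(\vH,S)$ taken against the lattice $\ZZ^{|S|+d}\cap\mathrm{aff}\,\Pgrc(\vH,S)$, where $D=|S|-|V(S)|$. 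In the same way, Theorem~\ref{jacquet} together with Corollary~\ref{cor:aperiodic} identifies $c(q,\ell)$ with the relative volume of $\PP(\bbracket{q}^\ell)$ against $\ZZ^{|S|}\cap\mathrm{aff}\,\PP(S)$ for $S=\bbracket{q}^\ell$ (and, more generally, with the common leading coefficient $c(S)$ of $|\E(n;S)|,|\F(n;S)|,|\pQb(n;S)|$ for any strongly connected $D(S)$).

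The core step is to study the coordinate projection $\pi:\RR^{|S|+d}\to\RR^{|S|}$, $(\vu,\vbeta')\mapsto\vu$, and to show that it restricts to an affine isomorphism of $\Pgrc(\vH,S)$ onto $\PP(S)$. Injectivity on $\mathrm{aff}\,\Pgrc(\vH,S)$ is immediate, since there $\vbeta'=\vH\vu/p$ is forced by the defining system. For surjectivity I would exploit the decisive feature of the construction: with the standing convention the entries of $\vH$ are the residues $\alpha_i^j$ lifted to $\{1,\dots,p-1\}$, so $\vH$ has nonnegative integer entries; hence $\vu\ge\vzero$ implies $\vH\vu\ge\vzero$, and consequently for every $\vu\in\PP(S)$ the point $(\vu,\vH\vu/p)$ satisfies \emph{every} constraint defining $\Pgrc(\vH,S)$. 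Thus $\pi$ is a bijection $\Pgrc(\vH,S)\to\PP(S)$ --- onto the \emph{whole} polytope, not onto a proper face or subpolytope --- with affine inverse $\vu\mapsto(\vu,\vH\vu/p)$; a rank count $\rank\vA(\vH,S)=|V(S)|+d$ shows that $\mathrm{aff}\,\Pgrc(\vH,S)$ is precisely the solution set of $\vA(\vH,S)\vu=\vb$ and that $\pi$ maps it isomorphically onto $\mathrm{aff}\,\PP(S)$, both of dimension $D$ (cf.\ Lemmas~\ref{dim:ps} and \ref{lem:pgrc}).

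It then remains to track how the lattices transform and to read off the inequality. Under $\pi$, the lattice $\ZZ^{|S|+d}\cap\mathrm{aff}\,\Pgrc(\vH,S)$ is carried bijectively onto
\[
\Lambda'\;:=\;\bigl\{\vu\in\Lambda:\ \vH\vu\equiv\vzero\pmod p\bigr\},\qquad \Lambda:=\ZZ^{|S|}\cap\mathrm{aff}\,\PP(S),
\]
a sublattice of $\Lambda$ of index $[\Lambda:\Lambda']=\bigl|\vH(\Lambda)\bmod p\bigr|\le p^{d}$. Since the leading Ehrhart coefficient (relative volume) against a finer lattice is divided by the lattice index, the affine isomorphism $\pi$ yields
\[
c(\vH,S)\;=\;\mathrm{vol}_{\Lambda'}\bigl(\PP(S)\bigr)\;=\;\frac{\mathrm{vol}_{\Lambda}\bigl(\PP(S)\bigr)}{[\Lambda:\Lambda']}\;=\;\frac{c(q,\ell)}{[\Lambda:\Lambda']}\;\ge\;\frac{c(q,\ell)}{p^{d}},
\]
which is exactly the assertion of the conjecture. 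The argument also pins down the equality case: $c(\vH,S)=c(q,\ell)/p^{d}$ iff $\vu\mapsto\vH\vu\bmod p$ maps $\Lambda$ onto $(\ZZ/p\ZZ)^{d}$, which in particular requires $d\le D$ --- precisely the range in which Table~\ref{tab:cqld} records equality.

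I expect the main obstacle to be the lattice bookkeeping in the last step: making the statement ``leading Ehrhart coefficient $=$ relative volume'' precise for rational (not integral) polytopes, tracking exactly how that quantity behaves under the non-unimodular affine map $\pi$, and --- the point on which the whole inequality hinges --- verifying that $\mathrm{aff}\,\Pgrc(\vH,S)$ really is all of $\{\vu:\vA(\vH,S)\vu=\vb\}$, so that $\pi(\Pgrc(\vH,S))$ is the \emph{entire} polytope $\PP(S)$ rather than the possibly much smaller slab $\{\vu\in\PP(S):\vH\vu\ge\vzero\}$ that one would get from a badly chosen (non-nonnegative) integer lift of $\vH$. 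Beyond that, everything reuses the Ehrhart-theoretic toolkit already assembled in Sections~\ref{sec:enumerate} and \ref{sec:lower}.
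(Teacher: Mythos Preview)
This statement appears in the paper as an open \emph{conjecture}; the paper offers numerical evidence (Table~\ref{tab:cqld}) but no proof, so there is nothing to compare your proposal against. That said, your argument looks correct and would settle the conjecture.

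The decisive step is the one you single out yourself: with the paper's implicit convention that the entries of $\vH$ are the residues $\alpha_i^j$ lifted to $\{1,\dots,p-1\}$ (as in Example~\ref{exa:intersect}), every entry of $\vH$ is strictly positive, so $\vu_0\ge\vzero$ forces $\vH\vu_0/p\ge\vzero$. This is precisely what makes the projection $\pi$ carry $\Pgrc(\vH,S)$ onto \emph{all} of $\PP(S)$, not merely onto a slab, and it appears to be the observation the authors did not make. Once $\pi$ is an affine isomorphism of polytopes, the comparison of leading Ehrhart coefficients is pure lattice bookkeeping, and the bound $[\Lambda:\Lambda']\le p^d$ (in fact $\le p^{\min(d,D)}$, since $p\Lambda\subseteq\Lambda'$) follows immediately.

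The obstacles you flag are the right places to be careful but are not actual gaps. That the leading coefficient of the Ehrhart quasipolynomial of a rational polytope equals its relative volume --- hence a constant, not merely periodic --- is standard (see \cite{Beck.Robins:2007}); in particular the loop hypothesis in Corollary~\ref{cor:aperiodic} is already unnecessary for the leading term. One notational slip: your $\Lambda=\ZZ^{|S|}\cap\mathrm{aff}\,\PP(S)$ is a coset, not a lattice; you want $\ZZ^{|S|}\cap\ker\vA(S)$, the integer lattice in the linear space parallel to the affine hull. With that fix the index computation and the equality analysis go through, and the sharper bound $[\Lambda:\Lambda']\le p^{\min(d,D)}$ explains why in Table~\ref{tab:cqld} the value of $c(\vH,S)$ freezes once $d$ exceeds $D$.
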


Roughly speaking, the conjecture states that asymptotically,
$|\C(\vH,\vzero)\cap \E(n;q,\ell)|$ is at least $|\Qb(n;q,\ell)|/p^d$.
In other words, for our particular choice of $\vH$ and $\vbeta$, 
we asymptotically achieve the code size guaranteed by the pigeonhole principle.

\section{Decoding of Profile Vectors}\label{sec:decoding}

Recall the DNA storage channel illustrated by Fig. \ref{fig:DNAstorage}. 
The channel takes as its input a word $\vx\in\Q(n;S)$ 
and outputs a vector $\vhx\in\ZZ^{|S|}$. 
Assuming no errors, the vector $\vhx$ corresponds to the profile vector $\vp(\vx;S)\in\pQ(n;S)$.
In this channel model and the code constructions in Section \ref{sec:lower}, 
we have implicitly assumed the existence of an efficient algorithm that 
decodes $\vp(\vx;S)$ back to the message $\vx$.
We explicitly describe this algorithm in what follows.

Let $\vu$ be a profile vector in $\pQ(n;S)$ so that $\vu=\vp(\vx;S)$ for some  $\vx\in\Q(n;S)$.
As with the proof of Lemma \ref{lem:euler}, we construct a multigraph on 
the node set $V(S)$ by adding $u_\vz$ arcs for each $\vz\in V(S)$.
We remove any isolated vertices and we have a connected Eulerian multidigraph.
We subsequently apply any linear-time algorithm like Hierholzer's algorithm \cite{Hierholzer:1873} 
to this multidigraph to obtain an Eulerian walk%
\footnote{Most descriptions of the Hierholzer's algorithm involve an arbitrary choice for the starting vertex and 
the subsequent vertices to visit. Hence, it is possible for the algorithm to yield different walks based on the same multigraph.
Nevertheless, we may fix an order for the vertices and have the algorithm always choose the `smallest' available vertex. Under these assumptions, $\textsc{Euler}(\vu)$ is always well defined.}
and let $\textsc{Euler}(\vu)$ denote the word of $\bbracket{Q}^n$ obtained from this Eulerian walk.
It remains to verify that $\textsc{Euler}(\vu)=\vx$.

As mentioned in Section \ref{sec:profile}, an element in $\Q(n;S)$ is an equivalence class $X\subset \bbracket{q}^n$, where $\vx,\vx'\in X$ implies that $\vp(\vx;S)=\vp(\vx';S)$.
Here, we fix the choice of {representative} for $X$. 
As hinted by the previous discussion, we let this representative be $\textsc{Euler}\left(\vp(\vy;S)\right)$ for some $\vy\in Y$ and observe that this definition is independent of the choice of $\vy$.
Then with this choice of representatives, the function $\textsc{Euler}$ indeed decodes a profile vector back to its representative codeword.

In summary, we identify the elements in $\Q(n;S)$ with the set of representatives
$\{\textsc{Euler}(\vu):\vu\in\pQ(n;S\}$.
Then for any $\vx\in\Q(n;S)$, the function $\textsc{Euler}$ decodes $\vp(\vx;S)$ to $\vx$ in linear-time.

An interesting feature of our coding scheme is that we avoided the assembly problem 
by designing our codewords to have distinct profile vectors and profiles at sufficiently large distance.
However, there are challenges in counting accurately the number of $\ell$-grams and 
determining the profile vector of  an arbitrary word using current sequencing technologies.
We examine next a number of practical methods for profile counting and decoding and 
address emerging issues via known coding solutions.
In our discussion, we assume that $S=\bbracket{q}^\ell$.

In particular, we look at an older technology -- sequencing by hybridization (SBH), proposed in~\cite{Pevzner.Lipshutz:1994} -- as a means of automated decoding. 
The idea behind SBH is to build an array of $\ell$-grams or {\em probes}; this array of probes is commonly referred to as a {\em sequencing chip}. 
A sample of single stranded DNA to be sequenced 
is fragmented, labelled with a radioactive or fluorescent material, and then presented to the chip. 
Each probe in the array hybridizes with its reverse complement,
provided the corresponding $\ell$-gram is present in the sample. 
Then an optical detector measures the intensity of hybridization of the labelled DNA and 
hence infers the number of $\ell$-grams present in the sample.

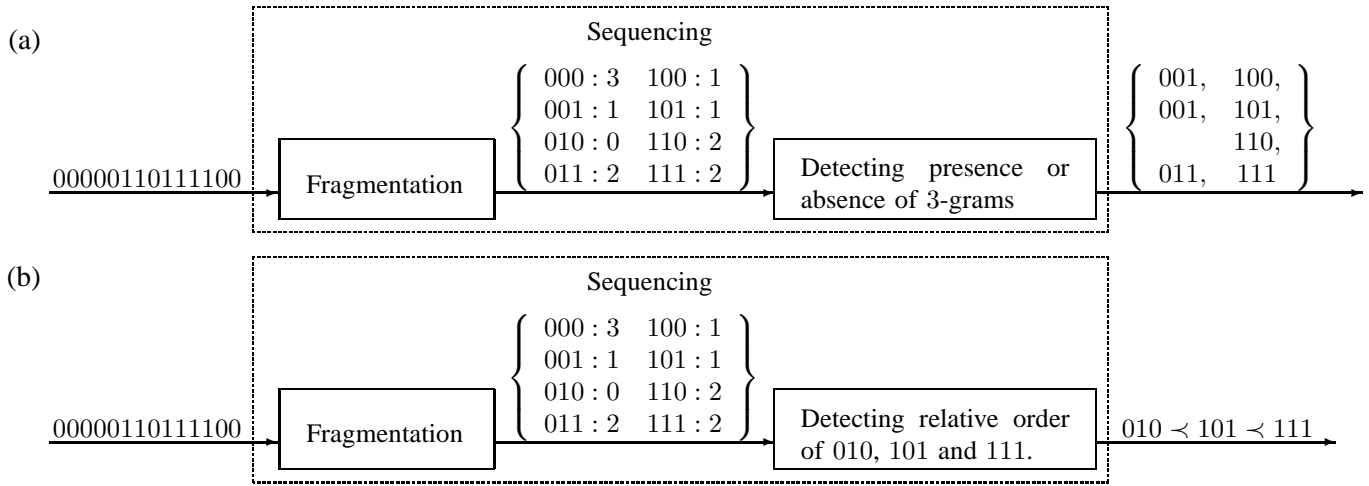
\begin{figure*}[t]
\begin{enumerate}[(a)]
\item \hfill
\begin{center}
\small
\begin{picture}(480,70)(0,-10)

\put(75,0){\framebox(80,30){}}
\put(85,10){\mbox{Fragmentation}}

\put(260,0){\framebox(120,30){}}
\put(270,10){\parbox[c][4em][c]{100px}{Detecting presence or absence of $3$-grams}}

\put(190,70){\txt{Sequencing}}
\put(65,-5){\dashbox(320,85){}}

\put(-11,10){\vector(1,0){85}}
\put(-10,12){\mbox{$0000 0110 1111 00$}}
\put(155,10){\vector(1,0){105}}
\put(160,32){\mbox{$\left\{\begin{array}{cc}
000:3 & 100:1\\
001:1 & 101:1\\
010:0 & 110:2\\
011:2 & 111:2\end{array}\right\}$}}

\put(380,10){\vector(1,0){100}}
\put(390,32){\mbox{$\left\{\begin{array}{cc}
001, & 100,\\
001, & 101,\\
 & 110,\\
011, & 111\end{array}\right\}$}}
\end{picture}
\end{center}

\item \hfill
\begin{center}
\small
\begin{picture}(480,70)(0,-5)

\put(75,0){\framebox(80,30){}}
\put(85,10){\mbox{Fragmentation}}

\put(260,0){\framebox(120,30){}}
\put(270,10){\parbox[c][4em][c]{100px}{Detecting relative order of $010$, $101$ and $111$.}}

\put(190,70){\txt{Sequencing}}
\put(65,-5){\dashbox(320,85){}}

\put(-11,10){\vector(1,0){85}}
\put(-10,12){\mbox{$0000 0110 1111 00$}}
\put(155,10){\vector(1,0){105}}
\put(160,32){\mbox{$\left\{\begin{array}{cc}
000:3 & 100:1\\
001:1 & 101:1\\
010:0 & 110:2\\
011:2 & 111:2\end{array}\right\}$}}

\put(380,10){\vector(1,0){90}}
\put(390,12){\mbox{$010\prec 101\prec 111$}}

\end{picture}
\end{center}
\end{enumerate}

\caption{Sequencing by hybridization. Instead of obtaining the exact count of the $\ell$-grams, we obtain auxiliary information on the count: (a) we obtain the set of $3$-grams present in $001 110 110 000 00$; 
(b) we obtain the relative order of the counts of $010$, $101$ and $111$.  }
\label{fig:SBH}
\end{figure*}

\subsection{Detecting presence or absence of $\ell$-grams}

In the initial studies of SBH, hybridization results only indicated the presence or absence of certain $\ell$-grams.
In our terminology, if $\vx$ is the codeword, 
the channel outputs a subset of $\bbracket{q}^\ell$ given by $\supp(\vp(\vx;q,\ell))$,
where $\supp(\vu)$ denotes the set of coordinates $\vz$ with $u_\vz\ge 1$ (see Fig. \ref{fig:SBH}(a)).
Then, we can define $\dg^*(\vx,\vy;q,\ell)\triangleq|\supp(\vp(\vx;q,\ell))\Delta\supp(\vp(\vy;q,\ell))|$ 
for any pair of $\vx,\vy\in\bbracket{q}^n$.

As before, $(\bbracket{q}^n,\dg^*)$ forms a pseudometric space and we convert this space into a metric space via an equivalence relation --
we say $\vx\stackrel{\ell^*}{\sim}\vy$ if and only if $\dg^*(\vx,\vy;q,\ell)=0$. 
Then, by defining $\Q^*(n;q,\ell)\triangleq\bbracket{q}^n/\stackrel{\ell^*}{\sim}$, we obtain a metric space.

Let $\C\subseteq \Q^*(n;q,\ell)$. If $d=\min\{\dg^*(\vx,\vy;\ell): \vx,\vy\in \C, \vx\ne\vy\}$, then
$\C$ is said to be $(n,d;q,\ell)$-$\ell^*$-gram reconstruction code ($*$-GRC).

We have the following proposition that is an analogue of Proposition~\ref{prop:code}.

\begin{prop}
Given an $(n,d;q,\ell)$-$*$-GRC, a set of $n-\ell+1-\floor{(d-1)/2}$ $\ell$-grams suffices to identify a codeword.
\end{prop}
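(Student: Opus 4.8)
The plan is to run the standard ``uniqueness ball'' argument, but carried out in the set pseudometric $\dg^*$ rather than in a Hamming-type metric. The starting point is the elementary observation that every word $\vx\in\bbracket{q}^n$ contains exactly $n-\ell+1$ occurrences of $\ell$-grams counted with multiplicity, and hence $|\supp(\vp(\vx;q,\ell))|\le n-\ell+1$. First I would pin down the decoding model for the presence/absence SBH channel: coverage gaps (undersampling) can only \emph{delete} $\ell$-grams from the observed set, never introduce spurious ones, so if $\vx\in\C$ is the transmitted codeword then the channel output is a set $T\subseteq\supp(\vp(\vx;q,\ell))$, and the hypothesis is that $|T|\ge n-\ell+1-\floor{(d-1)/2}$. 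In these terms the claim becomes: at most one codeword of $\C$ can have an $\ell$-gram support containing $T$, so observing $T$ pins down $\vx$.

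Next I would argue by contradiction. Suppose $T\subseteq A\cap B$ where $A\triangleq\supp(\vp(\vx;q,\ell))$ and $B\triangleq\supp(\vp(\vy;q,\ell))$ for two distinct codewords $\vx,\vy\in\C$. Then, using $|A|,|B|\le n-\ell+1$ and $|A\cap B|\ge|T|$,
\[
\dg^*(\vx,\vy;q,\ell)=|A\,\Delta\,B|=|A|+|B|-2|A\cap B|\le 2(n-\ell+1)-2|T|\le 2\floor{(d-1)/2}\le d-1.
\]
This contradicts the defining property of an $(n,d;q,\ell)$-$*$-GRC, namely that $\dg^*(\vx,\vy;q,\ell)\ge d$ for all distinct codewords. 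Therefore $\vx=\vy$, so the observed set $T$ determines the codeword uniquely, which is exactly the assertion that $n-\ell+1-\floor{(d-1)/2}$ $\ell$-grams suffice to identify a codeword.

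There is no genuinely hard step here; the only points requiring care are the bound $|A|,|B|\le n-\ell+1$ (which is what forces a large observed set to imply a small symmetric difference) and the floor inequality $2\floor{(d-1)/2}\le d-1$, which holds whether $d$ is even or odd. For completeness I would add a sentence noting that this is the natural analogue of Proposition~\ref{prop:code} specialized to undersampling errors alone: the synthesis and sequencing substitution terms are absent and $\ell$-gram multiplicities are collapsed to mere support membership, so the number of erased $\ell$-grams that can be tolerated is $\floor{(d-1)/2}$.
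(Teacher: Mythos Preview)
Your proof is correct and follows essentially the same contradiction argument as the paper: both assume two distinct codewords share a common set of $t=n-\ell+1-\floor{(d-1)/2}$ $\ell$-grams, bound the symmetric difference of their supports by $2(n-\ell+1-t)=2\floor{(d-1)/2}\le d-1$, and obtain a contradiction with the minimum distance. Your version is a bit more explicit about the channel model and uses the identity $|A\,\Delta\,B|=|A|+|B|-2|A\cap B|$ rather than bounding $|A\setminus B|$ and $|B\setminus A|$ separately, but these are trivially equivalent.
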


\begin{proof}
Let $t=n-\ell+1-\floor{(d-1)/2}$.
Suppose otherwise that there exists a pair of distinct codewords $\vx$ and $\vy$ that 
contain a common set of $t$ $\ell$-grams.
Then 
\begin{align*}
\dg^*(\vx,\vy;\ell)&=|\supp(\vp(\vx;q,\ell))\Delta\supp(\vp(\vy;q,\ell))|\\
&\le (n-\ell+1-t)+(n-\ell+1-t)=2\floor{(d-1)/2}) \le d-1<d,
\end{align*}
resulting in a contradiction.
\end{proof}

Determining the maximum size of an $(n,d;q,\ell)$-$*$-GRC turns out to be related to certain well studied combinatorial problems.

{\bf Case $d=1$}. The maximum size of an $(n,1;q,\ell)$-$*$-GRC is given by $|\Q^*(n;q,\ell)|$. 
Equivalently, this count corresponds to the number of possible sets of $\ell$-grams that can be obtained from words of length $n$.
Observe that $|\Q^*(n;q,\ell)|\le 2^{q^\ell}$ and hence $|\Q^*(n;q,\ell)|$ cannot be a quasipolynomial in $n$ 
with degree at least one. Therefore, it appears that Ehrhart theory is not applicable in this context.
Nevertheless, preliminary investigations of this quantity for $q=2$ have been performed by Tan and Shallit \cite{tan2013sets}. In particular, Tan and Shallit proved the following proposition for $n<2\ell$. 

\begin{prop}[{\cite[Corollary 19]{tan2013sets}}]
For $\ell\le n <2\ell$, we have 
\[\Q(n,\ell)=2^n-\sum_{k=1}^{n-\ell+1}\frac{k-1}{k}\sum_{d|k}\mu\left(\frac kd\right)2^d,\]
where $\mu(\cdot)$ is the M\"obius function defined as
\[
\mu(n)=\begin{cases}
1, & \mbox{if $n$ is a square-free positive integer with an even number of prime factors};\\
1, & \mbox{if $n$ is a square-free positive integer with an odd number of prime factors};\\
0, & \mbox{otherwise}.
\end{cases}
\]
\end{prop}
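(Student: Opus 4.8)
The plan is to compute $\Q(n,\ell)$ --- which here is the number of distinct subsets of $\llbracket 2\rrbracket^\ell$ arising as the $\ell$-gram support $\supp(\vp(\vx;2,\ell))$ of a binary word $\vx$ of length $n$ --- by an overcounting argument. Grouping $\llbracket 2\rrbracket^n$ into the equivalence classes $C_1,C_2,\dots$ of $\stackrel{\ell^*}{\sim}$, we have $\Q(n,\ell)=\sum_i 1=\sum_i\sizeof{C_i}-\sum_i(\sizeof{C_i}-1)=2^n-\sum_i(\sizeof{C_i}-1)$, so the task reduces to evaluating the deficiency $\sum_i(\sizeof{C_i}-1)$ between $2^n$ and the number of classes.

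The heart of the matter is a structural claim, which is exactly where the hypothesis $n<2\ell$ is used: \emph{a class has size at least $2$ if and only if it consists of the length-$n$ prefixes of $p^\infty$ and of $(\sigma p)^\infty$, as $\sigma$ ranges over the cyclic rotations, for some \textbf{primitive} binary word $p$ of length $k$ with $2\le k\le n-\ell+1$; moreover such a class has size exactly $k$, and primitive words with distinct rotation classes produce distinct $\stackrel{\ell^*}{\sim}$-classes.} To prove this I would work in the de Bruijn graph $D(2,\ell)$ of Section~\ref{sec:graph}, whose arcs are the $\ell$-grams: a word of length $n$ is precisely a walk of $m:=n-\ell+1$ arcs, and two words are $\stackrel{\ell^*}{\sim}$-equivalent exactly when the two walks traverse the same \emph{set} of arcs. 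Since $n<2\ell$ forces $m\le\ell$, any two $\ell$-grams of a word overlap, and a repeated $(\ell-1)$-gram therefore pins down a periodic block spanning almost the whole word; from this one shows that the sub-digraph $G$ induced on a realizable arc set is ``almost acyclic'' --- it contains at most one cycle, of length at most $m$ --- so that a length-$m$ walk with that arc set is \emph{unique} unless $G$ is itself a single directed $k$-cycle (that is, the word goes around a primitive de Bruijn $k$-cycle), in which case the length-$m$ walks with that arc set are exactly the $k$ rotations, and these are distinct words because $n\ge k$. Establishing this rigidity of $G$ --- in particular that a de Bruijn sub-digraph coming from so short a word cannot carry two independent cycles, so that $\stackrel{\ell^*}{\sim}$-ambiguity is entirely accounted for by genuine cyclic periodicity --- is the main obstacle; everything else is routine.

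It then remains to assemble the count. The number $Q(k)$ of primitive binary strings of length $k$ satisfies $2^k=\sum_{d\mid k}Q(d)$, since every length-$k$ string is a unique power of a primitive string of divisor length; M\"obius inversion gives $Q(k)=\sum_{d\mid k}\mu(k/d)2^d$, and the number of primitive binary necklaces (rotation classes) of length $k$ is $Q(k)/k$, i.e.\ the number of binary Lyndon words of length $k$. Hence each primitive necklace of length $k$ yields one class of size $k$, contributing $k-1$ to the deficiency, and
\[
\sum_i(\sizeof{C_i}-1)=\sum_{k=2}^{n-\ell+1}(k-1)\,\frac{Q(k)}{k}=\sum_{k=1}^{n-\ell+1}\frac{k-1}{k}\sum_{d\mid k}\mu\!\left(\frac kd\right)2^d,
\]
where the $k=1$ term vanishes and harmlessly accounts for the two constant words $0^n,1^n$, whose classes are singletons. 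Subtracting this from $2^n$ gives the stated formula. Again, the only delicate step is the structural claim; the Möbius inversion and the final bookkeeping are standard.
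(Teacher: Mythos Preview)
The paper does not prove this proposition at all: it is quoted verbatim as \cite[Corollary~19]{tan2013sets} and attributed to Tan and Shallit, with no argument given. So there is no ``paper's own proof'' to compare your attempt against.

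That said, your outline is a sensible reconstruction of how such a formula would be obtained, and it matches the spirit of the Tan--Shallit paper: the deficiency $2^n-|\Q^*(n;2,\ell)|$ is computed by showing that, in the regime $n<2\ell$ (equivalently $m=n-\ell+1\le \ell$), the only collisions under $\stackrel{\ell^*}{\sim}$ come from purely periodic words, one $\stackrel{\ell^*}{\sim}$-class of size $k$ for each primitive necklace of length $k\le m$, and then invoking the standard M\"obius count $Q(k)=\sum_{d\mid k}\mu(k/d)2^d$ for primitive binary strings. You correctly flag the structural claim --- that in $D(2,\ell)$ a walk of length $m\le\ell$ is determined by its arc set unless that set is a single short cycle --- as the only nontrivial step; the rest is bookkeeping. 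Since the present paper offers nothing beyond the citation, your sketch actually supplies more than the paper does.
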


{\bf Case $d=2(n-\ell+1)$}. For the other extreme, we see that the problem is related 
to edge-disjoint path packings and decompositions of graphs (see \cite{Heinrich:1993,Bryant.El-Zanati:2007}).
Formally, consider a graph $G$. A set $\C$ of paths in $G$ is said to be an {\em edge-disjoint path packing} of $G$
if each edge in $G$ appears in at most one path in $\C$.
An edge-disjoint path packing $\C$ of $G$ is an {\em edge-disjoint path decomposition} of $G$
if each edge in $G$ appears in exactly one path in $\C$.
Edge-disjoint cycle packings and decompositions are defined similarly.

Now, an $(n,2(n-\ell+1);q,\ell)$-$*$-GRC is equivalent to an edge-disjoint path packing of $D(q,\ell)$, where each path is of length $(n-\ell+1)$.
Furthermore, an edge-disjoint path decomposition of $D(q,\ell)$ into paths of length $n-\ell+1$ 
yields an optimal $(n,2(n-\ell+1);q,\ell)$-$*$-GRC of size $q^\ell/(n-\ell+1)$.

Since an edge-disjoint cycle decomposition is also an edge-disjoint path decomposition,
we examine next edge-disjoint cycle decomposition of de Bruijn graphs.
These combinatorial objects were studied by Cooper and Graham, who proved the following theorem.

\begin{thm}[{\cite[Proposition 2.3, Corollary 2.5]{cooper2004generalized}}]\label{cooper}\hfill
\begin{enumerate}
\item There exists an edge-disjoint cycle decomposition of  $D(q,\ell)$ into $q$ cycles of length $q^{\ell-1}$, for any $q$ and $\ell$. 
\item There exists an edge-disjoint cycle decomposition of  $D(r2^{k+1},3)$ into $8^k$ cycles of length $8r^3$, for any $k\ge 0$ and $r\ge 1$.
\end{enumerate}
\end{thm}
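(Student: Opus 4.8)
The plan is to recast both parts in the language of Eulerian decompositions of the unrestricted de~Bruijn digraph $D(q,\ell)$, exploiting the fact already used for Proposition~\ref{debruijn} and Lemma~\ref{lem:euler}: a connected digraph with balanced in/out-degrees has an Eulerian circuit, and conversely, cutting an Eulerian circuit at its repeated passages through a single node yields edge-disjoint closed trails whose lengths sum to the number of arcs. Since $D(q,\ell)$ has $q^{\ell-1}$ nodes, $q^{\ell}$ arcs, and is $q$-in-$q$-out regular, every Eulerian circuit passes through a fixed node $v_0$ exactly $q$ times; cutting there produces $q$ closed trails whose lengths are the $q$ return times and sum to $q^{\ell}$. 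Thus part~(1) is equivalent to producing an Eulerian circuit of $D(q,\ell)$ in which all $q$ return times to $v_0$ equal $q^{\ell-1}$ --- equivalently, a family of $q$ cyclic $q$-ary words of length $q^{\ell-1}$ whose length-$\ell$ windows together list every $\ell$-gram exactly once (a perfect factor $\mathrm{PF}(q,\ell;q^{\ell-1})$) --- and part~(2) is the analogous perfect factor $\mathrm{PF}(q,3;8r^{3})$ with $q=r2^{k+1}$.

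For part~(1) I would first note that the weight-restricted digraphs of Proposition~\ref{debruijn}, together with their boundary variants, already supply connected Eulerian sub-digraphs of $D(q,\ell)$: for any partition of $[0,\ell]$ into intervals $I_1,\dots,I_q$, each $D(q,\ell;q^{*},I_c)$ is Eulerian and their arc sets tile $\bbracket{q}^{\ell}$, so their Eulerian circuits already give an edge-disjoint closed-trail decomposition --- but with lengths $|S(q,\ell;q^{*},I_c)|$, which in general are not all $q^{\ell-1}$. The real content of part~(1) is therefore a \emph{balancing} step, which I would handle by induction on $\ell$: given $q$ closed trails of length $q^{\ell-2}$ covering $D(q,\ell-1)$, lift them through the line-digraph identification $D(q,\ell)=L(D(q,\ell-1))$ by a standard interleaving of the old cyclic words with a cyclic listing of $\bbracket{q}$, and verify that the $q$ resulting words have length $q\cdot q^{\ell-2}=q^{\ell-1}$ and that their $\ell$-windows partition $\bbracket{q}^{\ell}$. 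The base cases ($\ell$ small) are done by hand, keeping in mind the genuinely degenerate instance $q=\ell=2$, where $D(2,2)$ has two loops and no decomposition into two trails of length $2$ exists.

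For part~(2) I would again argue by induction, now on $k$ with $\ell=3$ fixed. When $k=0$ the claim is that $D(2r,3)$ is Eulerian --- true since it is connected and balanced --- i.e.\ there is a single cyclic word of length $(2r)^{3}=8r^{3}$ listing all $(2r)$-ary triples, an ordinary de~Bruijn word. For the step from $q=2^{k+1}r$ to $2q$ I would identify $\bbracket{2q}$ with $\bbracket{q}\times\{0,1\}$, so that a triple over $\bbracket{2q}$ carries three independent binary coordinates; this explains the factor $8=2^{3}$ by which the number of trails must grow while their common length $8r^{3}$ is preserved. Concretely, each of the $8^{k}$ cyclic words of length $8r^{3}$ over $\bbracket{q}$ is expanded, by a prescribed rule selecting the binary coordinates as one reads along the word, into $8$ cyclic words of the same length over $\bbracket{2q}$, and one checks that the $8^{k+1}$ resulting words have windows tiling the $(2q)^{3}=8q^{3}$ triples.

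The main obstacle in both parts is the bookkeeping that forces every closed trail to have \emph{exactly} the prescribed common length while still covering the arc set exactly once: an Eulerian-circuit argument cheaply yields \emph{some} closed-trail decomposition, but making all return times equal (part~(1)) and carrying the uniform length $8r^{3}$ through the doubling recursion without destroying the exact-cover property (part~(2)) is where the explicit constructions of \cite{cooper2004generalized} do the work. The recurring subtlety is connectivity of each piece --- ensuring each block is a single trail rather than a union of shorter ones --- and the only true failures occur in small degenerate cases such as $q=\ell=2$.
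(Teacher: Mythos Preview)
The paper does not prove this theorem at all: it is quoted verbatim as a result of Cooper and Graham \cite{cooper2004generalized} and used as a black box to conclude the existence of certain optimal $*$-GRCs. There is therefore no ``paper's own proof'' to compare against, and for the purposes of this paper no argument is required beyond the citation.

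That said, your outline is a reasonable sketch of how perfect-factor constructions are typically organized --- reinterpreting the problem as cutting an Eulerian circuit at a fixed node, then arranging by induction (on $\ell$ via the line-digraph lift for part~(1), on $k$ via an alphabet-doubling for part~(2)) that all return times coincide. But it remains a sketch: the ``standard interleaving'' in part~(1) and the ``prescribed rule selecting the binary coordinates'' in part~(2) are exactly the places where the nontrivial combinatorics lives, and you have not specified either rule or verified that the resulting windows tile $\bbracket{q}^{\ell}$ with each piece connected. You also correctly flag the degenerate instance $q=\ell=2$, where $D(2,2)$ has two loops and a $2$-cycle and hence admits no decomposition into two cycles of length $2$; this is a genuine boundary issue with the statement as quoted, but it is orthogonal to what the present paper needs. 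If you wish to include a self-contained proof, you would need to either reproduce the explicit constructions from \cite{cooper2004generalized} or fully specify and verify your interleaving/doubling maps; otherwise, simply citing the result, as the paper does, is appropriate here.
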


Therefore, Theorem \ref{cooper} demonstrates the existence of 
an optimal $(q^{\ell-1} +\ell- 1, 2q^{\ell-1};q,\ell)$-$*$-GRC of size $q$
and an optimal  $(8r^3+2,16r^3;r2^{k+1},3)$-$*$-GRC of size $8^k$ for any $k\ge 0$ and $r\ge 1$.

\subsection{Detecting the relative order of $\ell$-grams}\label{sec:rankmod}

As mentioned earlier, it is difficult to infer accurately the number of $\ell$-grams present from the hybridization results.  However, we may significantly more accurately determine whether the count of a certain $\ell$-gram is greater than the count of another.
In other words, we may view the sequencing channel outputs as {\em rankings} or {\em orderings} on the $q^\ell$ $\ell$-grams counts or a {\em permutation} of length $q^\ell$ reflecting the $\ell$-gram counts.

This suggests that we consider codewords whose profile vectors carry information about order.
More precisely, let ${\rm Perm}(N)$ denote the set of permutations over the set $\bbracket{N}$. 
We consider codewords whose profile vectors belong to ${\rm Perm}(N)$
and consider a metric on ${\rm Perm}(N)$ that relates to errors resulting from changes in order.
The Kendall metric was first proposed by Jiang \etal{} \cite{Jiang.etal:2009} 
in {rank modulation schemes} for nonvolatile flash memories and 
codes in this metric have been studied extensively since 
(see \cite{Barg.Mazumdar:2010} and the references therein). 
The Ulam metric was later proposed by Farnoud \etal{} for permutations \cite{Farnoud.etal:2013}
and multipermutations \cite{Farnoud.Milenkovic:2014}.

Unfortunately, due to the flow conservation equations \eqref{eq:flow}, 
the profile vector of a $q$-ary word is unlikely to have distinct entries and hence be a permutation.
Nevertheless, we appeal to the systematic encoder provided by Theorem \ref{thm:sys}.
We set $m=q^\ell-q^{\ell-1}-1$. Then, provided $n$ is sufficiently large,
there exists a set $I$ of $m$ coordinates that allow us to
extend any word $\vv$ in $\bbracket{m}^m$ to a profile vector in $\phi_{\rm sys}(\vv)\in\pQ(n;q,\ell)$.
In particular, since ${\rm Perm}(m)\subseteq\bbracket{m}^m$, \emph{any permutation} $\vv$ of length $m$ 
may be extended to a profile vector in $\phi_{\rm sys}(\vv)\in\pQ(n;q,\ell)$.

This implies that for the design of the sequencing chip,
we do not need to have $q^\ell$ probes for all possible $\ell$-grams.
Instead, we require only $m=q^\ell-q^{\ell-1}-1$ probes that correspond to the $\ell$-grams in $I$.
Hence, the sequencing channel outputs an ordering on this set of $m$ $\ell$-grams 
(see Fig. \ref{fig:SBH}(b)).

This setup allows us to integrate known rank modulation codes (in any metric) 
into our coding schemes for DNA storage. 
In particular, to encode information we perform the following procedure.
First, we encode a message is into a permutation using a rank modulation encoder.
Then the permutation is extended into a profile vector and then mapped by
{\sc Euler} to the profile vector of a $q$-ary codeword
(see Fig. \ref{fig:ranking} for an illustration).

\begin{exa}
Suppose that $S=\bbracket{2}^3$.
Hence, we set $m=3$ and
recall the systematic encoder $\phi_{\rm sys}$ described in Example \ref{exa:systematic}
that maps $\bbracket{3}^3$ into $\pQ(14;2,3)$.
Suppose that $\vv=(0,1,2)\in {\rm Perm}(3)$ belongs to some rank modulation code.
Then $\vu=\phi_{\rm sys}(\vv)=(3,1,{\bf 0},2,{\bf1},1,2,{\bf 2})$ belongs to $\pQ(14;2,3)$.
Finally, {\sc Euler} maps $\vu$ to a codeword $0000 0110 1111 00\in\bbracket{2}^{14}$.

Now, if we were to detect the relative order of the $3$-grams $010$, $101$ and $111$,
we obtain the permutation $(0,1,2)$ as desired (see also Fig. \ref{fig:SBH}(b)).
\end{exa}

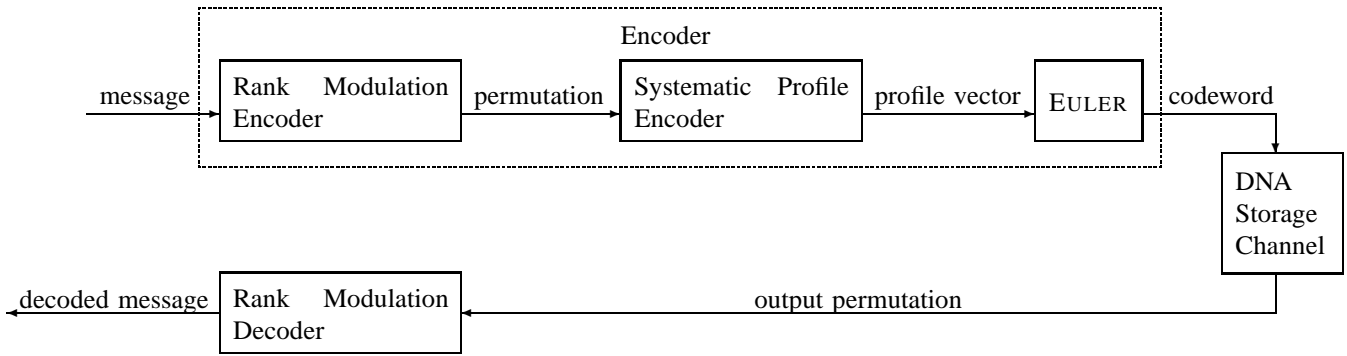
\begin{figure*}[t]

\begin{center}
\small
\begin{picture}(480,120)(-20,-50)

\put(50,30){\framebox(90,30){}}
\put(55,42){\parbox[c][4em][c]{80px}{Rank Modulation Encoder}}

\put(200,30){\framebox(90,30){}}
\put(205,42){\parbox[c][4em][c]{80px}{Systematic Profile Encoder}}

\put(355,30){\framebox(40,30){}}
\put(360,42){\parbox[c][4em][c]{30px}{\sc Euler}}

\put(200,70){\txt{Encoder}}
\put(42,20){\dashbox(360,60){}}

\put(425,-20){\framebox(45,45){}}
\put(430,0){\parbox[c][4em][c]{40px}{DNA Storage Channel}}

\put(0,40){\vector(1,0){50}}
\put(5,44){\mbox{message}}

\put(140,40){\vector(1,0){60}}
\put(145,44){\mbox{permutation}}

\put(290,40){\vector(1,0){65}}
\put(295,44){\mbox{profile vector}}

\put(395,40){\line(1,0){50}}
\put(405,44){\mbox{codeword}}
\put(445,40){\vector(0,-1){15}}

\put(445,-20){\line(0,-1){15}}
\put(445,-35){\vector(-1,0){305}}
\put(250,-33){\mbox{output permutation}}

\put(50,-50){\framebox(90,30){}}
\put(55,-38){\parbox[c][4em][c]{80px}{Rank Modulation Decoder}}

\put(50,-35){\vector(-1,0){80}}
\put(-25,-33){\mbox{decoded message}}

\end{picture}
\end{center}

\caption{Encoding messages for a DNA storage channel that outputs the relative order on the counts of particular $\ell$-grams.  }
\label{fig:ranking}
\end{figure*}

\appendices

\section{Eulerian Property of Certain Restricted De Bruijn Digraphs}
\label{app:restricteddb}

In this section, we provide a detailed proof of Proposition \ref{debruijn}. 
Specifically, for $q$, $\ell$, $1\le q^*\le q-1$ and $1\le w_1<w_2\le \ell$, we demonstrate that
the digraph $D(q,\ell;q^*,[w_1,w_2])$ is Eulerian. Our analysis follows that of Ruskey \etal{} \cite{Ruskey.etal:2012}.

Recall that the arc set of $D(q,\ell;q^*,[w_1,w_2])$ is given by $S=S(q,\ell;q^*,[w_1,w_2])$, 
while the node set is given by $V(S)=S(q,\ell-1;q^*,[w_1-1,w_2])$, which we denote by $V$ for short.
In addition, we introduce the following subsets of $\bbracket{q}$.
For a node $\vz$ in $V$, 
let ${\rm Pref}(\vz)$ be the set of symbols in $\bbracket{q}$ that 
when prepended to $\vz$ results in an arc in $S$. 
Similarly, let ${\rm Suff}(\vz)$ be the set of symbols in $\bbracket{q}$ that 
when appended to $\vz$ result in an arc in $S$.
Hence, $\{\sigma\vz:\sigma\in {\rm Pref}(\vz)\}$ and $\{\vz\sigma:\sigma\in {\rm Suff}(\vz)\}$
are the respective sets of incoming and outgoing arcs for the node $\vz$.

\begin{lem}\label{lem:balanced}
Every node of $D(q,\ell;q^*,[w_1,w_2])$ has the same number of incoming and outgoing arcs.
\end{lem}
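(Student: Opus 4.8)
The plan is to show that for every node $\vz \in V$, the number of incoming arcs $|{\rm Pref}(\vz)|$ equals the number of outgoing arcs $|{\rm Suff}(\vz)|$. First I would fix a node $\vz = z_1 z_2 \cdots z_{\ell-1} \in V = S(q,\ell-1; q^*, [w_1-1, w_2])$ and let $w = \wt(\vz; q^*)$ be its $q^*$-weight, so that $w_1 - 1 \le w \le w_2$. A symbol $\sigma \in \bbracket{q}$ contributes weight $1$ to a word if $\sigma \in [q-q^*, q-1]$ (call such a symbol \emph{heavy}) and weight $0$ otherwise (\emph{light}); there are $q^*$ heavy symbols and $q - q^*$ light symbols. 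Prepending $\sigma$ to $\vz$ yields the word $\sigma z_1 \cdots z_{\ell-1} \in \bbracket{q}^\ell$, which lies in $S$ iff its weight $w + [\sigma \text{ heavy}]$ lies in $[w_1, w_2]$; similarly for appending. The key observation is that whether prepending or appending a given symbol produces an arc depends \emph{only} on whether that symbol is heavy or light, not on its position — so both ${\rm Pref}(\vz)$ and ${\rm Suff}(\vz)$ are determined by the same count.

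Concretely, I would split into cases on $w$. If $w_1 \le w \le w_2 - 1$, then adding a light symbol keeps the weight in $[w_1, w_2]$ (since $w \ge w_1$) and adding a heavy symbol also keeps it in range (since $w + 1 \le w_2$), so \emph{every} symbol works and $|{\rm Pref}(\vz)| = |{\rm Suff}(\vz)| = q$. If $w = w_1 - 1$, then only heavy symbols work (a light symbol would give weight $w_1 - 1 \notin [w_1,w_2]$), so both counts equal $q^*$. If $w = w_2$, then only light symbols work (a heavy symbol would give weight $w_2 + 1$), so both counts equal $q - q^*$. In every case the in-degree and out-degree of $\vz$ coincide, which is exactly the claim. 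I should double-check the boundary interaction between the two subcases at $w = w_1$ versus $w = w_1 - 1$ and the hypothesis $w_1 < w_2$ (which guarantees the middle range is nonempty and the cases are exhaustive and consistent); the hypothesis $w_1 \ge 1$ ensures $w_1 - 1 \ge 0$ so that weight-$(w_1-1)$ nodes are legitimate.

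The main (and only real) obstacle is bookkeeping: making sure that the node set $V(S) = S(q,\ell-1; q^*, [w_1 - 1, w_2])$ is exactly the right range — in particular that a node of weight $w_1 - 1$ is genuinely present and does have an outgoing arc (namely via a heavy symbol, landing in weight $w_1$), and symmetrically that a node of weight $w_2$ has an incoming arc. This is what forces the node-weight range to be $[w_1 - 1, w_2]$ rather than $[w_1, w_2]$, and it is the subtle point to get right; once the case analysis above is set up carefully it is a routine finite check. No deeper structure is needed for this lemma — the Eulerian conclusion of Proposition \ref{debruijn} then follows by combining this balance condition with (strong) connectedness of $D(q,\ell; q^*, [w_1,w_2])$, which would be argued separately.
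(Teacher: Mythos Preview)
Your proposal is correct and takes essentially the same approach as the paper: the paper's proof is the one-line observation that for any symbol $s$, $s\vz\in S$ if and only if $\vz s\in S$ (so in fact ${\rm Pref}(\vz)={\rm Suff}(\vz)$ as \emph{sets}), which is exactly your ``key observation.'' Your explicit case analysis on $w=\wt(\vz;q^*)$ is not needed for this lemma, though the paper does record precisely that characterization later when proving strong connectedness.
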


\begin{proof}
Let $\vz$ belong to $V$. Observe that for all $s \in\bbracket{q}$, $s\,\vz\in S$ 
if and only if $\vz\,s \in S$. 
Hence, ${\rm Pref}(\vz)={\rm Suff}(\vz)$ and the lemma follows.
\end{proof}

It remains to show that $D(q,\ell;q^*,[w_1,w_2])$ is strongly connected.
We do it via the following sequence of lemmas.

\begin{lem}
Let $\vz,\vz'$ belong to $V$ and have the property that
they differ in exactly one coordinate. 
Then there exists a path from $\vz$ to $\vz'$.
\end{lem}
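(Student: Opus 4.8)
The plan is to reduce the statement to the construction of a single word over $\bbracket{q}$. Write $\vz = z_1z_2\cdots z_{\ell-1}$ and $\vz' = z'_1z'_2\cdots z'_{\ell-1}$, let $j$ be the unique coordinate with $z_j\ne z'_j$, and put $w=\wt(\vz;q^*)$ and $w'=\wt(\vz';q^*)$, so that $w,w'\in[w_1-1,w_2]$ and $|w-w'|\le 1$. A walk of length $m$ from $\vz$ to $\vz'$ in $D(q,\ell;q^*,[w_1,w_2])$ is exactly a word $W$ of length $\ell-1+m$ over $\bbracket{q}$ whose first $\ell-1$ letters spell $\vz$, whose last $\ell-1$ letters spell $\vz'$, and each of whose $m$ length-$\ell$ factors has $q^*$-weight in $[w_1,w_2]$ (those factors being precisely the arcs traversed). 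So the whole proof amounts to exhibiting such a $W$.

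First I would try the shortest candidate, $W = z_1\cdots z_{\ell-1}\,z'_1\cdots z'_{\ell-1}$, obtained by appending the letters of $\vz'$ one at a time. Its $i$-th length-$\ell$ factor is $z_i\cdots z_{\ell-1}z'_1\cdots z'_i$; since $\vz$ and $\vz'$ agree away from coordinate $j$, a one-line computation gives its $q^*$-weight as $w$ or $w+1$ (according as $z_i\notin[q-q^*,q-1]$ or $z_i\in[q-q^*,q-1]$) when $i<j$, and as $w'$ or $w'+1$ (by the same rule on $z_i$) when $i\ge j$. Hence every factor of $W$ has weight in $\{w,w+1,w',w'+1\}$, and this set lies inside $[w_1,w_2]$ as soon as both $w$ and $w'$ lie in the ``interior band'' $[w_1,w_2-1]$, in which case $W$ already does the job. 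Because $|w-w'|\le 1$, the only obstructions are $\min(w,w')=w_1-1$ and $\max(w,w')=w_2$, and these two cannot occur simultaneously (that would force $w_1=w_2$, contrary to hypothesis).

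The two boundary regimes are where I expect the real work: there the naive $W$ can create a factor of weight $w_1-1$ (too light) or $w_2+1$ (too heavy), and one must instead pad $W$ with a short detour. The structural feature that makes this possible is exactly the hypothesis $w_1<w_2$: the admissible band $[w_1,w_2]$ has width at least two, so from any node of weight in $[w_1,w_2-1]$ one may shift in either a heavy letter (one in $[q-q^*,q-1]$) or a light letter and stay valid, and within this band any two nodes of the same weight $k$ are joined by weight-preserving shifts to the canonical word $c^{\ell-1-k}h^k$, for a fixed light letter $c$ and heavy letter $h$. Concretely, in the case $\max(w,w')=w_2$ I would first append a bounded number of light letters to $\vz$, sliding its weight down into the interior band, checking that every factor crossed stays $\le w_2$ (appending a light letter never raises the running node weight) and $\ge w_1$ (one stops as soon as the weight drops to $w_2-1$); symmetrically I would build an incoming detour at $\vz'$, which — since the reversal of $D(q,\ell;q^*,[w_1,w_2])$ is isomorphic to itself, the weight function $\wt(\cdot;q^*)$ being reversal-symmetric — is the mirror image of an outgoing detour; and then I would bridge the two interior-band endpoints through their canonical representatives. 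The delicate points are verifying that the downward detour never dips below weight $w_1$ and that the concatenated walk terminates at \emph{exactly} $\vz'$ rather than at some other node of the right weight; it is the single-coordinate hypothesis that confines the required correction to one localized spot and keeps this bookkeeping tractable. The case $\min(w,w')=w_1-1$ is identical with the roles of light and heavy letters exchanged.
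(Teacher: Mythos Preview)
Your interior-band case ($w,w'\in[w_1,w_2-1]$) is correct, and the factor-weight computation is right. The gap is in the boundary cases. Your detour sketch --- slide into the interior band, then ``bridge the two interior-band endpoints through their canonical representatives $c^{\ell-1-k}h^k$'' --- relies on the subclaim that any two interior-band nodes of the same weight are connected by a walk in $D(S)$; this is itself a nontrivial connectivity statement that you have not established, and it no longer uses the single-coordinate hypothesis at all. You also flag, correctly, that the constructed walk must terminate at \emph{exactly} $\vz'$, and this is not addressed. Concretely, with $q=2$, $q^*=1$, $\ell=3$, $[w_1,w_2]=[1,2]$, $\vz=00$, $\vz'=01$, your direct concatenation $W=0001$ contains the forbidden $3$-gram $000$, so the detour machinery would have to be fully worked out to repair even this small example.

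The paper avoids all case analysis with one idea: instead of $\vz\vz'$, use $\vz\,s\,\vz'$ for a single symbol $s\in{\rm Suff}(\vz)\cap{\rm Pref}(\vz')$. This intersection is nonempty, since it is empty only when one of $\wt(\vz;q^*),\wt(\vz';q^*)$ equals $w_1-1$ and the other equals $w_2$, forcing the two weights to differ by $w_2-w_1+1\ge 2$, impossible under the single-coordinate hypothesis. With $s$ fixed, the same cyclic-shift bookkeeping you already did shows that every $\ell$-gram of $\vz\,s\,\vz'$ has the same multiset of letters as either $\vz s$ or $s\vz'$ (according as the sliding window has or has not yet passed coordinate $j$), hence weight $\wt(\vz s)$ or $\wt(s\vz')$, both in $[w_1,w_2]$ by the choice of $s$. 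The one inserted symbol absorbs exactly the boundary obstruction you were trying to handle with detours.
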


\begin{proof}
Observe the following characterization of ${\rm Pref}(\vz)={\rm Suff}(\vz)$:

\begin{equation*}
{\rm Pref}(\vz)={\rm Suff}(\vz)=
\begin{cases}
[q-q^*,q-1], & \mbox{if $\wt(\vz;q^*)=w_1-1$};\\
\bbracket{q^*}, & \mbox{if $\wt(\vz;q^*)=w_2$};\\
\bbracket{q}, & \mbox{otherwise}.
\end{cases}
\end{equation*}
Then ${\rm Suff}(\vz)\cap{\rm Pref}(\vz')$ is empty only if $\wt(\vz;q^*)=w_1-1$ and $\wt(\vz';q^*)=w_2$.
Therefore, $\vz$ and $\vz'$ differ in at least two coordinates, which contradicts the starting assumption.

Hence, ${\rm Suff}(\vz)\cap{\rm Pref}(\vz')$ is always nonempty.
To complete the proof, let $s \in {\rm Suff}(\vz)\cap{\rm Pref}(\vz')$. Then, the path corresponding to $\vz\,s\,\vz'$ is the desired path.
(Note that each $\ell$-gram appearing in $\vz\, s\, \vz'$ has weight equal to either $\wt(\vz\,s)$ or $\wt(s\,\vz')$; in particular,
each such $\ell$-gram lies in $S$.)
\end{proof}

Therefore, to construct a path between any two given nodes $\vz$ and $\vz'$, 
it suffices to demonstrate a sequence of nodes such that consecutive nodes differ in only one position.

\begin{lem}
  For any $\vz, \vz' \in V$, there is a sequence of nodes
  $\vz=\vz_0,\vz_1,\ldots,\vz_t=\vz'$ such that $\vz_{j}$ and
  $\vz_{j+1}$ differ in exactly one position for $j\in\bbracket{t}$.
\end{lem}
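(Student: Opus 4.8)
The plan is to reduce connectivity of the single-coordinate-change graph on $V$ to reachability of one fixed canonical vertex. I take $\vc$ to be the length-$(\ell-1)$ word whose first $w_1-1$ coordinates equal $q-1$ and whose remaining $\ell-w_1$ coordinates equal $0$. Since $1\le w_1<w_2\le\ell$, we have $0\le w_1-1\le\ell-2$, so $\wt(\vc;q^*)=w_1-1\in[w_1-1,w_2]$ and hence $\vc\in V$; note also that both of the weights $w_1-1$ and $w_1$ are attained by some length-$(\ell-1)$ words. It then suffices to show that every $\vz\in V$ can be joined to $\vc$ by a sequence $\vz=\vz_0,\vz_1,\ldots,\vz_r=\vc$ of nodes of $V$ in which consecutive terms differ in exactly one coordinate: since each such step is reversible and lies in $V$ read either way, the path from $\vz$ to $\vc$ followed by the reverse of the path from $\vz'$ to $\vc$ yields the sequence claimed in the lemma.

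Fix $\vz\in V$; I would transform it into $\vc$ in three phases, checking after every single-coordinate change that the $q^*$-weight stays in $[w_1-1,w_2]$. \emph{Phase 1 (lower the weight):} while $\wt(\vz;q^*)>w_1-1$, overwrite some coordinate carrying a symbol of $[q-q^*,q-1]$ by $0$; the weight drops by one and so remains in the band, and after finitely many steps $\wt(\vz;q^*)=w_1-1$. \emph{Phase 2 (bring the heavy coordinates to the front):} let $A$ be the set of coordinates carrying a symbol of $[q-q^*,q-1]$, so $|A|=w_1-1$; while $A\ne\{1,\ldots,w_1-1\}$, choose $i\le w_1-1$ with $i\notin A$ and $j>w_1-1$ with $j\in A$, overwrite coordinate $i$ by $q-1$ (the weight becomes $w_1$, still in the band because $w_1<w_2$), then overwrite coordinate $j$ by $0$ (the weight returns to $w_1-1$); this strictly increases $|A\cap\{1,\ldots,w_1-1\}|$, so the phase terminates with $A=\{1,\ldots,w_1-1\}$. \emph{Phase 3 (normalize symbols):} replace each of the first $w_1-1$ (heavy) symbols by $q-1$ and each remaining (light) symbol by $0$; every such replacement stays within one weight class, so the weight is unchanged, and now $\vz=\vc$.

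The only place the hypotheses do real work is the intermediate step of Phase 2, which momentarily raises the weight to $w_1$; this value must still lie in $[w_1-1,w_2]$, which is guaranteed exactly by $w_1<w_2$ (together with $w_1-1\ge 0$ from $w_1\ge 1$). Every other step either fixes the weight or decreases it starting from a value already $\le w_2$, so membership in $V$ is automatic there. I therefore expect no genuine obstacle beyond this bookkeeping: the argument is the standard ``connect everything to a canonical vertex'' induction, and the one subtlety is simply to notice that the weight band $[w_1-1,w_2]$ is wide enough --- it spans at least two consecutive integers --- to accommodate the single weight excursion that Phase 2 requires.
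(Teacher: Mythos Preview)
Your argument is correct and takes a genuinely different route from the paper's. The paper transforms $\vz$ into $\vz'$ directly, fixing the coordinates of $\vz'$ one at a time from left to right: at each step it tries to overwrite the next mismatching coordinate with the target symbol, and when that single change would push the weight out of $[w_1-1,w_2]$ it first adjusts some later coordinate $\tau_k$ (increasing or decreasing the weight by one as needed) before making the intended change. Your approach instead routes everything through a single canonical vertex $\vc$ of minimum weight and reaches $\vc$ via three monotone phases.

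The two arguments use the same key fact---that the admissible weight interval $[w_1-1,w_2]$ contains at least two consecutive integers, so a temporary excursion of one weight unit is always permitted---but package it differently. Your version is somewhat cleaner: the three phases avoid the case analysis in the paper (whose write-up has a minor slip, replacing $\tau_k$ by ``$1$'' when a heavy symbol is intended), and the termination of each phase is immediate. The paper's coordinate-by-coordinate scheme, on the other hand, is more direct and yields the slightly stronger observation that one can move from $\vz$ to $\vz'$ while only ever touching positions to the right of the already-matched prefix; it also avoids introducing an auxiliary vertex. Both give paths of length $O(\ell)$. One trivial cosmetic point: in Phase~3 you should skip any coordinate already equal to its target value, so that consecutive $\vz_j$ genuinely differ in exactly one position rather than zero.
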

\begin{proof}
Let $\vz'=\sigma_1\sigma_2\cdots\sigma_{\ell-1}$. We construct the sequence of nodes inductively.
Suppose that for some $j$, $\vz_j=\sigma_1\sigma_2\cdots \sigma_i\tau_{i+1}\cdots\tau_{\ell-1}$, with 
$\tau_{i+1}\ne \sigma_{i+1}$.
Our objective is to construct a sequence of nodes terminating at $\vz_{j'}$, such that 
$\vz_{j'}=\sigma_1\sigma_2\cdots \sigma_i\sigma_{i+1}\tau'_{i+2}\cdots\tau'_{\ell-1}$ for some $\tau'_{i+1},\tau'_{i+2},\ldots,\tau'_{\ell-1}$. Hence, by repeating this procedure, we obtain the desired sequence of nodes that terminates at $\vz'$.

For the inductive step, if $\sigma_1\sigma_2\cdots \sigma_i\sigma_{i+1}\tau_{i+2}\cdots\tau_{\ell-1}\in V$, 
we establish the claim by setting 
$$\vz_{j+1}=\sigma_1\sigma_2\cdots \sigma_i\sigma_{i+1}\tau_{i+2}\cdots\tau_{\ell-1}.$$ 
In the general case, we have to consider the following scenarios:

\begin{enumerate}
\item When $\wt(\vz_j;q^*)=w_1-1$, $\tau_{i+1}\in[q-q^*,q-1]$ and $\sigma_{i+1}\notin[q-q^*,q-1]$, 
there exists some $\tau_k$ in $\vz_j$ that does not belongs to $[q-q^*,q-1]$.
Otherwise, $\wt(\sigma_1\cdots\sigma_i;q^*)=w_1-\ell+i$ and 
so $\wt(\sigma_1\cdots\sigma_{i+1};q^*)=w_1-\ell+i$. 
Then, $\wt(\vz';q^*)\le w_1-2$, contradicting the fact that $\vz'\in V$.
Therefore, we have the sequences: 
$\vz_{j}=\sigma_1\cdots \sigma_i\tau_{i+1}\tau_{i+2}\cdots\tau_k\cdots\tau_{\ell-1}$,
$\vz_{j+1}=\sigma_1\cdots \sigma_i\tau_{i+1}\tau_{i+2}\cdots 1 \cdots\tau_{\ell-1}$, and
$\vz_{j+2}=\sigma_1\cdots \sigma_i\sigma_{i+1}\tau_{i+2}\cdots 1\cdots\tau_{\ell-1}$.

\item When $\wt(\vz_j;q^*)=w_2$, $\tau_{i+1}\notin[q-q^*,q-1]$ and $\sigma_{i+1}\in[q-q^*,q-1]$, 
then there exists some $\tau_k$ in $\vz_j$ that belongs to $[q-q^*,q-1]$.
Otherwise, $\wt(\sigma_1\cdots\sigma_i;q^*)=w_2$ and 
so $\wt(\vz';q^*)\ge \wt(\sigma_1\cdots\sigma_{i+1};q^*)=w_2+1$, 
contradicting the fact that $\vz'\in V$.
Therefore, we have the sequences: 
$\vz_{j}=\sigma_1\cdots \sigma_i\tau_{i+1}\tau_{i+2}\cdots\tau_k\cdots\tau_{\ell-1}$,
$\vz_{j+1}=\sigma_1\cdots \sigma_i\tau_{i+1}\tau_{i+2}\cdots 0 \cdots\tau_{\ell-1}$, and
$\vz_{j+2}=\sigma_1\cdots \sigma_i\sigma_{i+1}\tau_{i+2}\cdots 0\cdots\tau_{\ell-1}$.
\end{enumerate}
\end{proof}
 
Consequently, $D(q,\ell;q^*,[w_1,w_2])$ is strongly connected. Together with Lemma \ref{lem:balanced},
this result establishes that $D(q,\ell;q^*,[w_1,w_2])$ is Eulerian.

\section{Proof of Corollary \ref{cor:aperiodic}}
\label{app:aperiodic}

We provide next a detailed proof of Corollary \ref{cor:aperiodic}. 
Specifically, we demonstrate Proposition \ref{prop:leading} from which the corollary follows directly.
For the case that $S=\bbracket{q}^\ell$, Jacquet \etal{} established a similar result 
by analyzing a sum of multinomial coefficients. This type of analysis appears to be to complex for a 
general choice of $S$.
 
\begin{prop}\label{prop:leading}
Suppose that $D(S)$ is strongly connected and that it contains loops.
Let $t=n-\ell+1$, $D=|S|-|V(S)|$ and let the lattice point enumerator of $\PP(S)$ 
be $L_{\PP(S)}(t)=c_{D}(t)t^{D}+O(t^{D-1})$. Then, $c_D(t)$ is constant.
\end{prop}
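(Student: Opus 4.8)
The plan is to derive the statement from a monotonicity property that the presence of a loop in $D(S)$ makes available, and then invoke the elementary fact that a non-decreasing quasipolynomial has constant leading coefficient. Recall from Lemma \ref{int:ps} and the proof of Theorem \ref{thm:closed} that $L_{\PP(S)}(t)=|\F(n;S)|$ and $L_{\Pint(S)}(t)=|\E(n;S)|$ for $t=n-\ell+1$, where $\F(n;S)$ (resp.\ $\E(n;S)$) is the set of $\vu\in\ZZ^{|S|}$ with $\vA(S)\vu=t\vb$ and $\vu\ge\vzero$ (resp.\ $\vu>\vzero$). I would let $\va_0$ be an arc of $D(S)$ that is a loop and let $\ve_{\va_0}$ be the corresponding coordinate vector. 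Since the column of $\vB(D(S))$ indexed by a loop is $\vzero$, we have $\vA(S)\ve_{\va_0}=\vb$, so $\vu\mapsto\vu+\ve_{\va_0}$ embeds $\F(n;S)$ into $\F(n+1;S)$ and embeds $\E(n;S)$ into $\E(n+1;S)$. Hence $L_{\PP(S)}$ and $L_{\Pint(S)}$ are non-decreasing for $t\ge1$; in particular this re-establishes the monotonicity of $|\E(n;S)|$ promised in Section \ref{sec:enumerate}.

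The second step is the lemma that a non-decreasing quasipolynomial has constant leading coefficient. Writing $L_{\PP(S)}(t)=\sum_{i=0}^{D}c_i(t)t^i$ with each $c_i$ periodic of period dividing $\lambda_S$, I would fix two residues $r,r'$ modulo $\lambda_S$ and, for each large $t\equiv r\pmod{\lambda_S}$, pick the unique $t'$ with $t\le t'<t+\lambda_S$ and $t'\equiv r'\pmod{\lambda_S}$. Monotonicity gives $L_{\PP(S)}(t)\le L_{\PP(S)}(t')$, and dividing by $t^D$ and letting $t\to\infty$ (so $t'/t\to1$ and the lower-order terms vanish) yields $c_D(r)\le c_D(r')$; exchanging $r$ and $r'$ gives equality, so $c_D$ does not depend on $t$, which is exactly Proposition \ref{prop:leading}. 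Running the same argument on $L_{\Pint(S)}$ shows its leading coefficient is constant as well; alternatively one recovers the latter from Proposition \ref{prop:leading} through Ehrhart--Macdonald reciprocity $L_{\Pint(S)}(t)=(-1)^D L_{\PP(S)}(-t)$, which preserves the leading coefficient. Feeding both facts into the sandwich $\E(n;S)\subseteq\pQb(n;S)\subseteq\F(n;S)$ of \eqref{inequalities} then yields Corollary \ref{cor:aperiodic}.

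The step that requires genuine care is the passage from monotonicity to constancy of the leading coefficient: one must arrange $t'/t\to1$ so that the bounded gap $t'-t<\lambda_S$ and the contribution of the coefficients $c_0,\dots,c_{D-1}$ are both absorbed in the limit; everything else is bookkeeping. It is also worth noting in the proof why the loop hypothesis cannot be dropped: without a loop, the only analogous operation is $\vu\mapsto\vu+\vchi(C)$ for a cycle $C$, which gives merely $L_{\PP(S)}(t)\le L_{\PP(S)}(t+|C|)$, i.e.\ monotonicity along arithmetic progressions whose common step is the greatest common divisor of the cycle lengths; when that divisor exceeds $1$ the leading coefficient genuinely depends on the residue of $t$ modulo it (it vanishes on residues for which $\{\vu:\vA(S)\vu=t\vb\}$ contains no integer point).
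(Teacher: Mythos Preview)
Your proposal is correct and follows essentially the same approach as the paper: the paper first proves the same monotonicity lemma via the injection $\vu\mapsto\vu+\vchi(\vz)$ for a loop $\vz$, and then derives constancy of $c_D$ from monotonicity. The only cosmetic difference is that the paper phrases the second step as a contradiction (comparing two fixed polynomials along residue classes) rather than as your limit argument, but the content is the same.
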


To prove this proposition, we use the following straightforward lemma. 

\begin{lem}\label{lem:monotone}
Suppose that $D(S)$ is strongly connected and that it contains loops.
For all $t$, we have $L_{\PP(S)}(t+1)\ge L_{\PP(S)}(t)$.
\end{lem}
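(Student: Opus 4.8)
The plan is to exhibit an explicit injection from the set of integer points of $t\PP(S)$ into the set of integer points of $(t+1)\PP(S)$, using the loop guaranteed by the hypothesis. Recall that, writing $t = n-\ell+1$, we have $L_{\PP(S)}(t) = |\ZZ^{|S|}\cap t\PP(S)| = |\F(n;S)|$; that is, $L_{\PP(S)}(t)$ counts the nonnegative integer vectors $\vu$ with $\vA(S)\vu = t\vb$.

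First I would isolate the loop. By assumption $D(S)$ contains a loop, say the arc $\va_0\in S$. Since a loop contributes a $\vzero$-column to the incidence matrix $\vB(D(S))$, the column of $\vA(S)$ indexed by $\va_0$ is exactly $(1,0,\ldots,0)^T=\vb$, where the single $1$ comes from the appended all-ones row that records the total arc count. Hence, for the standard unit vector $\ve_{\va_0}\in\ZZ^{|S|}$ supported on the coordinate $\va_0$, we have $\vA(S)\ve_{\va_0}=\vb$.

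Next I would define the translation $\psi(\vu)=\vu+\ve_{\va_0}$ and check it does the job. If $\vu\in\ZZ^{|S|}_{\ge\vzero}$ satisfies $\vA(S)\vu=t\vb$, then $\psi(\vu)\in\ZZ^{|S|}_{\ge\vzero}$ and $\vA(S)\psi(\vu)=t\vb+\vb=(t+1)\vb$, so $\psi(\vu)\in\ZZ^{|S|}\cap(t+1)\PP(S)$. As $\psi$ is a translation it is injective, so $L_{\PP(S)}(t)=|\ZZ^{|S|}\cap t\PP(S)|\le|\ZZ^{|S|}\cap(t+1)\PP(S)|=L_{\PP(S)}(t+1)$, which is the claim.

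There is no substantial obstacle here; the argument is a one-line injection once the loop is identified. The only point worth flagging is that strong connectivity is not actually used in this particular lemma — the sole ingredient is the existence of a loop, which allows one coordinate to be incremented without disturbing the flow-conservation constraints. Strong connectivity is retained in the statement only because Proposition \ref{prop:leading}, in whose proof this lemma is invoked, assumes it.
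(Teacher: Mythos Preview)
Your proof is correct and essentially identical to the paper's: both fix a loop arc and use translation by its indicator vector as an injection from $\F(n;S)$ to $\F(n+1;S)$. Your added remark that strong connectivity is not actually needed for this lemma is accurate.
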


\begin{proof}
  It suffices to show that there is an injection from $\F(n;S)$ to
  $\F(n+1;S)$.  Suppose that $\vu\in\F(n;S)$, so that
  $\vA(S)\vu=t\vb$.  Fix a loop in $D(S)$ and consider the vector
  $\vchi(\vz)$, where $\vz$ is the arc corresponding to the loop.
  Then, $\vA(S)\vchi(\vz)=\vb$ and $\vA(S)(\vu+\vchi(\vz))=(t+1)\vb$.
  So, the map $\vu\mapsto \vu+\vchi(\vz)$ is an injection from
  $\F(n;S)$ to $\F(n+1;S)$.
\end{proof}

\begin{proof}[Proof of Proposition \ref{prop:leading}]
Lemma \ref{lem:monotone} demonstrates that $L_{\PP(S)}$ is a monotonically increasing function.
Intuitively, this implies that the coefficient of its dominating term $c_{D}(t)$ cannot be periodic with period greater than $1$. We prove this claim formally in what follows.

Suppose that $c_D$ is not constant and that it has period $\tau$.
Hence, there exists $t_a\not\equiv t_b \bmod \tau$ such that $c_D(t_a)=a_D$, $c_D(t_b)=b_D$  
and $a_D<b_D$.
Furthermore, define $a_i=c_i(t_a)$ and $b_i=c_i(t_b)$ for $0\le i\le D-1$, and consider the polynomial $\sum_{i=0}^D b_it^i-a_i(t+\tau)^i$.
By construction, this polynomial has degree $D$ and a positive leading coefficient.
Hence, we can choose $t_1\equiv t_a\bmod \tau$ and $t_2\equiv t_b\bmod \tau$ so that
$t_1\le t_2\le t_1+\tau$ and $\sum_{i=0}^D b_it_2^i-a_i(t_1+\tau)^i> 0$.
Consequently,
\[L_{\PP(S)}(t_1+\tau)=\sum_{i=0}^D c_i(t_1+\tau)(t_1+\tau)^i=\sum_{i=0}^D a_i(t_1+\tau)^i
<\sum_{i=0}^D b_it_2^i=L_{\PP(S)}(t_2),\]
\noindent contradicting the monotonicity of $L_{\PP(S)}$.
\end{proof}

\section{Properties of the Polytope $\Pgrc(\vH,S)$}
\label{app:pgrc}

We derive properties of the polytope $\Pgrc(\vH,S)$ described in Section \ref{sec:intersect}. 
In particular, under the assumption that $D(S)$ is strongly connected and $\C(\vH,\vzero)\cap \nullplus\vB(D(S))$ is nonempty, we demonstrate the following:
\begin{enumerate}[(C1)]
\item The dimension of the polytope $\Pgrc(\vH,S)$ is $|S|-|V(S)|$;
\item The interior of the polytope is given by $\{\vu\in\RR^{|S|+d}: \vA(\vH,S)\vu=\vb, \vu>\vzero\}$;
\item The vertex set of the polytope is given by
\[\left\{\left(\frac{\vchi(C)}{|C|},\frac{\vH\vchi(C)}{p|C|}\right): C \mbox{ is a cycle in }D(S)\right\}.\]
\end{enumerate}

 Since $\C(\vH,\vzero)\cap \nullplus\vB(D(S))$ is nonempty, let $\vu_0$ belong to this intersection.
 Then $\vH\vu_0\equiv \vzero \bmod p$, that is, $\vH\vu_0=p\vbeta$ for some $\vbeta>0$.
 Let $\mu = \vone\vu_0$. 
 If we set $\vu=\frac{1}{\mu}(\vu_0,\vbeta)$, then $\vA(\vH,S)\vu=\vb$, with $\vu>0$.
 
Observe that the block structure of $\vA(\vH,S)$ implies that it has rank $|V(S)|+d$.
Hence, the nullity of $\vA(\vH,S)$ is $|S|-|V(S)|$. 
As before, let $\vu_1,\vu_2,\ldots$, $\vu_{|S|-|V(S)|}$ be linearly independent vectors
that span the null space of $\vA(\vH,S)$.
Since $\vu$ has strictly positive entries, we can find $\epsilon$ small enough so that 
$\vu+\epsilon\vu_i$ belongs to $\Pgrc(\vH,S)$ for all $ i\in [|S|-|V(S)|]$. 
Therefore, $\{\vu,\vu+\epsilon\vu_1,\vu+\epsilon\vu_2,\ldots, \vu+\epsilon\vu_{|S|-|V(S)|}\}$ 
is a set of $|S|-|V(S)|+1$ affinely independent points in $\Pgrc(\vH,S)$. This proves claim (C1).

For the interior of  $\Pgrc(\vH,S)$, first consider $\vu'>\vzero$ such that $\vA(\vH,S)\vu'=\vb$.
For any $\vu''\in\Pgrc(\vH,S)$, we have $\vA(\vH,S)\vu''=\vb$ 
and hence, $\vA(\vH,S)(\vu'-\vu'')=\vzero$.
Since $\vu'$ has strictly positive entries, we choose $\epsilon$ small enough so that 
$\vu'+\epsilon(\vu'-\vu'')\ge\vzero$.
Therefore, $\vu'+\epsilon(\vu'-\vu'')$ belongs to $\Pgrc(\vH,S)$ and 
$\vu'$ belongs to the interior of $\Pgrc(\vH,S)$. 

Conversely, let $\vu'\in\Pgrc(\vH,S)$ with $u'_j=0$ for some coordinate $j$.
Let $\vu$ be as defined earlier, where $\vu\in\Pgrc(\vH,S)$ with $\vu>\vzero$. 
Hence, for all $\epsilon>0$, the $j$th coordinate of $\vu'+\epsilon(\vu'-\vu)$ 
is given by $-\epsilon u_j$, which is always negative. 
In other words, $\vu'$ does not belong to interior of $\Pgrc(\vH,S)$.
This characterizes the interior as described in claim (C2).

For the vertex set, observe that $\left\{\left(\frac{\vchi(C)}{|C|},\frac{\vH\vchi(C)}{p|C|}\right): C \mbox{ is a cycle in }D(S)\right\}\subseteq \Pgrc(\vH,S)$. 

Let $\vv\in\Pgrc(\vH,S)$ and suppose that $\vv=(\vv_1,\vv_2)$ is a vertex.
Since $\vv\in\Pgrc(\vH,S)$, we have $\vv_2=\frac1p \vH\vv_1$
and $\vB(D(S))\vv_1=\vzero$.
Proceeding as in the proof of Lemma \ref{lem:vertex-ps}, 
we conclude that $\vv_1=\vchi(C)/|C|$, for some cycle in $D(S)$ and 
hence, $\vv=\left(\frac{\vchi(C)}{|C|},\frac{\vH\vchi(C)}{p|C|}\right)$.

Conversely, we show that  for any cycle $C$ in $D(S)$,
$\left(\frac{\vchi(C)}{|C|},\frac{\vH\vchi(C)}{p|C|}\right)$
cannot be expressed as a convex combination of other points in $\Pgrc(\vH,S)$.
 Suppose otherwise. Then we consider the first $|S|$ coordinates and 
 we proceed as in the proof of Lemma \ref{lem:vertex-ps} to yield a contradiction.
This completes the proof of claim (C3).

%

\bibliographystyle{IEEEtran}
\bibliography{mybibliography.bib}
\end{document}